\numberwithin{equation}{section}
\theoremstyle{plain}
\newtheorem{theorem}{Theorem}[section]
\newtheorem{lemma}[theorem]{Lemma}
\newtheorem{corollary}[theorem]{Corollary}
\newtheorem{proposition}[theorem]{Proposition}
\theoremstyle{definition}
\newtheorem{definition}[theorem]{Definition}
\theoremstyle{remark}
\newtheorem{remark}[theorem]{Remark}
\newcommand{\Napier}{\mathord{\mathrm{e}}}
\newcommand{\imag}{\mathord{\mathrm{i}}}
\newcommand{\conj}[1]{{#1}^*}
\newcommand{\defeq}{\coloneqq}
\newcommand{\eqdef}{\eqqcolon}
\DeclarePairedDelimiter{\card}{\lvert}{\rvert}
\providecommand{\KroneckerDelta}[2]{\delta_{#1,#2}}
\providecommand{\ind}[1]{\mathbbm{1}_{\set{#1}}}
\DeclarePairedDelimiter{\floor}{\lfloor}{\rfloor}
\DeclareMathOperator{\sgn}{sgn}
\newcommand{\conv}{\mathbin{\ast}}        
\newcommand{\Conv}{\mathbin{\star}}       
\newcommand{\ft}[1]{\hat{#1}}             
\newcommand{\flt}[1]{\hat{#1}}
\newcommand{\FLT}[1]{\widehat{#1}}
\newcommand{\DLaplacian}[1]{\Delta_{#1}}  
\newcommand{\nnInt}{\mathbb{Z}_+}         
\NewDocumentCommand{\opSpace}{O{d}}{%
  \ifthenelse{\isempty{#1}}{%
    \mathbb{L}%
  }{%
    \mathbb{L}^{#1}%
  }%
}
\NewDocumentCommand{\opTime}{}{\nnInt}
\NewDocumentCommand{\opSpaceTime}{O{d}}{\opSpace[#1]\times\opTime}
\DeclareMathOperator{\TimeOf}{t}
\title{Mean-field behavior of nearest-neighbor oriented percolation on the BCC lattice above $8+1$ dimensions}
\author{%
  Lung-Chi Chen\thanks{Department of Mathematical Sciences, National Chengchi University, Taipei, Taiwan} \thanks{Mathematics Division, National Center for Theoretical Sciences, Taipei, Taiwan}%
  \and Satoshi Handa\thanks{atama plus Inc.}%
  \and Yoshinori Kamijima\thanks{Mathematics Division, National Center for Theoretical Sciences, Taipei, Taiwan}%
}
\date{\today}
\begin{document}

\maketitle

\begin{abstract}
  \noindent
  In this paper, we consider nearest-neighbor oriented percolation with independent Bernoulli bond-occupation probability on the $d$-dimensional body-centered cubic (BCC) lattice $\opSpace$ and the set of non-negative integers $\opTime$.
  Thanks to the orderly structure of the BCC lattice, we prove that the infrared bound holds on $\opSpaceTime$ in all dimensions $d\geq 9$.
  As opposed to ordinary percolation, we have to deal with complex numbers due to asymmetry induced by time-orientation, which makes it hard to bound the bootstrap functions in the lace-expansion analysis.
  By investigating the Fourier-Laplace transform of the random-walk Green function and the two-point function, we derive the key properties to obtain the upper bounds and resolve a problematic issue in Nguyen and Yang's bound.
  The issue is caused by the fact that the Fourier transform of the random-walk transition probability can take the value $-1$.
\end{abstract}

\section{Introduction}

\subsection{Motivation}\label{sec:motivation}

In 1957, Broadbent and Hammersley~\cite{bh56} introduced oriented percolation (directed percolation)
in the context of physical phenomena, such as the wetting of porous medium.
It is well-known that oriented percolation exhibits critical phenomena in the vicinity of critical points.
It was first predicted by Obukhov~\cite{o80} that the upper critical dimension $d_\mathrm{c}+1$ for oriented percolation
equals $4+1$ (= spatial + temporal dimension), above which some quantities display the power law, and its exponents take mean-field values.
For example,
there exists the critical point $p_\mathrm{c}$ such that the susceptibility (the expected cluster size)
$\chi_p$ is finite if $p<p_\mathrm{c}$ (as $p\uparrow p_\mathrm{c}$, $\chi_p$ diverges at least as fast as
$(p_\mathrm{c} - p)^{-1}$ by the second inequality in \eqref{eq:differentialInequality-susceptibility} below).
It is believed that $\chi_p$ shows power-law behavior as $(p_\mathrm{c} - p)^{-\gamma}$
with the critical exponent $\gamma$.
Nguyen and Yang~\cite{ny93} proved via the infrared bound and the lace expansion that
spread-out oriented percolation with independent Bernoulli bond-occupation probability
on $\mathbb{Z}^d\times\opTime$ exhibits mean-field behavior in dimensions $d + 1 > 4 + 1$,
and there exists a sufficiently high dimension $d_0\gg 4$
such that nearest-neighbor oriented percolation on $\mathbb{Z}^d\times\opTime$ also exhibits it in $d + 1 \geq d_0 + 1$,
where $\opTime\defeq\mathbb{N}\sqcup\{0\}$,
and $\sqcup$ is a disjoint union.
The former supports the prediction of the upper critical dimension $d_\mathrm{c}=4$.

As a side note, we mention the results of the critical exponents in $d\leq 4$.
It is predicted that their values in $d + 1 = 4 + 1$ equal mean-field critical exponents with logarithmic corrections.
So far, the mathematically nonrigorous renormalization group method~\cite{js04}
and the computer-oriented approach~\cite{g09} suggest that the prediction is true.
There are almost only numerical results regarding the critical exponents for oriented percolation in $d + 1 = 1 + 1$, $2+1$ or $3+1$.
One can see the conjectural values in, e.g., \cite{o04}.

Similar results for ordinary percolation on $\mathbb{Z}^d$ are known.
It was first predicted in \cite{t74} that its upper critical dimension is $6$.
In the nearest-neighbor model, Hara and Slade~\cite{hs94} proved mean-field behavior in $d\geq 19$
via the lace expansion which was first derived in the seminal paper~\cite{hs90p}.
The best result $d\geq 11$ on $\mathbb{Z}^d$ was shown by Fitzner and van der Hofstad~\cite{fh17p}
via the non-backtracking lace expansion~\cite{fh17nobl}.
If we do not insist on the simple cubic lattice $\mathbb{Z}^d$,
a closer value $d\geq 9$ to the critical dimension on the body-centered cubic (BCC) lattice $\mathbb{L}^d$
is obtained by the second author, the third author and Sakai~\cite{Handa-Kamijima-Sakai}.

In this paper, we consider nearest-neighbor oriented percolation with independent Bernoulli bond-occupation probability
on $\opSpaceTime$ in high dimensions.
It is known what dimension the lace expansion works for nearest-neighbor ordinary percolation,
whereas Nguyen and Yang did not take care of the exact value of the dimension $d_0$ in which their analysis works for nearest-neighbor oriented percolation.
Our purpose is to specify $d_0$ and to prove the infrared bound in $d_0$.
As a result, we prove that the infrared bound holds on $\opSpaceTime[d\geq 9]$.

Section~\ref{sec:improvedBounds} and Section~\ref{sec:bounds-basicDiagrams} contain two novel tasks (see Remark~\ref{rem:role-restricting-g2} and Remark~\ref{rem:finiteness-weightedBubble} for the details, respectively).
Nguyen and Yang~\cite{ny93} used the inequality $\ft D_L(k) + 1 > 0$ for all $k$,
where $D_L$ is the random-walk transition probability for the spread-out model,
to bound $g_2$ in our notation below.
They used a similar inequality despite the nearest-neighbor model, but the inequality does not hold
because the Fourier transform $\ft D(k)$ of the random-walk transition probability can take $-1$ for some $k\in\interval{-\pi}{\pi}^d$.
The structure of the BCC lattice helps us to solve this problem (see also Proposition~\ref{prp:restricting-g2}).
Moreover, Nguyen and Yang showed the finiteness of the weighted open triangle diagram (the derivative of the open triangle diagram),
which corresponds to $\flt V_{p, 1}^{(0, 1)}(k)$ in this paper, only for $d>8$.
We observe that the period of $\ft D(k)^2$ equals that of $\ft D(2k)$.
This equality yields the finiteness of $\flt V_{p, m}^{(\lambda, \rho)}(k)$ in Lemma~\ref{lem:basic-diagrams} for $d>4$.

\subsection{Model}

We describe the model that we deal with in this paper.
First, we define the BCC lattice.
The $d$-dimensional BCC lattice $\opSpace$ is a graph that contains the origin $o=(0, \dots, 0)$
and is generated by the set of neighbors $\mathscr{N}^d = \{x = (x_1, \dots, x_d)\in\mathbb{Z}^d \mid \prod_{i=1}^{d}\abs{x_i} = 1\}$.
Although we should generally distinguish a graph from its vertex set,
for convenience, we denote by ``$x\in\opSpace$'' that $x$ belongs to the vertex set of the BCC lattice.
It was first used in \cite{Handa-Kamijima-Sakai} in the context of lace expansions and has some good properties as we see below.
On $\opSpace$, note that the cardinality, denoted by $\card{\bullet}$, of $\mathscr{N}^d$ equals $2^d$ (while it equals $2d$ on $\mathbb{Z}^d$).
The $d$-dimensional random-walk transition probability (1-step distribution) $D(x)$ is expressed as the product of $1$-dimensional random-walk transition probabilities:
\begin{equation}
  \label{eq:rwTransitonProbability}
  D(x)
  \defeq \frac{1}{\card{\mathscr{N}^d}}\ind{x\in\mathscr{N}^d}
  = \prod_{j=1}^{d}\frac{1}{2}\KroneckerDelta{\abs{x_j}}{1},
\end{equation}
where $\ind{\bullet}$ is the indicator function, and $\KroneckerDelta{\bullet}{\bullet}$ is the Kronecker delta.
This property is useful to compute the numerical values of the random-walk quantities defined as
\begin{equation}
  \label{eq:def-rwQuantities}
  \varepsilon_i^{(\nu)} = \sum_{n=\nu}^{\infty} D^{\conv 2n}(o) \times
  \begin{cases}
    1 & [i = 1],\\
    \left(n - \nu + 1\right) & [i = 2]
  \end{cases}
\end{equation}
for $\nu\in\nnInt$,
where $D^{\conv n}(x) = (D^{\conv (n-1)} \conv D)(x) \defeq \sum_{y\in\opSpace}D^{\conv (n-1)}(y) D(x - y)$ denotes the convolution on $\opSpace$.
See Appendix~\ref{sec:numericalComputations-rwQuantities} for the details of the computation.
Let $\ft D(k)$ be the Fourier transform of $D\colon \opSpace\to\mathbb{R}$ defined as
\begin{equation}
  \label{eq:Fourier-rwTransition}
  \ft D(k)
  \defeq \sum_{x\in\opSpace} D(x) \Napier^{\imag k\cdot x}
  = \prod_{j=1}^{d}\cos k_j
\end{equation}
for $k=(k_1, \dots, k_d)\in\mathbb{T}^d$, where $k\cdot x$ is the Euclidian inner product,
and $\mathbb{T}^d\defeq\interval{-\pi}{\pi}^d$ is the $d$-dimensional torus of side length $2\pi$ in the Fourier space.
In the equality in the above, we have used \eqref{eq:rwTransitonProbability}.

Second, we define nearest-neighbor oriented percolation.
We call $\opSpaceTime$ a space-time in which we write a vertex in the bold font,
i.e., $\vb*{x}=(x, t)\in\opSpaceTime$, while we write a vertex in space $\opSpace$
in the normal font, i.e., $x\in\opSpace$.
For convenience, $x$ and $\TimeOf(\vb*{x})$ denote the spatial and temporal components of $\vb*{x}$, respectively.
A bond is an ordered pair $((x, t), (y, t+1))$ of two vertices in $\opSpaceTime$.
Each bond $(\vb*{x}, \vb*{y})$ is
occupied with the probability
\[
  q_p(\vb*{y} - \vb*{x})\defeq pD(y - x)\KroneckerDelta{\TimeOf(\vb*{y}) - \TimeOf(\vb*{x})}{1}
\]
and vacant with the probability $1 - q_p(\vb*{y} - \vb*{x})$ independent of the others,
where $p\in\interval{0}{\norm{D}_\infty^{-1}}$ is the percolation parameter,
and $\norm{D}_\infty=\sup_{x\in\opSpace}\abs{D(x)}$.
Unlike much literature, note that $p$ alone does not mean probability.
Let $\mathbb{P}_p$ be the associated probability measure with such bond variables, and denote its expectation by $\mathbb{E}_p$.

For $n\in\mathbb{N}$, a (time-oriented) path of length $t$ is defined to be a sequence
$(\vb*{\omega}_0, \vb*{\omega}_1, \dots, \vb*{\omega}_t)$ of vertices in $\opSpaceTime$
such that $\TimeOf(\vb*{\omega}_s) - \TimeOf(\vb*{\omega}_{s-1})=1$ for $s=1, \dots, t$.
Let $\mathscr{W}(\vb*{x}, \vb*{y})$ be the set of all paths of length $t=\TimeOf(\vb*{y})-\TimeOf(\vb*{x})$
from $\vb*{x}=\vb*{\omega}_0$ to $\vb*{y}=\vb*{\omega}_t$.
By convention, $\mathscr{W}(\vb*{x}, \vb*{x})\equiv\set{(\vb*{x})}$.
We say that a path $\va*{\omega}=(\vb*{\omega}_0, \dots, \vb*{\omega}_t)\in\mathscr{W}(\vb*{x}, \vb*{y})$
of length $t=\TimeOf(\vb*{y})-\TimeOf(\vb*{x})$ is occupied if either $\vb*{x}=\vb*{y}$ or every bond $(\vb*{\omega}_{s-1}, \vb*{\omega}_s)$
for $s=1, \dots, t$ is occupied.
We say that $\vb*{x}$ is connected to $\vb*{y}$, denoted by $\vb*{x}\rightarrow\vb*{y}$,
if there is an occupied path $\va*{\omega}\in\mathscr{W}(\vb*{x}, \vb*{y})$.
Then, we define the two-point function as
\begin{equation*}
  \varphi_p(\vb*{x})
  = \mathbb{P}_p(\vb*{o}\rightarrow\vb*{x})
  = \mathbb{P}_p\qty(\bigcup_{\va*{\omega}\in\mathscr{W}(\vb*{o}, \vb*{x})} \set{\va*{\omega}\text{ is occupied}})
\end{equation*}
for $\vb*{x}=(x, t)\in\opSpaceTime$, where $\vb*{o}\defeq (o, 0)$.
The percolation probability and the susceptibility are defined as
\begin{equation*}
  \Theta_p = \mathbb{P}_p(\card{\mathcal{C}(\vb*{o})} = \infty)
  \qand
  \chi_p = \mathbb{E}_p\bigl[\card*{\mathcal{C}(\vb*{o})}\bigr] = \sum_{\vb*{x}\in\opSpaceTime} \varphi_p(\vb*{x})
\end{equation*}
respectively, where $\mathcal{C}(\vb*{o}) = \set{\vb*{x}\in\opSpaceTime | \vb*{o}\rightarrow\vb*{x}}$.
Also, the magnetization is defined as
\[
  M_{p, h} = \mathbb{E}_p\qty[1 - \Napier^{- \card{\mathcal{C}(\vb*{o})} h}],
\]
which is a resemblance to the spontaneous magnetization in ferromagnetic models.

Third, we describe critical phenomena.
The critical point is defined as
\[
  p_\mathrm{c}
  = \inf\Set{p\in\interval{0}{\norm{D}_\infty^{-1}} | \Theta_p > 0}
  = \sup\Set{p\in\interval{0}{\norm{D}_\infty^{-1}} | \chi_p < \infty}.
\]
Note that the second equality in the definition of $p_\mathrm{c}$ is quite nontrivial.
In this sense, the critical point for $\Theta_p$ and $\chi_p$ are often written as $p_\mathrm{H}$ and $p_\mathrm{T}$, respectively.
However, Menshikov \cite{m86} and Aizenman and Barsky \cite{ab87} independently proved that $p_\mathrm{H}=p_\mathrm{T}$
for any translation invariant percolation models.
Recently, Duminil-Copin and Tassion \cite{dct16} found a particularly
simple proof of the uniqueness on arbitrary locally finite vertex-transitive infinite graphs.
This uniqueness also holds on $\opSpace$, hence we do not distinguish between these critical values.

The power laws against the critical exponents $\beta$, $\gamma$ and $\delta$ are defined as
\begin{gather*}
  \Theta_p  \underset{p \downarrow p_\mathrm{c}}{\asymp} \qty(p - p_\mathrm{c})^\beta,\\
  \chi_p \underset{p \uparrow p_\mathrm{c}}{\asymp} \qty(p_\mathrm{c} - p)^{-\gamma},\\
  M_{p_\mathrm{c}, h} \underset{h \uparrow \infty}{\asymp} h^{-1 / \delta}.
\end{gather*}
Here,
\begin{itemize}
  \item $f(x) \asymp g(x)$ as $ x\to a$ denotes that there exist constants $\delta_0$, $C_1$ and $C_2$ such that, for any $x$ satisfying $\abs{x - a} < \delta_0$, $C_1 g(x) \leq f(x) \leq C_2 g(x)$,
  \item and $f(n) \asymp g(n)$ as $ n\to \infty$ denotes that there exist constants $n_0$, $C_1$ and $C_2$ such that, for any $n$ satisfying $n \geq n_0$, $C_1 g(n) \leq f(n) \leq C_2 g(n)$.
\end{itemize}

Finally, we mention the Fourier-Laplace transform of $\varphi_p(\vb*{x})$ as a preliminary for the main result.
Let
\[
  \Phi_p(k; t) = \sum_{x\in\opSpace}\varphi_p(x, t)\Napier^{\imag k\cdot x}
\]
be the Fourier transform of $\varphi_p(x, t)$ with respect to $x$.
The Markov property of oriented percolation implies that $\{\log\Phi_p(k; t) / t\}_{t=1}^{\infty}$ is a subadditive sequence,
hence there exists a constant $m_p^{-1}=\lim_{t\uparrow\infty}\Phi_p(k; t)^{1/t}$ depending on $p$ by \cite[Appendix.~II]{g99}
with $m_{p_\mathrm{c}}=1$ (See \cite[Section~1.2]{cs08} or \cite[Section~1]{ny95}).
For every $p<p_\mathrm{c}$, the inequality $m_p>1$ holds.
$m_p$ is the radius of convergence of the Laplace transform (the generating function) of $\Phi_p(0; t)$,
defined by
\begin{equation*}
  \flt\varphi_p(k, z)
  \defeq \sum_{t\in\opTime}\Phi_p(k; t) z^t
  = \sum_{(x, t)\in\opSpaceTime} \varphi_p(x, t) \Napier^{\imag k\cdot x} z^t
\end{equation*}
for $k\in\mathbb{T}^d$ and $z\in\mathbb{C}$ satisfying $\abs{z}\in\interval[open right]{0}{m_p}$.
Also, let
\[
  Q_p(x, t) = p^t D^{\conv t}(x) \ind{t\in\opTime}
\]
be the random-walk two-point function.
By convention, the $0$-fold convolution denotes the Kronecker delta: $D^{\conv 0}(x) = \KroneckerDelta{o}{x}$.
The Laplace transform of $Q_1(x, t)$ gives the well-known random-walk Green function as
\begin{equation*}
  S_p(x) \defeq \sum_{t\in\opTime} Q_1(x, t) p^t = \sum_{t\in\opTime} p^t D^{\conv t}(x)
\end{equation*}
for $x\in\opSpace$ and $p\in\interval{0}{1}$.
Notice that the radius of convergence of $S_p(x)$ is $1$, and in particular $S_1(x)$ is well-defined in a proper limit when $d>2$.
By Bool's inequality, one can easily see that $\varphi_p(x, t) \leq Q_p(x, t)$ for every $0\leq p<1$.
Taking the sum of both sides leads to $\chi_p \leq (1 - p)^{-1}$, which implies $p_\mathrm{c}\geq 1$.

\subsection{Main result}

Aizenman and Newman~\cite{an84} showed that the triangle condition is a sufficient condition
for percolation models to exhibit mean-field behavior (or $\gamma=1$).
Barsky and Aizenman~\cite{ba91} extended/reformulated the Aizenman-Newman triangle condition
in order to deal with oriented percolation in a unified manner.
They also showed that other critical exponents $\delta$ and $\beta$ take
the mean-field values $2$ and $1$, respectively, if the triangle condition is satisfied.

The triangle condition expresses that
\begin{equation}
  \label{eq:triangleCondition}
  \lim_{R\to\infty} \bigtriangleup_{p_\mathrm{c}}(R) = 0,
\end{equation}
where
\[
  \bigtriangleup_p(R) = \sup\Set{
    \sum_{\vb*{y}\in\opSpaceTime} \varphi_p^{\Conv 2}(\vb*{y}) \varphi_p(\vb*{y} - \vb*{x})
    | \norm{\vb*{x}}_2 \geq R
  },
\]
the norm of $\vb*{x}=(x, t)$ denotes $\norm{(x, t)}_2 = (\sum_{i=1}^{d} \abs{x_i}^2 + t^2)^{1 / 2}$,
and $\varphi_p^{\Conv n}(x, t) = (\varphi_p^{\Conv (n-1)} \Conv \varphi_p)(x, t) \defeq \sum_{(y, s)\in\opSpaceTime} \allowbreak \varphi_p^{\Conv (n-1)}(y, s) \varphi_p(x - y, t - s)$ denotes the $n$-fold convolution on $\opSpaceTime$.
If \eqref{eq:triangleCondition} holds, then solving the differential inequalities~\cite{ab87,an84,ba91,cc86}
\begin{gather}
  \label{eq:differentialInequality-susceptibility}
  \epsilon_p(R) \qty\big(
    1 - \bigtriangleup_{p_\mathrm{c}}(R)
  ) \chi_p^2
    \leq \dv{\chi_p}{p}
    \leq \chi_p^2,\\
  \label{eq:differentialInequality-percolation}
  \epsilon_p'(R) \qty\big(
    1 - f_p \bigtriangleup_{p_\mathrm{c}}(R)
  ) M_{p, h} \pdv{M_{p, h}}{h}
    \leq M_{p, h} - h \pdv{M_{p, h}}{h}
    \leq \frac{p}{1 - p \norm{D}_\infty} M_{p, h}^2 \pdv{M_{p, h}}{h} + M_{p, h}^2
\end{gather}
imply $\beta=\gamma=1$ and $\delta=2$,
where $\epsilon_p(R)$, $\epsilon_p'(R)$ and $f_p$ are model-dependent functions with $1 / \epsilon_p(R)$, $1 / \epsilon_p'(R)$ and $f_p$ uniformly bounded in a neighborhood of $p_\mathrm{c}$.
Note that $\Theta_p = \lim_{h\downarrow 0} M_{p, h}$.

To verify \eqref{eq:triangleCondition}, we use the infrared bound which is our main theorem
(cf., \cite[Theorem~1]{ny93} and \cite[Theorem~2]{ny95}).
\begin{theorem}[Infrared bound]\label{thm:infraredBound}
  For nearest-neighbor oriented percolation on $\opSpaceTime[d\geq 9]$ with independent bond statuses,
  there exists a model-dependent constant $K\in\interval[open]{0}{\infty}$ such that
  \begin{equation}
    \label{eq:infraredBound}
    \abs{\flt\varphi_p(k, z)}
    \leq \frac{K}{\abs*{1 - \Napier^{\imag\arg z} \ft D(k)}}
    = K \abs{\ft S_{\Napier^{\imag\arg z}} (k)}
  \end{equation}
  uniformly in $p\in\interval[open right]{0}{p_\mathrm{c}}$, $k\in\mathbb{T}^d$
  and $z\in\mathbb{C}$ with $\abs{z}\in\interval[open right]{1}{m_p}$.
\end{theorem}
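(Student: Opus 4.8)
The plan is to proceed via the lace expansion together with a bootstrapping argument carried out in the complex Fourier--Laplace variable, refining the scheme of Nguyen and Yang~\cite{ny93,ny95} so as to pin down the dimension. Write $z = r\Napier^{\imag\theta}$ with $r = \abs{z}$ and $\theta = \arg z$. The lace expansion, which we derive below, produces the Fourier--Laplace identity
\begin{equation*}
  \flt\varphi_p(k, z) = \frac{1 + \flt\Pi_p(k, z)}{1 - pz\,\ft D(k)\bigl(1 + \flt\Pi_p(k, z)\bigr)},
  \qquad
  \flt\Pi_p \defeq \sum_{N\geq 0}(-1)^N\flt\Pi_p^{(N)},
\end{equation*}
where $\flt\Pi_p^{(N)}(k, z)$ is the $N$th lace-expansion coefficient. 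Since $\ft S_{\Napier^{\imag\theta}}(k) = \bigl(1 - \Napier^{\imag\theta}\ft D(k)\bigr)^{-1}$, the theorem reduces to the two uniform bounds
\begin{equation*}
  \abs*{1 + \flt\Pi_p(k, z)} \leq C_1
  \qquad\text{and}\qquad
  \abs*{1 - pz\,\ft D(k)\bigl(1 + \flt\Pi_p(k, z)\bigr)} \geq C_2\,\abs*{1 - \Napier^{\imag\theta}\ft D(k)},
\end{equation*}
with $C_1, C_2\in\interval[open]{0}{\infty}$ independent of $p$, $k$ and $z$; then \eqref{eq:infraredBound} holds with $K = C_1/C_2$.

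Both bounds come out of a bootstrap in $p$. In the notation introduced below one sets up bootstrap functions $g_1$, $g_2$, $g_3$ that are continuous in $p$ on $\interval[open right]{0}{p_\mathrm{c}}$, finite for each $p < p_\mathrm{c}$ and $\abs{z} < m_p$ (here one uses $\varphi_p \leq Q_p$ and convergence of the relevant Laplace transforms), and, for $p$ small, bounded well below target constants $K_1, K_2, K_3$, since there $\flt\varphi_p$ is close to the Fourier--Laplace transform of the random-walk two-point function $Q_p$. Roughly, $g_1$ controls the putative infrared bound $\abs{\flt\varphi_p(k,z)}\,\abs*{1 - \Napier^{\imag\theta}\ft D(k)}$ itself; $g_2$ controls a normalised version of $pz\,\ft D(k)\,\flt\varphi_p(k,z)$, which monitors the way the denominator above degenerates where $\ft D(k)$ approaches $\pm 1$; and $g_3$ is a $k$- or $z$-differentiated (``weighted'') quantity feeding the triangle-type diagrammatic estimates. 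The heart of the argument is the improvement step: if $g_i \leq K_i$ for every $i$, then the diagrammatic bounds of Lemma~\ref{lem:basic-diagrams}, together with the lace-expansion convergence estimates on $\flt\Pi_p^{(N)}$ expressed through the weighted diagrams $\flt V_{p,m}^{(\lambda,\rho)}$, force $g_i \leq K_i / 2$ for every $i$ --- provided $d \geq 9$, which is exactly what makes the random-walk quantities $\varepsilon_i^{(\nu)}$ of \eqref{eq:def-rwQuantities} small enough for the estimates to close. Continuity and connectedness of $\interval[open right]{0}{p_\mathrm{c}}$ then propagate $g_i \leq K_i$ to every $p < p_\mathrm{c}$, and \eqref{eq:infraredBound} follows on letting $p \uparrow p_\mathrm{c}$.

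Two ingredients of the improvement step genuinely need the BCC structure and are the new points of the paper (see Remark~\ref{rem:role-restricting-g2} and Remark~\ref{rem:finiteness-weightedBubble}). First, Nguyen and Yang bounded $g_2$ using $\ft D_L(k) + 1 > 0$ for the spread-out random walk; this fails here since $\ft D(k) = \prod_{j=1}^{d}\cos k_j$ can equal $-1$, for instance at $k = (\pi, 0, \dots, 0)$. In Section~\ref{sec:improvedBounds}, Proposition~\ref{prp:restricting-g2} repairs this: the product form of $\ft D$ lets one isolate the single ``bad'' coordinate and restrict the Fourier variable, so that the missing positivity is replaced by an estimate that still closes the bootstrap. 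Second, the weighted open triangle $\flt V_{p,m}^{(\lambda,\rho)}(k)$, which Nguyen and Yang could show finite only for $d > 8$, is finite already for $d > 4$ by the argument of Section~\ref{sec:bounds-basicDiagrams}: since $\ft D(k)^2 = \prod_j\cos^2 k_j$ and $\ft D(2k) = \prod_j\cos 2k_j$ share the same period, a change of variables exploiting this common period converts the derivative weight into an ordinary convergent diagram. This is what lowers the dimensional threshold to $d \geq 9$ and removes the unspecified $d_0 \gg 4$ of \cite{ny93}.

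The step I expect to be the main obstacle, and around which the whole argument is organised, is the lower bound on the denominator $\abs*{1 - pz\,\ft D(k)\bigl(1+\flt\Pi_p(k,z)\bigr)}$. Because time-orientation makes $\flt\Pi_p(k,z)$ complex-valued, one cannot reduce this to a comparison of real quantities as in ordinary percolation; one has to track the phases of $1 + \flt\Pi_p(k,z)$ and of $pz$ and split the deficit according to the two ways $1 - \Napier^{\imag\theta}\ft D(k)$ can vanish --- namely $\ft D(k) = 1$ with $\theta = 0$, and $\ft D(k) = -1$ with $\theta = \pi$. Concretely one telescopes, writing $\flt G_p \defeq 1 + \flt\Pi_p$,
\begin{equation*}
  1 - pz\,\ft D(k)\,\flt G_p(k,z)
  = \bigl(1 - \Napier^{\imag\theta}\ft D(k)\bigr)
  + \Napier^{\imag\theta}\ft D(k)\bigl(1 - \flt G_p(k,z)\bigr)
  + (1 - pr)\,\Napier^{\imag\theta}\ft D(k)\,\flt G_p(k,z),
\end{equation*}
controls the second term by $g_1$ and $g_3$ and the third through the strict positivity of $1 - pr\,\flt G_p(0,r) = \flt G_p(0,r)\,\flt\varphi_p(0,r)^{-1} > 0$ for $r < m_p$ (which keeps $pr$ from exceeding $1$ by too much), the whole comparison being exactly what $g_2$ is built to quantify. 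Making the constants in this split compatible with the factor-$\tfrac12$ improvement, uniformly over $\abs{z} \in \interval[open right]{1}{m_p}$ --- where the $g_2$ estimate of Proposition~\ref{prp:restricting-g2} is itself the most delicate input --- is the crux of the proof.
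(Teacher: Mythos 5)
Your overall architecture --- lace expansion, a three-function bootstrap closed by diagrammatic estimates and numerics, restriction of the Fourier variable to deal with $\ft D(k)=-1$, and the coincidence of the periods of $\ft D(k)^2$ and $\ft D(2k)$ to make the weighted triangle finite for $d>4$ --- is the paper's strategy, and you have correctly identified the two BCC-specific inputs. However, the step you yourself single out as the crux is handled in a way that fails if executed as written. In your telescoping
\begin{equation*}
  1 - pz\,\ft D(k)\,\flt G_p(k,z)
  = \bigl(1 - \Napier^{\imag\theta}\ft D(k)\bigr)
  + \Napier^{\imag\theta}\ft D(k)\bigl(1 - \flt G_p(k,z)\bigr)
  + (1 - pr)\,\Napier^{\imag\theta}\ft D(k)\,\flt G_p(k,z),
\end{equation*}
the middle term equals $-\Napier^{\imag\theta}\ft D(k)\,\flt\Pi_p(k,z)$, which does \emph{not} vanish at $k=0$, $\theta=0$, whereas the target lower bound $C_2\abs{1-\Napier^{\imag\theta}\ft D(k)}$ does; so the three terms cannot be estimated separately --- the $k$-independent part of $\flt\Pi_p$ must be cancelled against the remaining terms, not bounded. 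The paper builds this cancellation into the comparison point: it introduces the scaled parameter $\mu_p(z)=(1-\flt\varphi_p(0,\abs{z})^{-1})\Napier^{\imag\arg z}$ of \eqref{eq:scaledParameter} and defines $g_2$ as the ratio of $\abs{\flt\varphi_p(k,z)}$ to $\abs*{\ft S_{\mu_p(z)}(k)}$, which equals $1$ at $k=0$, $z=\abs{z}$ by construction, so that only the increments $\flt\Pi_p(k,z)-\flt\Pi_p(0,\abs{z})$ --- controlled by the weighted diagrams together with Lemma~\ref{lem:upperBound-Green} on the restricted $k$-range --- enter the error; Lemma~\ref{lem:mu-bound} then converts $\ft S_{\mu_p(z)}$ back into $\ft S_{\Napier^{\imag\arg z}}$ at the cost of a factor $2$, which is where the constant $K$ in \eqref{eq:infraredBound} actually comes from. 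Relatedly, your third term carries $1-pr$, which is negative once $pr>1$ (and $pm$ does exceed $1$ near criticality, since $p_{\mathrm{c}}\geq 1$), so the ``strict positivity'' you invoke is not available for that term either.

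A second, smaller point: the improvement step cannot produce $g_i\leq K_i/2$. With the paper's normalization one has $g_2\geq 1$ identically (the defining ratio equals $1$ at $k=0$), and the numerical output of the bootstrap in $d=9$ is $g_2\leq 1.0445$ against $K_2=1.05$; the open--closed (forbidden-region) argument only needs, and only obtains, the strict inequalities $g_i<K_i$. This is a formulation issue rather than a conceptual one, but a bootstrap demanding a factor-$\tfrac12$ gain would never close in any dimension.
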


\begin{corollary}[{\cite{ny93}}]\label{cor:triangleCondition}
  If the infrared bound \eqref{eq:infraredBound} holds, then the triangle condition \eqref{eq:triangleCondition} is satisfied.
\end{corollary}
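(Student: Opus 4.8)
The plan is to pass to the Fourier--Laplace picture, where the triangle becomes a cube of $\flt\varphi_p$ that is controlled termwise by \eqref{eq:infraredBound}; the only subtlety is that \eqref{eq:infraredBound} is stated for $p<p_\mathrm{c}$ and $\abs{z}>1$, so I must reach $p_\mathrm{c}$ and the unit circle by monotonicity and continuity. Set $T_p(\vb*{x})\defeq\sum_{\vb*{y}}\varphi_p^{\Conv 2}(\vb*{y})\varphi_p(\vb*{y}-\vb*{x})$ and $\tilde\varphi_p(\vb*{x})\defeq\varphi_p(-\vb*{x})$, so that $T_p=\varphi_p^{\Conv 2}\Conv\tilde\varphi_p$ and \eqref{eq:triangleCondition} asserts $\lim_{R\to\infty}\sup_{\norm{\vb*{x}}_2\geq R}T_{p_\mathrm{c}}(\vb*{x})=0$. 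Since $\set{\vb*{o}\to\vb*{x}}$ is an increasing event, $p\mapsto\varphi_p(\vb*{x})$ and hence $p\mapsto T_p(\vb*{x})$ is nondecreasing, so $T_p(\vb*{x})\uparrow T_{p_\mathrm{c}}(\vb*{x})$ by monotone convergence; it therefore suffices to control $T_p$ for $p<p_\mathrm{c}$ uniformly in $p$ and then let $p\uparrow p_\mathrm{c}$.

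First I would write down the inversion formula. Fix $p<p_\mathrm{c}$, so $m_p>1$ and $\flt\varphi_p(k,z)=\sum_{t\in\opTime}\Phi_p(k;t)z^t$ converges absolutely and is continuous on $\set{\abs{z}\leq 1}$, while $T_p\in\ell^1(\mathbb{Z}^{d+1})$ with total mass $\chi_p^3<\infty$. Since $\varphi_p$ is real, the Fourier series of $T_p$ on the torus $\mathbb{T}^{d+1}\ni(k,\theta)$ is $\widehat{T_p}(k,\Napier^{\imag\theta})=\flt\varphi_p(k,\Napier^{\imag\theta})^2\,\flt\varphi_p(-k,\Napier^{-\imag\theta})=\abs{\flt\varphi_p(k,\Napier^{\imag\theta})}^2\flt\varphi_p(k,\Napier^{\imag\theta})$, and Fourier inversion reads
\[
  T_p(\vb*{x})=\int_{\mathbb{T}^d}\frac{d^dk}{(2\pi)^d}\int_{-\pi}^{\pi}\frac{d\theta}{2\pi}\,\Napier^{-\imag(k\cdot x+\TimeOf(\vb*{x})\theta)}\,\widehat{T_p}(k,\Napier^{\imag\theta}).
\]
Applying \eqref{eq:infraredBound} on the circles $\abs{z}=\rho\in\interval[open]{1}{m_p}$ and letting $\rho\downarrow 1$, using continuity of $\flt\varphi_p(k,\cdot)$, gives $\abs{\flt\varphi_p(k,\Napier^{\imag\theta})}\leq K\abs{1-\Napier^{\imag\theta}\ft D(k)}^{-1}$, hence
\[
  \abs{\widehat{T_p}(k,\Napier^{\imag\theta})}\leq\frac{K^3}{\abs{1-\Napier^{\imag\theta}\ft D(k)}^{3}}\eqdef G(k,\theta)
\]
uniformly in $p\in\interval[open right]{0}{p_\mathrm{c}}$, so $\sup_{\vb*{x}}T_p(\vb*{x})\leq\norm{\widehat{T_p}}_{L^1(\mathbb{T}^{d+1})}\leq\norm{G}_{L^1(\mathbb{T}^{d+1})}$.

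The next step is the infrared integrability $G\in L^1(\mathbb{T}^{d+1})$ when $d\geq 9$. Since $\ft D(k)=\prod_j\cos k_j\in\interval{-1}{1}$, one has $1-\Napier^{\imag\theta}\ft D(k)=0$ exactly when $\ft D(k)=\pm1$ and, correspondingly, $\theta\in\set{0,\pi}$; this happens at the $2^d$ points with $k\in\set{0,\pi}^d$ (an even number of entries equal to $\pi$ paired with $\theta=0$, an odd number paired with $\theta=\pi$), and near each of them $\abs{1-\Napier^{\imag\theta}\ft D(k)}^2\asymp\tfrac14\abs{\kappa}^4+\phi^2$ in local coordinates $(\kappa,\phi)$, so $\int d\phi\,(\tfrac14\abs{\kappa}^4+\phi^2)^{-3/2}\asymp\abs{\kappa}^{-4}$ and $\int\abs{\kappa}^{-4}\,d^d\kappa<\infty$ precisely because $d>4$. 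Thus $\norm{G}_{L^1}<\infty$ and, letting $p\uparrow p_\mathrm{c}$, $T_{p_\mathrm{c}}(\vb*{x})\leq\norm{G}_{L^1}<\infty$ for every $\vb*{x}$: the triangle is finite. To upgrade this to the vanishing in \eqref{eq:triangleCondition} I would use the bubble bound $\sum_{\vb*{x}}\varphi_{p_\mathrm{c}}(\vb*{x})^2=(\varphi_{p_\mathrm{c}}\Conv\tilde\varphi_{p_\mathrm{c}})(\vb*{o})\leq\int_{\mathbb{T}^{d+1}}K^2\abs{1-\Napier^{\imag\theta}\ft D(k)}^{-2}<\infty$ (valid already for $d>2$): it shows $\varphi_p\in\ell^2(\mathbb{Z}^{d+1})$ uniformly in $p$, and since $\varphi_p\uparrow\varphi_{p_\mathrm{c}}\in\ell^2$ pointwise, dominated convergence in $\ell^2$ gives $\flt\varphi_p\to\flt\varphi_{p_\mathrm{c}}$ in $L^2(\mathbb{T}^{d+1})$, hence along a subsequence $p_n\uparrow p_\mathrm{c}$ almost everywhere. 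Then $\widehat{T_{p_n}}=\abs{\flt\varphi_{p_n}}^2\flt\varphi_{p_n}\to g\defeq\abs{\flt\varphi_{p_\mathrm{c}}}^2\flt\varphi_{p_\mathrm{c}}$ a.e.\ with $\abs{g}\leq G$, so passing to the limit in the inversion formula (monotone convergence on the left, dominated convergence by $G$ on the right) yields $T_{p_\mathrm{c}}(\vb*{x})=\int_{\mathbb{T}^{d+1}}\Napier^{-\imag(k\cdot x+\TimeOf(\vb*{x})\theta)}g(k,\Napier^{\imag\theta})\,\frac{d^dk\,d\theta}{(2\pi)^{d+1}}$ for every $\vb*{x}$. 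As $g$ is a fixed $L^1$ function on $\mathbb{T}^{d+1}$, the Riemann--Lebesgue lemma forces $T_{p_\mathrm{c}}(\vb*{x})\to 0$ as $\norm{\vb*{x}}_2\to\infty$, which is exactly \eqref{eq:triangleCondition}.

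I expect the main obstacle to be the passage $p\uparrow p_\mathrm{c}$: because $\chi_{p_\mathrm{c}}=\infty$, the series defining $\flt\varphi_{p_\mathrm{c}}$ on the unit circle is only conditionally convergent, so one cannot take the limit termwise; the route above circumvents this via the uniform $\ell^2$ bound coming from the finite bubble diagram, but one must verify carefully that the $L^2$-limit $\flt\varphi_{p_\mathrm{c}}$ still obeys the infrared bound a.e.\ and that $\abs{\flt\varphi_{p_n}}^2\flt\varphi_{p_n}$ converges at a.e.\ $(k,\theta)$ (continuity of $w\mapsto\abs{w}^2 w$ together with the a.e.\ convergence of $\flt\varphi_{p_n}$). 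Everything else --- the inversion formula, the bound $\sup_{\vb*{x}}T_p(\vb*{x})\leq\norm{\widehat{T_p}}_{L^1}$, and the infrared integrability --- is routine once the $2^d$ singular loci of $1-\Napier^{\imag\theta}\ft D(k)$ are located, and this is precisely where $d\geq 9$ enters, with ample room since $d>4$ already suffices.
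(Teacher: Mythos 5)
Your proof is correct, but it takes a genuinely different route from the paper's. The paper applies the Hausdorff--Young inequality with exponent $1+\epsilon$: this bounds the $\ell^{1+1/\epsilon}$-norm of the triangle function $\vb*{x}\mapsto\sum_{\vb*{y}}\varphi_p^{\Conv 2}(\vb*{y})\varphi_p(\vb*{y}-\vb*{x})$ by the $L^{1+\epsilon}(\mathbb{T}^{d+1})$-norm of $\abs{\flt\varphi_p}^3$, uniformly in $p<p_\mathrm{c}$. The price is that one needs $\abs{1-\Napier^{\imag\theta}\ft D(k)}^{-3(1+\epsilon)}$ to be integrable (hence $d>4+6\epsilon$), but the payoff is that a uniform $\ell^{q}$-bound with $q<\infty$ passes to $p_\mathrm{c}$ by a trivial monotone limit and automatically yields the vanishing at infinity. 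You instead work at the endpoint $\epsilon=0$: plain $L^1$ Fourier inversion only requires the cube itself to be integrable ($d>4$), but an $L^1$-bound on $\widehat{T_p}$ gives only $\sup_{\vb*{x}}T_p(\vb*{x})<\infty$, not decay, so you must (i) identify $T_{p_\mathrm{c}}$ as the Fourier coefficients of a single $L^1$ function $g$ --- via the uniform bubble bound, $\ell^2$ and a.e.\ subsequential convergence of $\flt\varphi_{p_n}$, and dominated convergence --- and then (ii) invoke the Riemann--Lebesgue lemma. Both routes rest on the same local analysis ($\abs{1-\Napier^{\imag\theta}\ft D(k)}^2\asymp\norm{k}_2^4/4+\theta^2$ near each of the $2^d$ singular points, matching the paper's Taylor expansion), and your limiting argument is sound; the trade-off is that the paper's proof is shorter and makes the critical limit trivial once Hausdorff--Young is granted, while yours uses only elementary harmonic analysis at the cost of the more delicate passage to $p_\mathrm{c}$ that you correctly flagged as the main obstacle.
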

\begin{proof}
  Fix $\epsilon\in\interval[open]{0}{1 \wedge \frac{d-4}{6}}$, where $a\wedge b$ denotes $\min\set{a, b}$.
  By the Hausdorff-Young inequality and \eqref{eq:infraredBound},
  there is a constant $C_{\epsilon}$ depending on $\epsilon$ such that
  \begin{align}
    &\qty(
      \sum_{\vb*{x}} \abs{\sum_{\vb*{y}} \varphi_{p}^{\Conv 2}(\vb*{y}) \varphi_{p}(\vb*{y} - \vb*{x})}^{1+1/\epsilon}
    )^{\epsilon/(1+\epsilon)}\\
    &\leq C_\epsilon \qty(
      \int_{\mathbb{T}^d}\frac{\dd[d]{k}}{(2\pi)^d} \int_{\mathbb{T}}\frac{\dd{\theta}}{2\pi}
      \abs{\flt\varphi_p(k, \Napier^{\imag\theta})^2 \flt\varphi_p(k, \Napier^{-\imag\theta})}^{1+\epsilon}
    )^{1/(1+\epsilon)} \notag\\
    &= C_\epsilon \qty(
      \int_{\mathbb{T}^d}\frac{\dd[d]{k}}{(2\pi)^d} \int_{\mathbb{T}}\frac{\dd{\theta}}{2\pi}
      \abs{\flt\varphi_p(k, \Napier^{\imag\theta})}^{3 (1+\epsilon)}
    )^{1/(1+\epsilon)} \notag\\
    &\leq C_\epsilon \qty(
      \int_{\mathbb{T}^d}\frac{\dd[d]{k}}{(2\pi)^d} \int_{\mathbb{T}}\frac{\dd{\theta}}{2\pi}
      \frac{K^{3 (1+\epsilon)}}{\abs\big{1 - \Napier^{\imag\theta} \ft D(k)}^{3 (1+\epsilon)}}
    )^{1/(1+\epsilon)}.
    \label{eq:consequence-Hausdorff-Young}
  \end{align}
  In the equality, we have used $\flt\varphi_p(k, \conj{z}) = \conj{\flt\varphi_p(k, z)}$ for every $k\in\mathbb{T}^d$
  and $z\in\mathbb{C}$, where $\conj{z}$ is the complex conjugate of $z$.
  The integrability of the right-most side in the above follows from the Taylor series of $(1 - \ft D(k)) + \ft D(k) (1 - \Napier^{\imag\theta})$ around the singularities $\set{(k, \theta) \in \mathbb{T}^{d+1} | 1 - \Napier^{\imag\theta} \ft D(k) = 0}$ and
  \begin{align*}
    &\iint_{\Set{(k, \theta)\in\mathbb{T}^{d+1} | \norm{k}_2 \leq \delta_1,\ \abs{\theta} \leq \delta_2}}
      \frac{1}{\qty\big(\norm{k}_2^2 + \abs{\theta})^{3 (1+\epsilon)}}
      \frac{\dd[d]{k}}{(2\pi)^d}\frac{\dd{\theta}}{2\pi}\\
    &= 2 \int_{\Set{k\in\mathbb{T}^d | \norm{k}_2 \leq \delta_1}} \frac{\dd[d]{k}}{(2\pi)^d}
      \int_{0}^{\delta_2}
      \frac{1}{\qty\big(\norm{k}_2^2 + \theta)^{3 (1+\epsilon)}}
      \frac{\dd{\theta}}{2\pi}\\
    &= \frac{2}{2 + 3\epsilon} \int_{\Set{k\in\mathbb{T}^d | \norm{k}_2 \leq \delta_1}}
      \eval{\frac{1}{\qty\big(\norm{k}_2^2 + \theta)^{2 + 3\epsilon}}}_{\theta=\delta_2}^{0}
      \frac{\dd[d]{k}}{(2\pi)^d}\\
    &\lesssim \int_{\Set{k\in\mathbb{T}^d | \norm{k}_2 \leq \delta_1}}
      \frac{1}{\norm{k}_2^{2 (2 + 3\epsilon)}}
      \frac{\dd[d]{k}}{(2\pi)^d}
    \lesssim \int_{0}^{\delta_1} r^{d - 5 - 6\epsilon} \dd{r}
    < \infty
  \end{align*}
  for $\delta_1, \delta_2 < 1$,
  where $f(x) \lesssim g(x)$ means that there exists a constant $C < \infty$ such that $f(x) \leq C g(x)$.
  Since the upper bound on \eqref{eq:consequence-Hausdorff-Young} does not depend on $p$,
  $\sum_{\vb*{y}} \varphi_{p_\mathrm{c}}^{\Conv 2}(\vb*{y}) \varphi_{p_\mathrm{c}}(\vb*{y} - \vb*{x})$
  is a $(1+1/\epsilon)$-summable function over $\opSpaceTime$.
  This immediately implies
  \[
    \sum_{\vb*{y}} \varphi_{p_\mathrm{c}}^{\Conv 2}(\vb*{y}) \varphi_{p_\mathrm{c}}(\vb*{y} - \vb*{x})
    \xrightarrow[\norm{\vb*{x}}_2 \uparrow \infty]{} 0,
  \]
  and completes the proof.
\end{proof}

As we explained above, the triangle condition and the differential inequalities \eqref{eq:differentialInequality-susceptibility} and \eqref{eq:differentialInequality-percolation} yield the next corollary.
It is well-known that $\Theta_p$ is continuous at $p_\mathrm{c}$ (i.e., $\Theta_{p_\mathrm{c}}=0$) in the case of oriented percolation~\cite{bg90,gh02}.
Refer \cite[Proof of Proposition~3.1]{an84} to see how to imply the inequalities \eqref{eq:inequalities-susceptibility} from \eqref{eq:differentialInequality-susceptibility}.
Specifically, integrating the differential inequalities leads to their inequalities.
Refer \cite[Proof of Lemma~5.1]{ab87} and \cite[Proof of Proposition~4.1]{ba91} to show \eqref{eq:inequalities-percolationProbability} and \eqref{eq:inequalities-magnetization}.
As a side note, the shorter proof of a lower bound on $\Theta_p$ than \cite{ab87,cc86} is given by \cite{dct16}.

\begin{corollary}[{\cite{ba91}}]
  If the triangle condition \eqref{eq:triangleCondition} is satisfied, then there exist constants $\{C_i\}_{i=1}^{6}\subset\interval[open]{0}{\infty}$ such that, in the vicinity of $p_\mathrm{c}$,
  \begin{gather}
    C_1 (p - p_\mathrm{c}) \vee 0 \leq \Theta_p \leq C_2 (p - p_\mathrm{c}) \vee 0,
    \label{eq:inequalities-percolationProbability}\\
    \frac{C_3}{p_\mathrm{c} - p} \leq \chi_p \leq \frac{C_4}{p_\mathrm{c} - p},
    \label{eq:inequalities-susceptibility}
  \end{gather}
  and, for small $h\geq 0$,
  \begin{equation}
    \label{eq:inequalities-magnetization}
    C_5 h^{1/2} \leq M_{p_\mathrm{c}, h} \leq C_6 h^{1/2},
  \end{equation}
  where $a\vee b$ denotes $\max\set{a, b}$.
  Therefore, $\beta=\gamma=1$ and $\delta=2$.
\end{corollary}

We obtain upper bounds on $p_\mathrm{c}$ as a by-product through the proof of Theorem~\ref{thm:infraredBound}.
See \eqref{eq:numericalBound-g1} below (whose numerical value is rounded up to the fourth decimal place due to significant figures).
Incidentally, one can obtain a more precise estimate because the lace expansion also provides an asymptotic expansion for $p_\mathrm{c}$ \cite{hs95,hm22,hs05-asymp,hs06}.
This topic is outside the reach of this paper, so that we have not taken care of this subject.

\begin{corollary}
  For each dimension $d$, an upper bound on the critical point $p_\mathrm{c} (\geq 1)$ for nearest-neighbor oriented percolation on $\opSpaceTime$ is given by the following table.
  \begin{table}[htbp]
    \centering
    \begin{tabular}{|c|cccc|}
      \hline
      $d+1$ & $9+1$ & $10+1$ & $11+1$ & $12+1$\\
      $p_\mathrm{c} \leq$ & \num{1.000110} & \num{1.000039} & \num{1.000014} & \num{1.000005}\\
      \hline
    \end{tabular}
  \end{table}
\end{corollary}

\subsection{Remark}

In this paper, we show mean-field behavior for nearest-neighbor
oriented percolation on $\opSpaceTime[d\geq 9]$.
As compared to the analysis on $\mathbb{Z}^{d\gg 4}\times\opTime$ by Nguyen and Yang~\cite{ny93},
the current analysis identifies an upper bound $9$ on the critical dimension $d_\mathrm{c}$.
To begin with, since \cite{ny93} contains a problematic issue due to the existence of $k$ such that $\ft D(k) = -1$ (as we mentioned in the last paragraph of Section~\ref{sec:motivation}),
their bound corresponding to \eqref{eq:improvedBound-g2} below requires extra lemmas for any dimensions even if $d$ is sufficiently large.
To resolve the issue, we prove Proposition~\ref{prp:restricting-g2} and Lemma~\ref{lem:upperBound-Green} below.
It is crucial to reduce the range of the supremum in $g_2$ (See Section~\ref{sec:proof-main}) by symmetry in the Fourier space, hence the assumption of the inequality \eqref{eq:upperBound-Green} is satisfied.

To go down to the desired spatial $5$ dimensions, we must improve our analysis in various aspects.
Some of the ideas are summarized as follows.
\begin{enumerate}[label=(\roman*), listparindent=\parindent]
  \item
  As we explain in Remark~\ref{rem:computer-assisted-proof} below, the spatial dimension $9$ in Theorem~\ref{thm:infraredBound} depends on how to look for the parameters $\{K_i\}_{i=1}^{3}$ of a bootstrap argument (See Section~\ref{sec:proof-main} for details).
  We prove the main theorem only in $d+1\geq 9+1$ because we were not able to find the parameters satisfying the bootstrap argument in $d+1=8+1$.
  There is a possibility that one can update our result if one searches for the parameters carefully, but it is more important to improve our bounds for $\{g_i\}_{i=1}^{3}$, the lace expansion coefficients, and so on than accuracy in computers.

  \item
  In Lemma~\ref{lem:precise-diagrammaticBounds} below, we paid attention to the coefficient $1/2$ of the diagram containing four $q_p$'s.
  If one isolates the diagrams containing six $q_p$'s and specifies their coefficients,
  then one can obtain better upper bounds than those in this paper.
  The upper bounds on $B_{p, m}^{(\lambda, \rho)}$ and $T_{p, m}^{(\lambda, \rho)}$ in Lemma~\ref{lem:basic-diagrams} below
  may be helpful for the improvement.
  However, although their random-walk counterparts, such as $\sum_{\vb*{x}} (q_p^{\Conv\lambda}\Conv\varphi_p^{\Conv 2})(\vb*{x}) (q_p^{\Conv\rho}\Conv\varphi_p)(\vb*{x})$,
  are decreasing in $\lambda$ and $\rho$,
  the bound on $T_{p, m}^{(\lambda, \rho)}$ may attain the minimum at some $\lambda_*, \rho_*\in\mathbb{N}$,
  due to the exponentially growing factor $(p (m \vee 1))^{\lambda + \rho}$.
  That is a reason why we did not isolate the diagrams containing six $q_p$'s in the paper.

  \item
  If one does not use the Laplace transform, then our result may be improved.
  We were not able to avoid using bad triangle inequality estimates due to complex numbers,
  e.g., not splitting contributions of $\Pi_p^\mathrm{even}$ and $\Pi_p^\mathrm{odd}$ in the second half in Lemma~\ref{lem:precise-diagrammaticBounds} below.
  We would like to neglect either $\Pi_p^\mathrm{even}$ or $\Pi_p^\mathrm{odd}$ because the odd terms in \eqref{eq:def-sumOfLaceExpansionCoefficients} below are negative.
  To this end, the inductive approach to the lace expansion \cite{hs02,hs03} may be usable.
  However, this method is shown for spread-out models.
  An extension to nearest-neighbor models is required.

  \item
  Even if some improvements are found for ordinary percolation, such methods are not always applicable to oriented percolation.
  For example, the non-backtracking lace expansion~\cite{fh17nobl,fh17p} achieved success for ordinary percolation.
  Its expansion coefficients express the perturbation of the non-backtracking random walk while the (ordinary) lace expansion coefficients express that of the random walk.
  The non-backtracking random walk is useful to approximate ordinary-percolation clusters, but we guess that the property is not useful to approximate oriented-percolation clusters.
  Any paths on an oriented-percolation cluster are not self-avoiding paths when we consider their projection onto the space.
  Therefore, different methods between ordinary and oriented percolation are often desired.
\end{enumerate}

\subsection{Proof of the main result}\label{sec:proof-main}

Our proof of Theorem~\ref{thm:infraredBound} is based on a bootstrap analysis.
In this paper, we do not take care of the details of the method.
See either \cite[Lemma~5.9]{s06} or \cite[Lemma~8.1]{hh17}.
Thanks to it, it suffices for us to verify the following three propositions,
of which in particular we use the lace expansion in the third one.

Let
\begin{multline}
  \flt U_\mu(k, l) = \bigl(1 - \hat D(k)\bigr) \Biggl(\frac{\abs*{\ft S_{\mu}(l + k)} + \abs*{\ft S_{\mu}(l - k)}}{2} \abs{\ft S_{\mu}(l)}\\
  + \bigl(1 - \hat D(2l)\bigr) \abs{\ft S_{\mu}(l)} \abs{\ft S_{\mu}(l + k)} \abs{\ft S_{\mu}(l - k)}\Biggr)
  \label{eq:2nd-derivative-RW}
\end{multline}
for $k, l\in\mathbb{T}^d$ and $\mu\in\mathbb{C}$ with $\abs{\mu}\in\interval{0}{1}$, and let
\begin{equation}
  \label{eq:scaledParameter}
  \mu_p(z) = \qty(1 - \flt\varphi_p(0, \abs{z})^{-1}) \Napier^{\imag \arg z}
\end{equation}
for $p\in\interval[open right]{0}{p_\mathrm{c}}$ and $z\in\mathbb{C}$ with $\abs{z}\in\interval[open right]{1}{m_p}$.
We define the bootstrap functions $\{g_i\}_{i=1}^{3}$ as
\begin{gather}
  g_1(p, m) \defeq p (m \vee 1),\\
  g_2(p, m) \defeq \sup_{\substack{k\in\mathbb{T}^d,\\ z\in\mathbb{C}\colon\abs{z}\in\set{1, m}}}\frac{\abs{\flt\varphi_p(k, z)}}{\abs*{\flt S_{\mu_p(z)}(k)}},\\
  g_3(p, m) \defeq \sup_{\substack{k, l\in\mathbb{T}^d,\\ z\in\mathbb{C}\colon\abs{z}\in\set{1, m}}} \frac{\abs\big{\frac{1}{2}\DLaplacian{k}\bigl(\flt q_p(l, z) \flt\varphi_p(l, z)\bigr)}}{\flt U_{\mu_p(z)}(k, l)},
  \label{eq:op-bootstrapFunction3}
\end{gather}
where
\[
  \DLaplacian{k}\ft f(l) = \ft f(l + k) + \ft f(l - k) - 2\ft f(l)
\]
for a function $\ft f$ on $\mathbb{T}^d$ is the double discrete derivative.
In \eqref{eq:op-bootstrapFunction3}, the supremum near $k=0$ should be interpreted as the supremum over the limit as $\norm{k}_2\to 0$.
The following propositions are the sufficient condition of \cite[Lemma~5.9]{s06}.
Thus, if Propositions~\ref{prp:op-continuity}--\ref{prp:op-bootstrapArgument} hold, then
it is true that there exists $K_i$ such that $g_i(p, m) < K_i$ for each $i=1, 2, 3$.
In particular, the bound on $g_2(p, m)$ and Lemma~\ref{lem:mu-bound} below immediately imply Theorem~\ref{thm:infraredBound}.

\begin{proposition}[Continuity]\label{prp:op-continuity}
  The functions $\{g_i\}_{i=1}^{3}$ are continuous
  in $m\in\interval[open right]{1}{m_p}$ for every $p\in\interval[open right]{0}{p_\mathrm{c}}$,
  and the functions $\{g_i(p, 1)\}_{i=1}^{3}$ are continuous in $p\in\interval[open right]{0}{p_\mathrm{c}}$.
\end{proposition}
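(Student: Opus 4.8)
The plan is to reduce the proposition to two ingredients: the elementary principle that the supremum over a \emph{compact} set of a jointly continuous function is continuous in the remaining parameters, and, for $g_3$ only, a subsequence-and-compactness argument to handle the locus where the $g_3$-quotient is a $0/0$ expression. The function $g_1(p,m)=p(m\vee1)$ is continuous by inspection. For $g_2$ and $g_3$ I would parametrise $z=\Napier^{\imag\theta}$ or $z=m\Napier^{\imag\theta}$ with $\theta$ in the one-dimensional torus $\mathbb{T}$, so that the supremum over $\abs z\in\set{1,m}$ becomes a maximum of two suprema over the \emph{compact} sets $\mathbb{T}^d\times\mathbb{T}$ (for $g_2$) and $\mathbb{T}^d\times\mathbb{T}^d\times\mathbb{T}$ (for $g_3$); it then suffices to show the relevant quotients are jointly continuous, with the stated limit convention at the $0/0$ locus in the $g_3$ case.

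First I would record joint continuity of the building blocks. On $\opSpace$ the set of vertices reachable from $\vb*o$ in $t$ oriented steps is finite and of diameter $O(t)$, since each step changes every coordinate by $\pm1$; hence each $\Phi_p(k;t)$ is a trigonometric polynomial in $k$ and a genuine polynomial in $p$, nondecreasing in $p$. For fixed $p<p_\mathrm{c}$, the power series $\flt\varphi_p(k,z)=\sum_t\Phi_p(k;t)z^t$ has radius of convergence $m_p>1$, so it and all of its spatial derivatives converge locally uniformly on $\mathbb{T}^d\times\set{\abs z<m_p}$ (each spatial derivative of $\Phi_p(\cdot;t)$ is bounded by $\lesssim t^{\abs\alpha}\Phi_p(0;t)$, which does not change the radius), so $\flt\varphi_p$ is $C^\infty$ in its Fourier variable there; and for $p$ in a neighbourhood $[p_0-\epsilon,p_0+\epsilon]\subset[0,p_\mathrm{c})$, uniformly over $\abs z=1$, the same series and all its spatial derivatives are dominated by $\lesssim\sum_t t^{\abs\alpha}\Phi_{p_0+\epsilon}(0;t)<\infty$ (finite because $m_{p_0+\epsilon}>1$), so the Weierstrass $M$-test gives joint continuity in $(p,k,z)$ of $\flt\varphi_p$, of $\flt q_p(l,z)\flt\varphi_p(l,z)=p\,z\,\ft D(l)\flt\varphi_p(l,z)$, and of its $l$-Hessian $\nabla_l^2\bigl(\flt q_p\flt\varphi_p\bigr)(l,z)$, all with locally uniform bounds. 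Moreover $\mu_p(z)=(1-\flt\varphi_p(0,\abs z)^{-1})\Napier^{\imag\arg z}$ is continuous, since $\flt\varphi_p(0,\abs z)\geq\varphi_p(\vb*o)=1$ and $\Napier^{\imag\arg z}=z/\abs z$ is continuous for $\abs z\geq1$, and $\abs{\mu_p(z)}=1-\flt\varphi_p(0,\abs z)^{-1}<1$ for every $p<p_\mathrm{c}$ (uniformly on compact subsets of the parameter range); hence $\flt S_{\mu_p(z)}(k)=(1-\mu_p(z)\ft D(k))^{-1}$ and the second factor in \eqref{eq:2nd-derivative-RW} are continuous and bounded, with the crucial lower bound $\abs{\flt S_{\mu_p(z)}(k)}^{-1}=\abs{1-\mu_p(z)\ft D(k)}\geq1-\abs{\mu_p(z)}>0$. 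Consequently the $g_2$-quotient equals $\abs{\flt\varphi_p(k,z)}\,\abs{1-\mu_p(z)\ft D(k)}$, a jointly continuous bounded function on the full compact domain, and the sup-over-compact principle yields continuity (and finiteness) of $m\mapsto g_2(p,m)$ and of $p\mapsto g_2(p,1)$.

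For $g_3$ the new feature is that $\flt U_{\mu_p(z)}(k,l)$ vanishes precisely on the finite set $\mathcal{Z}\defeq\Set{k\in\mathbb{T}^d | \ft D(k)=1}$, consisting of those $k$ all of whose coordinates lie in $\set{0,\pm\pi}$ with an even number of them nonzero (in particular $k=o$, the case highlighted in the remark after \eqref{eq:op-bootstrapFunction3}). Every $x$ with $\varphi_p(x,t)\neq0$, and every $x$ in the support of $D^{\conv t}$, has all coordinates of the parity of $t$, so $\Napier^{\imag k_0\cdot x}=1$ for $k_0\in\mathcal{Z}$; hence $\ft D$ and $\flt\varphi_p$ are $\mathcal{Z}$-periodic in their Fourier variable, whence $\DLaplacian{k_0}\bigl(\flt q_p(l,z)\flt\varphi_p(l,z)\bigr)=0$ on $\mathcal{Z}$ — the quotient is a genuine $0/0$ there, assigned the value of its limit. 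The quantitative input is that, writing $k=k_0+\kappa$ with $k_0\in\mathcal{Z}$ and using $\mathcal{Z}$-periodicity to turn offsets $k_0+\kappa$ into $\kappa$, one has $1-\ft D(k)=\tfrac12\norm{\kappa}_2^2+O(\norm{\kappa}_2^4)$, the second factor in \eqref{eq:2nd-derivative-RW} converges — uniformly in $l$ and on compact parameter sets — to $A_{\mu_p(z)}(l)\defeq\abs{\flt S_{\mu_p(z)}(l)}^2+(1-\ft D(2l))\abs{\flt S_{\mu_p(z)}(l)}^3\geq\tfrac14$, and
\[
  \tfrac12\DLaplacian{k}\bigl(\flt q_p(l,z)\flt\varphi_p(l,z)\bigr)=\tfrac12\,\kappa^\top\nabla_l^2\bigl(\flt q_p\flt\varphi_p\bigr)(l,z)\,\kappa+O(\norm{\kappa}_2^4),
\]
the $O$-terms being uniform over compact parameter sets (this uses the summable derivative bounds on $\flt\varphi_p$). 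So the $g_3$-quotient is bounded near $\mathcal{Z}$ and, along a direction $\hat\kappa=\kappa/\norm{\kappa}_2$, tends to $\abs{\hat\kappa^\top\nabla_l^2(\flt q_p\flt\varphi_p)(l,z)\hat\kappa}/A_{\mu_p(z)}(l)$, jointly continuous in $(p,l,\hat\kappa,z)$. Since the $g_3$-quotient $q$ is jointly continuous away from $\mathcal{Z}$ (its denominator being $>0$ there, because $\ft D(k)<1$ and each $\abs{\flt S_{\mu_p(z)}}\geq\tfrac12$), every such directional limit is dominated by $\sup_{k\notin\mathcal{Z}}q$, so $g_3(p,m)$ equals that supremum; in particular it is lower semicontinuous in $m$ (resp.\ $p$) as a supremum of functions continuous in $m$ (resp.\ $p$). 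For upper semicontinuity, given $m_n\to m$ I would take near-optimizers $(k_n,l_n,\theta_n)$ for $g_3(p,m_n)$, extract a convergent subsequence $(k_n,l_n,\theta_n)\to(k_*,l_*,\theta_*)$, and split: if $k_*\notin\mathcal{Z}$, joint continuity of $q$ gives $\limsup_n g_3(p,m_n)\leq q(p,m,k_*,l_*,\theta_*)\leq g_3(p,m)$; if $k_*\in\mathcal{Z}$, then with $\kappa_n=k_n-k_*\to0$ and (after a further extraction) $\kappa_n/\norm{\kappa_n}_2\to\hat\kappa_*$, the uniform expansions give $q(p,m_n,k_n,l_n,\theta_n)\to\abs{\hat\kappa_*^\top\nabla_l^2(\flt q_p\flt\varphi_p)(l_*,z_*)\hat\kappa_*}/A_{\mu_p(z_*)}(l_*)$ with $z_*=m\Napier^{\imag\theta_*}$, and this limit is a directional limit of $q$ at $k_*$, hence $\leq g_3(p,m)$. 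The identical argument with $p_n\to p$ in a compact subinterval of $[0,p_\mathrm{c})$ gives continuity of $p\mapsto g_3(p,1)$.

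The main obstacle will be exactly this $g_3$ step near $\mathcal{Z}$: making the remainders in the two second-order expansions, and the convergence of the second factor in \eqref{eq:2nd-derivative-RW}, \emph{uniform} in $(p,z)$ over compact parameter sets, so that the subsequential argument closes and the supremum over the non-compact set $\set{k\notin\mathcal{Z}}$ is genuinely controlled. All of this rests on two structural features already at hand — the finiteness and $O(t)$-diameter of the BCC reachable sets, which make $\flt\varphi_p(\cdot,z)$ smooth with summable derivative bounds and supply the $\mathcal{Z}$-periodicity, and the strict bound $\abs{\mu_p(z)}<1$ for $p<p_\mathrm{c}$, which keeps $\flt S_{\mu_p(z)}$ and $A_{\mu_p(z)}$ bounded and bounded away from $0$.
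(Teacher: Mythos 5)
Your argument is sound and follows essentially the route the paper itself points to: the paper omits this proof, deferring to the compactness-plus-joint-continuity method of \cite{Handa-Kamijima-Sakai} and \cite{k21}, and your use of the Markov property (finiteness of the $t$-step reachable set, so that $\Phi_p(k;t)$ is a polynomial in $p$ and a trigonometric polynomial in $k$ with derivative bounds $\lesssim t^{\abs{\alpha}}\Phi_p(0;t)$) is exactly the simplification the paper's footnote alludes to. One small point worth keeping: you correctly observe that $\flt U_{\mu_p(z)}(k,l)$ vanishes on the whole set $\mathcal{Z}=\set{\ft D(k)=1}$ and not only at $k=0$ as the paper's convention states, but since $\mathcal{Z}$-periodicity of $\ft D$ and $\flt\varphi_p$ gives $\DLaplacian{k_0+\kappa}=\DLaplacian{\kappa}$ and $\flt U_{\mu}(k_0+\kappa,l)=\flt U_{\mu}(\kappa,l)$ for $k_0\in\mathcal{Z}$, every such singularity reduces to the one at $k=0$, so your treatment and the paper's convention agree.
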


\begin{proposition}[Initial conditions]\label{prp:op-initialCondition}
  There are model-dependent finite constants $\{K_i\}_{i=1}^{3}$ such that $g_i(0, 1)<K_i$ for $i=1, 2, 3$.
\end{proposition}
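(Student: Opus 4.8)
The plan is to exploit the complete triviality of the model at $p=0$. When $p=0$ every bond is vacant almost surely, so $\vb*{o}$ is connected to $\vb*{x}$ only when $\vb*{x}=\vb*{o}$; hence $\varphi_0(\vb*{x})=\ind{\vb*{x}=\vb*{o}}$, equivalently $\Phi_0(k;t)=\ind{t=0}$, and therefore $\flt\varphi_0(k,z)=1$ for every $k\in\mathbb{T}^d$ and every $z\in\mathbb{C}$. Likewise $q_0\equiv 0$, so $\flt q_0(l,z)=0$ for all $l$ and $z$. (Incidentally $m_0=\infty$, since $\Phi_0(0;t)^{1/t}=0$ for $t\geq1$, so the admissible range $\interval[open right]{1}{m_0}$ of $m$ is non-empty; only the value $m=1$ is needed here.)

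First I would substitute these values into the auxiliary quantities. From \eqref{eq:scaledParameter}, $\mu_0(z)=\bigl(1-\flt\varphi_0(0,\abs{z})^{-1}\bigr)\Napier^{\imag\arg z}=(1-1)\Napier^{\imag\arg z}=0$, so the random-walk Green function collapses to $\ft S_0(k)=\bigl(1-0\cdot\ft D(k)\bigr)^{-1}=1$ for all $k$. Plugging $\mu=0$ into \eqref{eq:2nd-derivative-RW} then gives
\[
  \flt U_0(k,l)=\bigl(1-\ft D(k)\bigr)\Bigl(1+\bigl(1-\ft D(2l)\bigr)\Bigr)=\bigl(1-\ft D(k)\bigr)\bigl(2-\ft D(2l)\bigr),
\]
which is non-negative, since $2-\ft D(2l)\in\interval{1}{3}$, and which vanishes exactly on the (measure-zero) set where $\ft D(k)=1$.

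Next I would read off the three bootstrap functions directly. One has $g_1(0,1)=0\cdot(1\vee1)=0$; since $\flt\varphi_0(k,z)=1=\abs{\ft S_{\mu_0(z)}(k)}$ identically, $g_2(0,1)=1$; and the numerator of $g_3$ is $\frac{1}{2}\DLaplacian{k}\bigl(\flt q_0(l,z)\flt\varphi_0(l,z)\bigr)$, which vanishes because $\flt q_0(l,z)\flt\varphi_0(l,z)=0$ for all $l,z$. Hence the ratio in \eqref{eq:op-bootstrapFunction3} equals $0$ wherever $\flt U_0(k,l)>0$, and the limit prescription near the zeros of $\flt U_0$ (in particular near $k=0$) again returns $0$, the numerator being identically $0$ there; thus $g_3(0,1)=0$. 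Taking, for instance, $K_1=K_3=1$ and $K_2=2$ completes the proof.

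There is essentially no obstacle here: this is a direct computation once the $p=0$ two-point function has been identified. The only point worth a word is the apparent $0/0$ in $g_3$ at the points with $\ft D(k)=1$; it is harmless precisely because at $p=0$ the numerator of $g_3$ vanishes identically, so the limit in the definition of $g_3$ is unambiguously $0$.
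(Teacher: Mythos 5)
Your proposal is correct and follows essentially the same route as the paper: at $p=0$ one has $\flt\varphi_0(k,z)=1$, hence $\mu_0(z)=0$ and $\ft S_{\mu_0(z)}(k)=1$, giving $g_1(0,1)=0$, $g_2(0,1)=1$ and $g_3(0,1)=0$, so any $K_i>1$ works. Your extra care with the $0/0$ points of $g_3$ where $\ft D(k)=1$ is a welcome detail the paper glosses over, but it changes nothing substantive.
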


\begin{proposition}[Bootstrap argument]\label{prp:op-bootstrapArgument}
  For nearest-neighbor oriented percolation on $\opSpaceTime[d\geq 9]$,
  we fix both $p\in\interval[open]{0}{p_\mathrm{c}}$ and $z\in\mathbb{C}$ with $\abs{z}\in\interval[open]{1}{m_p}$
  and assume $g_i(p, m)\leq K_i$ for each $i=1, 2, 3$, where $\{K_i\}_{i=1}^{3}$ are the same constants as in Proposition~\ref{prp:op-initialCondition}.
  Then, the stronger inequalities $g_i(p, m)<K_i$, $i=1, 2, 3$, hold.
\end{proposition}

When the constants in Proposition~\ref{prp:op-initialCondition} are denoted by $\{\bar K_i\}_{i=1}^{3}$, it is also possible to take $\{K_i\}_{i=1}^{3}$ in Proposition~\ref{prp:op-bootstrapArgument} such that $K_i \geq \bar K_i$.
However, since we would like to obtain upper bounds as precisely as possible in the vicinity of the critical point $p_\mathrm{c}$, the option is not very useful to bound $\{g_i(p, 1)\}_{i=1}^{3}$ above.
In many cases, taking the same constants as in the initial conditions is the best selection.

It is not hard to apply the method in \cite[Section~3]{Handa-Kamijima-Sakai} to Proposition~\ref{prp:op-continuity},
hence we omit the proof.
One can easily verify the continuities of $g_1(p, m)$ and $g_2(p, m)$.
For $g_3(p, m)$, notice that we use the Markov property for oriented percolation instead of the nested expectations for ordinary percolation.
Thanks to this, the proof is a little simplified.
One can see the complete proof in \cite{k21}.
The proof of Proposition~\ref{prp:op-initialCondition} is also easy, so that we provide it here:

\begin{proof}[Proof of Proposition~\ref{prp:op-initialCondition}]
  Clearly, $g_1(0, 1) = 0$.  By definitions, $\flt\varphi_0(k, z) = \sum_{(x, t)} (\KroneckerDelta{o}{x}\KroneckerDelta{0}{t} + \mathbb{P}_0((o, 0) \rightarrow (x, t) \neq (o, 0))) \Napier^{\imag k\cdot x} z^t = 1$,
  $\mu_0(z) = (1 - \flt\varphi_0(0, \abs{z})) \Napier^{\imag \arg z} = 0$
  and hence $\ft S_{\mu_0(z)}(k) = 1$.
  These lead to $g_2(0, 1) = 1$ and $g_3(0, 1) = 0$.
  Therefore, the initial conditions hold for arbitrary $K_i>1$, $i=1, 2, 3$.
\end{proof}

We must take the values $\{K_i\}_{i=1}^{3}$ satisfying Proposition~\ref{prp:op-bootstrapArgument}.
See the proof of Proposition~\ref{prp:op-bootstrapArgument} in Section~\ref{sec:proof-bootstrapArgument} for the specific values.

In the rest of this paper, we focus on the proof of Proposition~\ref{prp:op-bootstrapArgument},
which is shown in the following steps:
\begin{enumerate}
  \item\label{enum:improvedBound}
    Bound the bootstrap functions $\{g_i(p, m)\}_{i=1}^{3}$ above by the lace expansion.
  \item\label{enum:diagrammaticBound}
    Bound the lace expansion coefficients ($\{\pi_p^{(N)}\}_{N=0}^{\infty}$) above
    by basic diagrams ($B_{p, m}^{(\lambda, \rho)}$, $T_{p, m}^{(\lambda, \rho)}$ and $\ft V_{p, m}^{(\lambda, \rho)}(k)$).
  \item\label{enum:basicDiagram}
    Bound the basic diagrams above by the bootstrap hypotheses
    $g_i(p, m) \leq K_i$ for $i=1, 2, 3$ and the random-walk quantities \eqref{eq:def-rwQuantities}.
  \item\label{enum:completingbootstrap}
    Verify the stronger inequalities $g_i(p, m) < K_i$ for $i=1, 2, 3$
    by combining the bounds in Steps~\ref{enum:improvedBound}--\ref{enum:basicDiagram} and
    substituting specifically numerical values into $\{K_i\}_{i=1}^{3}$.
\end{enumerate}
We explain Step~\ref{enum:improvedBound} in Section~\ref{sec:improvedBounds},
Step~\ref{enum:diagrammaticBound} in Section~\ref{sec:diagrammaticBounds},
Step~\ref{enum:basicDiagram} in Section~\ref{sec:bounds-basicDiagrams}
and Step~\ref{enum:completingbootstrap} in Section~\ref{sec:proof-bootstrapArgument}.

\section{Diagrammatic bounds on the expansion coefficients}\label{sec:diagrammaticBounds}

\subsection{Review of the lace expansion}

The lace expansion was first derived by Brydges and Spencer~\cite{bs85} for weakly self-avoiding walk.
In the literature, there are three different ways to derive the lace expansion for oriented percolation.
The first is to directly apply Brydges and Spencer's method due to the Markov property~\cite{ny93},
the second is to use inclusion-exclusion relations and nested expectations~\cite{hs90p},
and the third is to use inclusion-exclusion relations and the Markov property~\cite{s01}.
In this paper, although we do not show its proof,
we implicitly use the third approach and its representations of $\{\pi_p^{(N)}\}_{N=0}^{\infty}$.

The lace expansion gives a similar recursion equation for the two-point function $\varphi_p$
to that for the random-walk two-point function $Q_p$, which is
\[
  Q_p(\vb*{x}) = \KroneckerDelta{\vb*{o}}{\vb*{x}} + (q_p\Conv Q_p)(\vb*{x})
\]
for $\vb*{x}=(x, t)\in\opSpaceTime$, where $\KroneckerDelta{\vb*{o}}{\vb*{x}} = \KroneckerDelta{o}{x}\KroneckerDelta{o}{t}$
(cf., the recursion equation for the random-walk Green function $S_p$).

\begin{proposition}[Lace expansion]
  For any $p<p_\mathrm{c}$ and $N\in\nnInt$, there exist model-dependent nonnegative functions $\{\pi_p^{(n)}\}_{n=0}^{N}$
  on $\opSpaceTime$ such that, if we define $\Pi_p^{(N)}$ as
  \[
    \Pi_p^{(N)}(\vb*{x}) = \sum_{n=0}^{N}(-1)^n \pi_p^{(n)}(\vb*{x}),
  \]
  we obtain the recursion equation
  \begin{equation}
    \label{eq:op-laceExpansion}
    \varphi_p(\vb*{x}) = \KroneckerDelta{o}{\vb*{x}} + \Pi_p^{(N)}(\vb*{x})
      + \qty(\qty\big(\KroneckerDelta{\vb*{o}}{\bullet} + \Pi_p^{(N)}) \Conv q_p \Conv \varphi_p)(\vb*{x})
      + (-1)^{N+1} R_p^{(N+1)}(\vb*{x}),
  \end{equation}
  where the remainder $R_p^{(N+1)}(\vb*{x})$ obeys the bound
  \begin{equation*}
    0 \leq R_p^{(N+1)}(\vb*{x}) \leq \qty(\pi_p^{(N)} \Conv \varphi_p)(\vb*{x}).
  \end{equation*}
\end{proposition}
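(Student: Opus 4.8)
The plan is to derive~\eqref{eq:op-laceExpansion} by the inclusion--exclusion argument for oriented percolation of Sakai~\cite{s01}, in which the Markov property in the time direction plays the role that the nested expectations play in the percolation expansion of~\cite{hs90p}; since the remainder of the paper uses the functions $\{\pi_p^{(n)}\}$ delivered by that scheme, their explicit representations would ultimately be quoted from~\cite{s01}, but the mechanism is as follows. Call a bond $(\vb*{u},\vb*{v})$ \emph{pivotal} for $\vb*{o}\rightarrow\vb*{x}$ if it is occupied, both $\vb*{o}\rightarrow\vb*{u}$ and $\vb*{v}\rightarrow\vb*{x}$ hold, and $\vb*{o}\nrightarrow\vb*{x}$ in the configuration obtained by declaring $(\vb*{u},\vb*{v})$ vacant. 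Because every bond advances the time coordinate by exactly one unit, the pivotal bonds for a fixed connection carry pairwise distinct times and are hence totally ordered, so the \emph{first} one is well defined. Writing $\vb*{o}\Leftrightarrow\vb*{x}$ for the \emph{doubly connected} event that $\vb*{o}\rightarrow\vb*{x}$ without any pivotal bond for this connection, set $\pi_p^{(0)}(\vb*{x})=\mathbb{P}_p(\vb*{o}\Leftrightarrow\vb*{x})-\KroneckerDelta{\vb*{o}}{\vb*{x}}\geq 0$, so $\pi_p^{(0)}(\vb*{o})=0$. For $\vb*{x}\neq\vb*{o}$, partition $\{\vb*{o}\rightarrow\vb*{x}\}$ on whether a pivotal bond exists and, if so, on the identity of the first one $(\vb*{u},\vb*{v})$; the event $\{\vb*{o}\Leftrightarrow\vb*{u}\}$ depends only on bonds with time coordinate at most $\TimeOf(\vb*{u})$, so the Markov property detaches it from the occupation of $(\vb*{u},\vb*{v})$ and from the continuation at later times, and one arrives at the first-stage identity
\begin{equation*}
  \varphi_p(\vb*{x}) = \mathbb{P}_p(\vb*{o}\Leftrightarrow\vb*{x}) + \sum_{(\vb*{u},\vb*{v})} \mathbb{P}_p(\vb*{o}\Leftrightarrow\vb*{u})\, q_p(\vb*{v}-\vb*{u})\, \mathbb{P}_p\bigl(\vb*{v}\rightarrow\vb*{x}\ \text{off}\ \tilde{\mathcal{C}}(\vb*{u})\bigr),
\end{equation*}
where $\tilde{\mathcal{C}}(\vb*{u})$ is the cluster of $\vb*{u}$ in the configuration with $(\vb*{u},\vb*{v})$ declared vacant and ``off $\tilde{\mathcal{C}}(\vb*{u})$'' means that no bond of the connection is incident to that cluster (the sum also covers $\vb*{u}=\vb*{o}$). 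Using $\mathbb{P}_p(\vb*{o}\Leftrightarrow\vb*{y})=\KroneckerDelta{\vb*{o}}{\vb*{y}}+\pi_p^{(0)}(\vb*{y})$ and rewriting the last probability as $\varphi_p(\vb*{x}-\vb*{v})$ minus the probability that $\vb*{v}\rightarrow\vb*{x}$ while every connecting path meets $\tilde{\mathcal{C}}(\vb*{u})$ recasts this as $\varphi_p=\KroneckerDelta{\vb*{o}}{\bullet}+\pi_p^{(0)}+\bigl(\KroneckerDelta{\vb*{o}}{\bullet}+\pi_p^{(0)}\bigr)\Conv q_p\Conv\varphi_p-(\text{correction})$, the $N=0$ case of~\eqref{eq:op-laceExpansion}.

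The inductive step expands the correction in the same way. Once the factors already produced are stripped off, it is the probability of a ``further connection forced to meet a cluster'' event; decomposing it on its own first pivotal bond and using the Markov property again produces a nonnegative function $\pi_p^{(1)}$ --- the probability of an explicit diagram from $\vb*{o}$ to $\vb*{x}$ built from connections that are forced to intersect --- appearing both as a standalone term $\pi_p^{(1)}(\vb*{x})$ and in the convolution $(\pi_p^{(1)}\Conv q_p\Conv\varphi_p)(\vb*{x})$, together with a residual correction of the same shape carrying one more nested cluster constraint. Continuing, stage $n$ peels off $(-1)^n\pi_p^{(n)}$ --- again both standalone and convolved with $q_p\Conv\varphi_p$, with $\pi_p^{(n)}\geq 0$ the probability of an $n$-fold nested connection event --- and after stage $N$ a residual correction remains. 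Collecting the extracted pieces into $\Pi_p^{(N)}=\sum_{n=0}^{N}(-1)^n\pi_p^{(n)}$ reproduces~\eqref{eq:op-laceExpansion} exactly, with the leftover equal to $(-1)^{N+1}R_p^{(N+1)}$ and $R_p^{(N+1)}\geq 0$ since it is a probability; the resulting $\{\pi_p^{(n)}\}$ are the ones used in the diagrammatic bounds below.

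It remains to bound the remainder from above. By construction $R_p^{(N+1)}(\vb*{x})$ is the probability of an event described by $N+1$ successively nested restricted connections, the last of which ends at $\vb*{x}$. Bounding that last restricted connection by the unrestricted two-point function --- requiring a connection to avoid a fixed set only lowers its probability --- and the remaining $N$-fold nested structure (with its innermost restriction dropped) by the event defining $\pi_p^{(N)}$, then summing over the intermediate vertex, gives $0\leq R_p^{(N+1)}\leq\pi_p^{(N)}\Conv\varphi_p$. The step that carries the genuine content, and that I expect to be the main obstacle, is the repeated decoupling: verifying that the portions of each connection confined to disjoint time ranges, or to the region avoiding a given cluster, really do depend on disjoint (hence independent) families of bonds --- this is exactly where the Markov property of oriented percolation is used, in place of the nested-expectation/conditioning-on-a-cluster argument of~\cite{hs90p} --- and checking that the alternating signs combine so that the partial sums telescope precisely into $\Pi_p^{(N)}$ and the leftover carries the stated sign and magnitude.
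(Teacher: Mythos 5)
Your overall architecture---decompose on the first pivotal bond (well defined because every bond advances time by one), peel off doubly-connected events, iterate inclusion--exclusion on the restricted connection, and bound the remainder by dropping the innermost restriction---is exactly the route the paper relies on: it does not prove this proposition itself but imports it from Sakai's derivation via inclusion--exclusion and the Markov property (\cite[Proposition~4.1]{s01}), which is what you are reconstructing.

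The step that is wrong as written is your displayed first-stage identity, where the pivotal-bond contribution is factored as $\mathbb{P}_p(\vb*{o}\Leftrightarrow\vb*{u})\,q_p(\vb*{v}-\vb*{u})\,\mathbb{P}_p(\vb*{v}\rightarrow\vb*{x}\text{ off }\tilde{\mathcal{C}}(\vb*{u}))$, a product of unconditional probabilities. The set $\tilde{\mathcal{C}}$ is random and is determined by the very configuration on which $\{\vb*{o}\Leftrightarrow\vb*{u}\}$ is measured (and it is the cluster of $\vb*{o}$ with the bond removed, $\tilde{\mathcal{C}}^{\vb*{b}}(\vb*{o})$, that enters, not the cluster of $\vb*{u}$). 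Moreover, although $\{\vb*{o}\Rightarrow\vb*{u}\}$ is measurable with respect to bonds at times below $\TimeOf(\vb*{u})$, the restriction ``off $\tilde{\mathcal{C}}^{\vb*{b}}(\vb*{o})$'' refers to a cluster that extends to arbitrarily late times, so the two events are \emph{not} supported on disjoint bond families and the product factorization fails. The correct form keeps everything inside one expectation, $\sum_{\vb*{b}} q_p(\overline{\vb*{b}}-\underline{\vb*{b}})\,\mathbb{E}_p\bigl[\mathbbm{1}_{\{\vb*{o}\Rightarrow\underline{\vb*{b}}\}}\,\mathbb{P}_p(\overline{\vb*{b}}\rightarrow\vb*{x}\text{ off }A)\big|_{A=\tilde{\mathcal{C}}^{\vb*{b}}(\vb*{o})}\bigr]$, equivalently the single joint probability $\mathbb{P}_p(\{\vb*{o}\Rightarrow\underline{\vb*{b}}\}\cap E(\vb*{b},\vb*{x};\tilde{\mathcal{C}}^{\vb*{b}}(\vb*{o})))$ after one more inclusion--exclusion, which is precisely how the paper writes $\pi_p^{(1)}$. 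The failure of your naive product is exactly what the correction terms $\pi_p^{(n)}$, $n\geq1$, are there to quantify, so assuming it collapses the content of the expansion; you correctly flag the ``repeated decoupling'' as the crux, but that decoupling is the argument you would actually have to supply (or quote from \cite{s01}) rather than the displayed identity. The remainder bound and the sign bookkeeping are otherwise as you describe.
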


Assume that $\lim_{N\to\infty}\pi_p^{(N)}(\vb*{x})=0$ for every $\vb*{x}\in\opSpaceTime$.
This will be proved to hold by the absolute convergence of the alternating series of $\{\pi_p^{(N)}\}_{N=1}^{\infty}$
from Lemma~\ref{lem:precise-diagrammaticBounds} below.
Then, $R_p^{(N)}(\vb*{x}) \to 0$ as $N \to \infty$ for every $\vb*{x}\in\opSpaceTime$, so that the limit of the series
\begin{equation}
  \label{eq:def-sumOfLaceExpansionCoefficients}
  \Pi_p(\vb*{x}) \defeq \lim_{N\to\infty} \Pi_p^{(N)}(\vb*{x}) = \sum_{N=0}^{\infty}(-1)^N \pi_p^{(N)}(\vb*{x})
\end{equation}
is well-defined.

To state the precise expression of $\{\pi_p^{(N)}\}_{N=0}^{\infty}$, we introduce some notation.
Below, $\underline{\vb*{b}}$ and $\overline{\vb*{b}}$ denote the bottom and top of a bond $\vb*{b}$, respectively,
that is $\vb*{b}=(\underline{\vb*{b}}, \overline{\vb*{b}})$.
\begin{definition}
  Fix a bond configuration
  and let $\vb*{x}, \vb*{y}, \vb*{u}, \vb*{v}\in\opSpaceTime$.
  \begin{enumerate}[label=(\roman*)]
    \item
      Given a bond $\vb*{b}$, we define $\tilde{\mathcal{C}}^{\vb*{b}}(\vb*{x})$ to be the set of vertices
      connected to $\vb*{x}$ in the new configuration obtained by setting $\vb*{b}$ to be vacant.
    \item
      We say that a bond $(\vb*{u}, \vb*{v})$ is pivotal
      for $\vb*{x} \rightarrow \vb*{y}$
      if $\vb*{x} \rightarrow \vb*{u}$ occurs in $\tilde{\mathcal{C}}^{(\vb*{u}, \vb*{v})}(\vb*{x})$
      (i.e., $\vb*{x}$ is connected to $\vb*{u}$ without using $(\vb*{u}, \vb*{v})$)
      and if $\vb*{v}\rightarrow \vb*{y}$ occurs in the complement of $\tilde{\mathcal{C}}^{(\vb*{u}, \vb*{v})}(\vb*{x})$.
      Let $\mathtt{piv}(\vb*{x}, \vb*{y})$ be the set of pivotal bonds for the connection
      from $\vb*{x}$ to $\vb*{y}$.
    \item
      We say that $\vb*{x}$ is doubly connected to $\vb*{y}$, denoted by $\vb*{x}\Rightarrow\vb*{y}$,
      if either $\vb*{x}=\vb*{y}$ or $\vb*{x}\rightarrow\vb*{y}$ and $\mathtt{piv}(\vb*{x}, \vb*{y})=\varnothing$.
  \end{enumerate}
\end{definition}

Let
\[
  \tilde E_{\va*{b}_N}^{(N)}(\vb*{x}) = \Set{\vb*{o} \Rightarrow \underline{\vb*{b}}_1} \cap \bigcap_{i=1}^{N} E\qty(\vb*{b}_i, \underline{\vb*{b}}_{i+1}; \tilde{\mathcal{C}}^{\vb*{b}_i}(\overline{\vb*{b}}_{i-1})),
\]
where $\va*{b}_N = (\vb*{b}_1, \dots, \vb*{b}_N)$ is an ordered set of bonds on $\opSpaceTime$, $\underline{\vb*{b}}_0 = \vb*{o}$, $\overline{\vb*{b}}_{N+1} = \vb*{x}$ and
\begin{equation*}
  E(\vb*{b}, \vb*{x}; A) = \underbrace{\Set{\vb*{b} \rightarrow \vb*{x} \in A}}_{= \set{\vb*{b}\text{ is occupied}} \cap \set{\overline{\vb*{b}} \rightarrow \vb*{x}} \cap \set{\vb*{x}\in A}} \setminus \bigcup_{\vb*{b}'\in\mathtt{piv}(\overline{\vb*{b}}, \vb*{x})} \Set{\underline{\vb*{b}}' \in A}
\end{equation*}
for a bond $\vb*{b}$, a vertex $\vb*{x}\in\opSpaceTime$ and a set $A\subset\opSpaceTime$.
According to \cite[Proposition~4.1]{s01}, the lace expansion coefficients $\{\pi_p^{(N)}\}_{N=0}^{\infty}$ are defined as
\begin{equation*}
  \pi_p^{(N)}(\vb*{x}) =
  \begin{dcases}
    \mathbb{P}_p\qty(\vb*{o} \Rightarrow \vb*{x}) - \KroneckerDelta{\vb*{o}}{\vb*{x}} & [N=0],\\
    \sum_{\va*{b}_N} \mathbb{P}_p\qty(\tilde E_{\va*{b}_N}^{(N)}(\vb*{x})) & [N\geq 1].
  \end{dcases}
\end{equation*}

\subsection{Diagrammatic bounds}

We state upper bounds on the Fourier-Laplace transform of the sum
\eqref{eq:def-sumOfLaceExpansionCoefficients} of the alternating series of the lace expansion coefficients
in terms of basic diagrams.
To do so, we first introduce notations.
Let $\varphi_p^{(m)}(x, t) = m^t \varphi_p(x, t)$.
Recall the notation $\vb*{x}\in\opSpaceTime$.
We define basic diagrams as, for $\lambda, \rho\in\mathbb{N}$, $m>0$ and $k\in\mathbb{T}^d$,
\begin{gather}
  B_{p, m}^{(\lambda, \rho)} \defeq \sup_{\vb*{x}}\sum_{\vb*{y}} \qty(q_p^{\Conv \lambda} \Conv \varphi_p)(\vb*{y}) \qty(m^\rho q_p^{\Conv \rho} \Conv \varphi_p^{(m)})(\vb*{y} - \vb*{x}),
  \label{eq:def-bubble}\\
  T_{p, m}^{(\lambda, \rho)} \defeq \sup_{\vb*{x}}\sum_{\vb*{y}} \qty(q_p^{\Conv \lambda} \Conv \varphi_p^{\Conv 2})(\vb*{y}) \qty(m^\rho q_p^{\Conv \rho} \Conv \varphi_p^{(m)})(\vb*{y} - \vb*{x}),
  \label{eq:def-triangle}\\
  \flt V_{p, m}^{(\lambda, \rho)}(k) \defeq \sup_{\vb*{x}}\sum_{\vb*{y}} \qty(q_p^{\Conv \lambda} \Conv \varphi_p)(\vb*{y}) \qty(1 - \cos k\cdot y) \qty(m^\rho q_p^{\Conv \rho} \Conv \varphi_p^{(m)})(\vb*{y} - \vb*{x}).
  \label{eq:def-weighted-bubble}
\end{gather}
We represent the transition probability $q_p$ by a pair of either parallel lines or a line and a dot,
and we represent the two-point function $\varphi_p$ by a line segment.
For example,
\begin{equation}
  \label{eq:diagrammaticRepresentation}
  \begin{gathered}
    \varphi_p(\vb*{x}) =
    \begin{tikzpicture}[op diagram]
      \laceDraw (0,0) (0,2);
      \lacePutLabel[anchor=north] (0,0) {$\vb*{o}$};
      \lacePutLabel[anchor=south] (0,2) {$\vb*{x}$};
    \end{tikzpicture},\qquad
    \qty(q_p^{\Conv 2} \Conv \varphi_p)(\vb*{x}) =
    \begin{tikzpicture}[op diagram]
      \laceDraw[first=2] (0,0) (0,2);
      \lacePutLabel[anchor=north] (0,0) {$\vb*{o}$};
      \lacePutLabel[anchor=south] (0,2) {$\vb*{x}$};
    \end{tikzpicture},\\
    B_{p, m}^{(1, 2)} = \sup_{\vb*{x}}
    \begin{tikzpicture}[op diagram]
      \coordinate (O) at (-1.3,0);
      \coordinate (X) at (1.3,0.5);
      \coordinate (Y) at (0,2.5);
      \laceDraw[first=1, endpoint-shape=line] (O) (Y) node[vertex] {};
      \laceDraw[first=2, endpoint-shape=line] (X) (Y);
      \draw[decorate, decoration={brace, amplitude=5pt, raise=0.7ex}] (Y) -- (X) node [midway, sloped, above, yshift=1.5ex] {$m$};
      \node[anchor=north] at (O) {$\vb*{o}$};
      \node[anchor=north] at (X) {$\vb*{x}$};
    \end{tikzpicture},\qquad
    T_{p, m}^{(2, 1)} = \sup_{\vb*{x}}
    \begin{tikzpicture}[op diagram]
      \coordinate (O) at (-1.3,0);
      \coordinate (X) at (1.3,0.5);
      \coordinate (Y) at (1.3,2.5);
      \coordinate (W) at (-1.3,2);
      \laceDraw[first=2, endpoint-shape=line] (O) (W) node[vertex] {};
      \draw (W) -- (Y) node[vertex] {};
      \laceDraw[first=1, endpoint-shape=line] (X) (Y);
      \draw[decorate, decoration={brace, amplitude=5pt, raise=0.7ex}] (Y) -- (X) node [midway, sloped, above, yshift=1.5ex] {$m$};
      \node[anchor=north] at (O) {$\vb*{o}$};
      \node[anchor=north] at (X) {$\vb*{x}$};
    \end{tikzpicture},\qquad
    \flt V_{p, m}^{(2, 2)}(k) = \sup_{\vb*{x}}
    \begin{tikzpicture}[op diagram]
      \coordinate (O) at (-1.3,0);
      \coordinate (X) at (1.3,0.5);
      \coordinate (Y) at (0,2.5);
      \laceDraw[first=2, endpoint-shape=line] (O) (Y) node[vertex] {};
      \laceDraw[first=2, endpoint-shape=line] (X) (Y);
      \draw[decorate, decoration={brace, amplitude=5pt, raise=0.7ex}] (O) -- (Y) node [midway, sloped, above, yshift=1.5ex] {$1 - \cos k\cdot$};
      \draw[decorate, decoration={brace, amplitude=5pt, raise=0.7ex}] (Y) -- (X) node [midway, sloped, above, yshift=1.5ex] {$m$};
      \node[anchor=north] at (O) {$\vb*{o}$};
      \node[anchor=north] at (X) {$\vb*{x}$};
    \end{tikzpicture},
  \end{gathered}
\end{equation}
where the unlabeled vertices (short lines and dots) are summed over $\opSpaceTime$.
The segments emphasized by the braces mean weighted two-point functions or weighted space-time transition probabilities: $\varphi_p$ or $q_p$ multiplied by $m$'s or $1 - \cos k\cdot\bullet$.
Time increases from the beginning point to the ending point of a line.
This representation is useful to bound $\{\pi_p^{(N)}\}_{N=0}^{\infty}$ above.
Moreover, we divide $\Pi_p(\vb*{x})$ into two parts:
\begin{gather*}
  \Pi_p^\mathrm{even}(\vb*{x}) \coloneqq \sum_{N=1}^{\infty}\pi_p^{(2N)}(\vb*{x}),\\
  \Pi_p^\mathrm{odd}(\vb*{x}) \coloneqq \pi_p^{(1)}(\vb*{x}) - \pi_p^{(0)}(\vb*{x}) + \sum_{N=1}^{\infty}\pi_p^{(2N+1)}(\vb*{x}).
\end{gather*}
Although the latter contains the zeroth lace expansion coefficient, we name it ``odd'' for convenience.

Next, we show upper bounds on the lace expansion coefficients $\{\pi_p^{(N)}\}_{N=0}^{\infty}$.
Taking the sum of the bounds in Lemma~\ref{lem:ksp-bound} over $N\in\nnInt$ immediately implies Lemma~\ref{lem:precise-diagrammaticBounds}.
We state only their statements in this subsection, and we prove them in the next subsection.

\begin{lemma}\label{lem:ksp-bound}
  Let $N$ be an integer greater than or equal to $3$.
  When we multiply the upper bounds in Lemma~\ref{lem:xsp-bound} by $m^t$,
  \begin{align}
    \label{eq:sum-diagrammaticBounds-m-01}
    &\begin{aligned}[b]
      \abs{\flt\pi_p^{(0)}(0, m) - \flt\pi_p^{(1)}(0, m)} \leq {} \MoveEqLeft[0]
        \frac{1}{2} B_{p, m}^{(2, 2)} + B_{p, 1}^{(2, 2)} B_{p, m}^{(0, 2)} + \frac{3}{2} B_{p, 1}^{(2, 2)} B_{p, m}^{(1, 3)}\\
        & + 3 \qty(B_{p, m}^{(2, 2)})^2 + 3 B_{p, m}^{(2, 1)} T_{p, m}^{(2, 2)},
    \end{aligned}\\
    \label{eq:sum-diagrammaticBounds-m-2}
    &\begin{aligned}[b]
      \flt\pi_p^{(2)}(0, m) \leq {} \MoveEqLeft[0]
        B_{p, m}^{(2, 2)} B_{p, m}^{(1, 3)} + 2 \qty(B_{p, m}^{(2, 2)})^2 + 2 B_{p, m}^{(2, 1)} T_{p, m}^{(2, 2)}\\
        & + \frac{1}{2} B_{p, 1}^{(2, 2)} B_{p, m}^{(1, 3)} T_{p, m}^{(1, 2)} + B_{p, m}^{(2, 1)} T_{p, m}^{(1, 1)} T_{p, m}^{(1, 2)}\\
        & + \frac{1}{2} B_{p, 1}^{(2, 2)} B_{p, m}^{(1, 3)} T_{p, m}^{(2, 1)} + B_{p, m}^{(2, 1)} T_{p, m}^{(1, 1)} T_{p, m}^{(2, 1)},
    \end{aligned}\\
    \label{eq:sum-diagrammaticBounds-m-N}
    &\flt\pi_p^{(N)}(0, m) \leq
      \qty(B_{p, m}^{(1, 1)} + \frac{1}{2} B_{p, 1}^{(2, 2)} B_{p, m}^{(1, 3)} + T_{p, m}^{(1, 1)} B_{p, m}^{(1, 1)}) \qty(2 T_{p, m}^{(1, 1)})^{N - 1}.
  \end{align}
  When we multiply the upper bounds in Lemma~\ref{lem:xsp-bound} by $m^t$ and $t$,
  \begin{align}
    \label{eq:sum-diagrammaticBounds-mt-01}
    &\begin{aligned}[b]
      \MoveEqLeft \sum_{(x, t)} \abs{\pi_p^{(0)}(x, t) - \pi_p^{(1)}(x, t)} m^t t \leq
        \frac{1}{2} \qty(B_{p, m}^{(2, 2)} + T_{p, m}^{(2, 2)}) + B_{p, 1}^{(2, 2)} \qty(B_{p, m}^{(0, 2)} + T_{p, m}^{(0, 2)})\\
        & + \frac{3}{2} B_{p, 1}^{(2, 2)} \qty(2 B_{p, m}^{(1, 3)} + T_{p, m}^{(1, 3)})\\
        & + 6 B_{p, m}^{(2, 2)} \qty(B_{p, m}^{(2, 2)} + T_{p, m}^{(2, 2)})
        + 3 \qty(m T_{p, m}^{(1, 2)} T_{p, 1}^{(2, 2)} + B_{p, m}^{(1, 2)} T_{p, m}^{(2, 2)} + T_{p, m}^{(2, 2)} T_{p, m}^{(2, 1)}),
    \end{aligned}\\
    \label{eq:sum-diagrammaticBounds-mt-2}
    &\begin{aligned}[b]
      \MoveEqLeft \sum_{(x, t)} \pi_p^{(2)}(x, t) m^t t \leq
        B_{p, 1}^{(2, 2)} \qty(2 B_{p, m}^{(1, 3)} + T_{p, m}^{(1, 3)})\\
        & + 4 B_{p, m}^{(2, 2)} \qty(B_{p, m}^{(2, 2)} + T_{p, m}^{(2, 2)})
        + 2 \qty(m T_{p, m}^{(1, 2)} T_{p, 1}^{(2, 2)} + B_{p, m}^{(1, 2)} T_{p, m}^{(2, 2)} + T_{p, m}^{(2, 2)} T_{p, m}^{(2, 1)})\\
        & + \frac{1}{2} B_{p, 1}^{(2, 2)} \qty(3 B_{p, m}^{(1, 3)} T_{p, m}^{(1, 2)} + 2 T_{p, m}^{(1, 3)} T_{p, m}^{(1, 2)})
        + \qty(3 T_{p, m}^{(1, 2)} T_{p, m}^{(1, 1)} T_{p, m}^{(2, 1)} + B_{p, m}^{(1, 2)} T_{p, m}^{(1, 1)} T_{p, m}^{(2, 1)})\\
        & + \frac{1}{2} B_{p, 1}^{(2, 2)} \qty(2 B_{p, m}^{(1, 3)} T_{p, m}^{(2, 1)} + 2 T_{p, m}^{(1, 3)} T_{p, m}^{(2, 1)})
        + 3 T_{p, m}^{(2, 1)} T_{p, m}^{(1, 1)} T_{p, m}^{(2, 1)},
    \end{aligned}\\
    \label{eq:sum-diagrammaticBounds-mt-N}
    &\begin{aligned}[b]
      \MoveEqLeft \sum_{(x, t)} \pi_p^{(N)}(x, t) m^t t\\
      \leq {} &
        \qty(T_{p, m}^{(1, 1)} + \frac{1}{2} (2 B_{p, m}^{(2, 2)} B_{p, m}^{(1, 3)} + B_{p, m}^{(2, 2)} T_{p, m}^{(3, 1)}) + 2 \qty(T_{p, m}^{(1, 1)})^2) \qty(2 T_{p, m}^{(1, 1)})^{N - 1}\\
        & + \qty(N - 2) \qty(T_{p, m}^{(1, 1)} + \frac{1}{2} B_{p, m}^{(2, 2)} T_{p, m}^{(1, 3)} + \qty(T_{p, m}^{(1, 1)})^2) \qty(2 T_{p, m}^{(1, 1)})^{N - 2}\\
        & + \qty(T_{p, m}^{(1, 1)} + \frac{1}{2} B_{p, m}^{(2, 2)} T_{p, m}^{(1, 3)} + \qty(T_{p, m}^{(1, 1)})^2) \qty(2 T_{p, m}^{(1, 1)})^{N - 1}.
    \end{aligned}
  \end{align}
  When we multiply the upper bounds in Lemma~\ref{lem:xsp-bound} by $m^t$ and $1 - \cos k\cdot x$,
  \begin{align}
    \label{eq:sum-diagrammaticBounds-mcos-01}
    &\begin{aligned}[b]
      \MoveEqLeft \sum_{(x, t)} \abs{\pi_p^{(0)}(x, t) - \pi_p^{(1)}(x, t)} m^t \qty(1 - \cos k\cdot x)\\
      \leq {} &
        \frac{1}{2} \flt V_{p, m}^{(2, 2)}(k) + 2 \qty(\flt V_{p, 1}^{(2, 2)}(k) B_{p, m}^{(0, 2)} + B_{p, 1}^{(2, 2)} \flt V_{p, m}^{(1, 2)}(k))\\
        & + \frac{3}{2} B_{p, m}^{(2, 2)} \flt V_{p, m}^{(3, 1)}(k)\\
        & + 12 B_{p, m}^{(2, 2)} \flt V_{p, m}^{(2, 2)}(k)
        + 6 \qty(\flt V_{p, m}^{(1, 2)}(k) T_{p, m}^{(2, 2)} + T_{p, m}^{(2, 2)} \flt V_{p, m}^{(2, 1)}(k)),
    \end{aligned}\\
    \label{eq:sum-diagrammaticBounds-mcos-2}
    &\begin{aligned}[b]
      \MoveEqLeft \flt\pi_p^{(2)}(0, m) - \flt{\pi}_p^{(2)}(k, m)
      \leq
        B_{p, m}^{(2, 2)} \flt V_{p, m}^{(3, 1)}(k)\\
        & + 8 B_{p, m}^{(2, 2)} \flt V_{p, m}^{(2, 2)}(k)
        + 4 \qty(\flt V_{p, m}^{(1, 2)}(k) T_{p, m}^{(2, 2)} + T_{p, m}^{(2, 2)} \flt V_{p, m}^{(2, 1)}(k))\\
        & + B_{p, m}^{(2, 2)} \qty(\flt V_{p, m}^{(3, 1)}(k) T_{p, m}^{(2, 1)} + T_{p, m}^{(3, 1)} \flt V_{p, m}^{(2, 1)}(k))\\
        & + 3 \qty(\flt V_{p, m}^{(1, 2)}(k) T_{p, m}^{(1, 1)} T_{p, m}^{(2, 1)} + T_{p, m}^{(1, 2)} \flt V_{p, m}^{(1, 1)}(k) T_{p, m}^{(2, 1)} + T_{p, m}^{(1, 2)} T_{p, m}^{(1, 1)} \flt V_{p, m}^{(2, 1)}(k))\\
        & + B_{p, m}^{(2, 2)} \qty(\flt V_{p, m}^{(3, 1)}(k) T_{p, m}^{(1, 2)} + T_{p, m}^{(3, 1)} \flt V_{p, m}^{(1, 2)}(k))\\
        & + 3 \qty(\flt V_{p, m}^{(1, 2)}(k) T_{p, m}^{(1, 1)} T_{p, m}^{(1, 2)} + T_{p, m}^{(1, 2)} \flt V_{p, m}^{(1, 1)}(k) T_{p, m}^{(1, 2)} + T_{p, m}^{(1, 2)} T_{p, m}^{(1, 1)} \flt V_{p, m}^{(1, 2)}(k)),
    \end{aligned}\\
    \label{eq:sum-diagrammaticBounds-mcos-N}
    &\begin{aligned}[b]
      \MoveEqLeft \flt\pi_p^{(N)}(0, m) - \flt{\pi}_p^{(N)}(k, m)\\
      \leq {} &
        \qty(N + 1)
        \Biggl(
          \qty(\flt V_{p, m}^{(1, 1)}(k) + \frac{1}{2} B_{p, m}^{(2, 2)} \flt V_{p, m}^{(1, 3)}(k) + 2 T_{p, m}^{(1, 1)} \flt V_{p, m}^{(1, 1)}(k)) \qty(2 T_{p, m}^{(1, 1)})^{N - 1}\\
          & + 2 \qty(N - 2) \qty(T_{p, m}^{(1, 1)} + \frac{1}{2} B_{p, m}^{(2, 2)} T_{p, m}^{(1, 3)} + \qty(T_{p, m}^{(1, 1)})^2) \flt V_{p, m}^{(1, 1)}(k) \qty(2 T_{p, m}^{(1, 1)})^{N - 2}\\
          & + 2 \qty(T_{p, m}^{(1, 1)} + \frac{1}{2} B_{p, m}^{(2, 2)} T_{p, m}^{(1, 3)} + \qty(T_{p, m}^{(1, 1)})^2) \qty(2 T_{p, m}^{(1, 1)})^{N - 2} \flt V_{p, m}^{(1, 1)}(k)
        \Biggr).
    \end{aligned}
  \end{align}
\end{lemma}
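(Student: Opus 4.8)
The plan is to obtain Lemma~\ref{lem:ksp-bound} from the pointwise diagrammatic bounds of Lemma~\ref{lem:xsp-bound} by three ``weight‑and‑sum'' reductions, so the genuinely model‑specific input is Lemma~\ref{lem:xsp-bound} itself. I would prove that pointwise bound first: starting from the explicit representations $\pi_p^{(0)}(\vb*x)=\mathbb{P}_p(\vb*o\Rightarrow\vb*x)-\KroneckerDelta{\vb*o}{\vb*x}$ and $\pi_p^{(N)}(\vb*x)=\sum_{\va*b_N}\mathbb{P}_p(\tilde E_{\va*b_N}^{(N)}(\vb*x))$, apply the van den Berg--Kesten inequality — or, following \cite{s01}, the temporal Markov property, which for oriented percolation makes connections in disjoint time‑intervals genuinely independent — to split each nested connection event into a product of two‑point functions $\varphi_p$ and space‑time steps $q_p$. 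Summing over the bond sequence $\va*b_N$ and the internal vertices turns these products into convolutions, and the diagrams acquire the familiar ``ladder'' shape: a convolution of $O(N)$ bubble/triangle blocks, which is the source of the geometric factor $(2T_{p,m}^{(1,1)})^{N-1}$ in \eqref{eq:sum-diagrammaticBounds-m-N}. In this step one must keep the small combinatorial constants — notably the factor $1/2$ carried by the bubble with four $q_p$'s, coming from relabelling the two disjoint paths inside a doubly‑connected block — and exploit the partial cancellation between $\pi_p^{(0)}$ and $\pi_p^{(1)}$ so that $\abs{\pi_p^{(0)}-\pi_p^{(1)}}$ is controlled by diagrams with coefficient $1/2$ rather than $1$. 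This is why the three regimes $N\in\{0,1\}$, $N=2$, $N\ge 3$ are handled separately: for short ladders the cancellations and exact coefficients matter, while for $N\ge3$ the generic pattern applies.

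Given a pointwise bound $\pi_p^{(N)}(x,t)\le\sum_j U_j(x,t)$ with each $U_j$ a convolution of $\varphi_p$'s and $q_p$'s, the first weighting is the $m^t$ one. Since time is additive along a convolution, $m^{\TimeOf(\vb*x)}=\prod_i m^{t_i}$ over the factors, so $\sum_{(x,t)}U_j(x,t)m^t$ simply promotes each $\varphi_p$‑line to $\varphi_p^{(m)}$ and each $q_p$‑line to its $m$‑weighted version; one only needs to route the weight along a single top‑to‑bottom chain, which is why blocks lying ``below'' that chain retain the subscript $m=1$ (the $B_{p,1}^{(2,2)}$ factors). Matching the outcome against \eqref{eq:def-bubble}--\eqref{eq:def-weighted-bubble} then gives \eqref{eq:sum-diagrammaticBounds-m-01}--\eqref{eq:sum-diagrammaticBounds-m-N}.

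For the $m^t t$ weighting I would write $\TimeOf(\vb*x)$ as the sum of the time‑lengths of the segments along one spanning chain, bound it from above by the sum over \emph{all} segments, and distribute the factor: a $q_p^{\Conv n}$‑segment contributes only a deterministic constant $n$, while a $\varphi_p$‑segment carrying its own time‑length is handled by the elementary inequality $\TimeOf(\vb*x)\,\varphi_p(\vb*x)\le(\varphi_p^{\Conv2})(\vb*x)$ — which holds because $\varphi_p(x,t)\le\sum_y\varphi_p(y,s)\varphi_p(x-y,t-s)$ for every $s\in\{0,\dots,t\}$ by time‑slice independence, and summing over $s$ absorbs the factor $t+1$. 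This doubles one $\varphi_p$‑line, i.e.\ turns a bubble $B$ into a triangle $T$ with the same superscript, and the sum over which segment carries the weight produces the combinatorial prefactors $(N-2)$; collecting the terms yields \eqref{eq:sum-diagrammaticBounds-mt-01}--\eqref{eq:sum-diagrammaticBounds-mt-N}. For the $m^t(1-\cos k\cdot x)$ weighting, note first that the $x\mapsto-x$ symmetry of $\pi_p^{(N)}$ gives $\flt\pi_p^{(N)}(0,m)-\flt\pi_p^{(N)}(k,m)=\sum_{(x,t)}\pi_p^{(N)}(x,t)\,m^t(1-\cos k\cdot x)$ (and similarly for the $N\in\{0,1\}$ combination with absolute values); writing $k\cdot x$ as the sum of $k\cdot x_i$ over a spanning chain and using $1-\cos(\sum_i\theta_i)\le n\sum_i(1-\cos\theta_i)$ (Cauchy--Schwarz on $\sin(\theta_i/2)$), with $n\le N+1$ the number of segments, confines the displacement weight to a single segment; a $\varphi_p$‑segment carrying $1-\cos k\cdot y$ turns its bubble $B_{p,\cdot}^{(\lambda,\rho)}$ into $\flt V_{p,\cdot}^{(\lambda,\rho)}(k)$, and collecting gives \eqref{eq:sum-diagrammaticBounds-mcos-01}--\eqref{eq:sum-diagrammaticBounds-mcos-N}, the $(N+1)$ prefactor being the Cauchy--Schwarz factor.

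The main obstacle is not any single inequality but the bookkeeping: organising the $O(N)$‑segment ladder so that the three distribution rules apply uniformly, and then matching a large collection of $B$, $T$ and $\flt V$ diagrams — with the correct superscripts $(\lambda,\rho)$, the correct $m$‑versus‑$1$ subscripts, and the exact rational coefficients (including the $1/2$'s) — against the stated right‑hand sides. The short ladders $N\in\{0,1\}$ and $N=2$, where the $\pi_p^{(0)}$--$\pi_p^{(1)}$ cancellation and the precise constants are essential and the generic pattern does not yet apply, are the delicate part; once the per‑segment rules and the factor $(2T_{p,m}^{(1,1)})^{N-1}$ are in place, the $N\ge3$ bound is comparatively mechanical.
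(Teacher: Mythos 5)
Your plan matches the paper's proof: the paper likewise takes Lemma~\ref{lem:xsp-bound} as input, multiplies by $m^t$, $t$ or $1-\cos k\cdot x$, distributes the weights along the segments of each ladder diagram using exactly the three mechanisms you name (additivity of the time coordinate along convolutions, the inequality $t\,(q_p\Conv\varphi_p)(x,t)\le(q_p\Conv\varphi_p^{\Conv 2})(x,t)$ which turns a $B$ into a $T$, and the telescoping cosine bound of Lemma~\ref{lem:split-of-cosine} which produces the $(N+1)$ prefactor and the $\flt V$ diagrams), and then factorizes via translation invariance into $\sup$/$\sum$ blocks. The paper itself only carries out three representative examples and declares the remaining coefficient bookkeeping analogous, so your proposal is at the same level of completeness and is correct.
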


\begin{lemma}\label{lem:precise-diagrammaticBounds}
  Suppose that $2 T_{p, m}^{(1, 1)} < 1$.
  Then, the alternative series \eqref{eq:def-sumOfLaceExpansionCoefficients} absolutely converges, and the following inequalities hold for $p< p_\mathrm{c}$, $m<m_p$ and $k\in\mathbb{T}^d$:
  \begin{align}
    &\begin{aligned}[b]
      \MoveEqLeft \flt\Pi_p^\mathrm{even}(0, m) \leq
      \flt\pi_p^{(2)}(0, m)\\
      & + \qty(B_{p, m}^{(1, 1)} + \frac{1}{2} B_{p, 1}^{(2, 2)} B_{p, m}^{(1, 3)} + T_{p, m}^{(1, 1)} B_{p, m}^{(1, 1)}) \frac{\qty\big(2 T_{p, m}^{(1, 1)})^3}{1 - \qty\big(2 T_{p, m}^{(1, 1)})^2},
    \end{aligned}
    \label{eq:diagrammaticBounds-evenLaceZero}\\
    &\begin{aligned}[b]
      \MoveEqLeft[1] \flt\Pi_p^\mathrm{odd}(0, m) \leq
      \abs{\flt\pi_p^{(0)}(0, m) - \flt\pi_p^{(1)}(0, m)}\\
      & + \qty(B_{p, m}^{(1, 1)} + \frac{1}{2} B_{p, 1}^{(2, 2)} B_{p, m}^{(1, 3)} + T_{p, m}^{(1, 1)} B_{p, m}^{(1, 1)}) \frac{\qty\big(2 T_{p, m}^{(1, 1)})^2}{1 - \qty\big(2 T_{p, m}^{(1, 1)})^2},
    \end{aligned}
    \label{eq:diagrammaticBounds-oddLaceZero}\\
    &\begin{aligned}[b]
      \MoveEqLeft \sum_{(x, t)} \qty(\Pi_p^\mathrm{even}(x, t) + \Pi_p^\mathrm{odd}(x, t)) m^t t\\
      \leq {} & \sum_{(x, t)} \qty(\abs{\pi_p^{(0)}(x, t) - \pi_p^{(1)}(x, t)} + \pi_p^{(2)}(x, t)) m^t t\\
      & + \qty(T_{p, m}^{(1, 1)} + \frac{1}{2} (2 B_{p, m}^{(2, 2)} B_{p, m}^{(1, 3)} + B_{p, m}^{(2, 2)} T_{p, m}^{(3, 1)}) + 2 \qty(T_{p, m}^{(1, 1)})^2) \frac{\qty\big(2 T_{p, m}^{(1, 1)})^2}{1 - 2 T_{p, m}^{(1, 1)}}\\
      & + \qty(T_{p, m}^{(1, 1)} + \frac{1}{2} B_{p, m}^{(2, 2)} T_{p, m}^{(1, 3)} + \qty(T_{p, m}^{(1, 1)})^2) \frac{2 T_{p, m}^{(1, 1)}}{\qty\big(1 - 2 T_{p, m}^{(1, 1)})^2}\\
      & + \qty(T_{p, m}^{(1, 1)} + \frac{1}{2} B_{p, m}^{(2, 2)} T_{p, m}^{(1, 3)} + \qty(T_{p, m}^{(1, 1)})^2) \frac{\qty\big(2 T_{p, m}^{(1, 1)})^2}{1 - 2 T_{p, m}^{(1, 1)}},
    \end{aligned}
    \label{eq:diagrammaticBounds-laceEmbedded}\\
    &\begin{aligned}[b]
      \MoveEqLeft \sum_{(x, t)} \qty(\Pi_p^\mathrm{even}(x, t) + \Pi_p^\mathrm{odd}(x, t)) m^t \qty(1 - \cos k\cdot x)\\
      \leq {} & \sum_{(x, t)} \abs{\pi_p^{(0)}(x, t) - \pi_p^{(1)}(x, t)} m^t \qty(1 - \cos k\cdot x)
      + \qty(\flt\pi_p^{(2)}(0, m) - \flt\pi_p^{(2)}(k, m))\\
      & + \qty(\flt V_{p, m}^{(1, 1)}(k) + \frac{1}{2} B_{p, m}^{(2, 2)} \flt V_{p, m}^{(1, 3)}(k) + 2 T_{p, m}^{(1, 1)} \flt V_{p, m}^{(1, 1)}(k)) \frac{8 \qty\big(T_{p, m}^{(1, 1)})^2 \qty\big(2 - 3 T_{p, m}^{(1, 1)})}{\qty\big(1 - 2 T_{p, m}^{(1, 1)})^2}\\
      & + 2 \qty(T_{p, m}^{(1, 1)} + \frac{1}{2} B_{p, m}^{(2, 2)} T_{p, m}^{(1, 3)} + \qty(T_{p, m}^{(1, 1)})^2) \flt V_{p, m}^{(1, 1)}(k) \frac{8 T_{p, m}^{(1, 1)} \qty\big(1 - T_{p, m}^{(1, 1)})}{\qty\big(1 - 2 T_{p, m}^{(1, 1)})^3}\\
      & + 2 \qty(T_{p, m}^{(1, 1)} + \frac{1}{2} B_{p, m}^{(2, 2)} T_{p, m}^{(1, 3)} + \qty(T_{p, m}^{(1, 1)})^2) \flt V_{p, m}^{(1, 1)}(k) \frac{4 T_{p, m}^{(1, 1)} \qty\big(2 - 3 T_{p, m}^{(1, 1)})}{\qty\big(1 - 2 T_{p, m}^{(1, 1)})^2}.
    \end{aligned}
    \label{eq:diagrammaticBounds-LaceDiff}
  \end{align}
  Here, upper bounds on $\abs*{\flt\pi_p^{(0)}(0, m) - \flt\pi_p^{(1)}(0, m)}$, $\flt\pi_p^{(2)}(0, m)$, etc. are given by Lemma~\ref{lem:ksp-bound}.
\end{lemma}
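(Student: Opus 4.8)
The plan is to obtain every inequality in the statement by summing the matching per-$N$ estimate of Lemma~\ref{lem:ksp-bound} over $N$, after peeling off the low-order contributions $N=0,1,2$---which the statement deliberately keeps symbolic, precisely because Lemma~\ref{lem:ksp-bound} already isolates exactly those combinations---from the tail $N\ge 3$, where every line of Lemma~\ref{lem:ksp-bound} carries a factor $(2T_{p,m}^{(1,1)})^{N-1}$ or $(2T_{p,m}^{(1,1)})^{N-2}$. Throughout I would write $\tau\defeq 2T_{p,m}^{(1,1)}$, so that $0\le\tau<1$ by hypothesis.

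First I would establish the asserted absolute convergence of \eqref{eq:def-sumOfLaceExpansionCoefficients}. By \eqref{eq:sum-diagrammaticBounds-m-N}, for $N\ge 3$ one has $\flt\pi_p^{(N)}(0,m)\le C_1\tau^{N-1}$ with $C_1\defeq B_{p,m}^{(1,1)}+\tfrac12 B_{p,1}^{(2,2)}B_{p,m}^{(1,3)}+T_{p,m}^{(1,1)}B_{p,m}^{(1,1)}<\infty$, so $\sum_{N\ge 3}\flt\pi_p^{(N)}(0,m)\le C_1\tau^2/(1-\tau)<\infty$; together with the finite $N=0,1,2$ terms this gives $\sum_{N\ge 0}\flt\pi_p^{(N)}(0,m)<\infty$, hence $\sum_{N\ge 0}\sum_{\vb*{x}}\pi_p^{(N)}(\vb*{x})m^{\TimeOf(\vb*{x})}<\infty$. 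This yields the absolute convergence of \eqref{eq:def-sumOfLaceExpansionCoefficients}, forces $\pi_p^{(N)}(\vb*{x})\to 0$ (the assumption invoked after \eqref{eq:op-laceExpansion} to make $\Pi_p$ and its Fourier--Laplace transform well defined), and justifies---by Tonelli---every interchange of $\sum_N$ with $\sum_{\vb*{x}}$ and with the weights $m^t$, $t$, and $1-\cos k\cdot x$ used below.

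Then I would carry out the regrouping line by line. For \eqref{eq:diagrammaticBounds-evenLaceZero}, $\flt\Pi_p^{\mathrm{even}}(0,m)=\flt\pi_p^{(2)}(0,m)+\sum_{N\ge 2}\flt\pi_p^{(2N)}(0,m)$ and I bound the tail by \eqref{eq:sum-diagrammaticBounds-m-N} via $\sum_{N\ge 2}\tau^{2N-1}=\tau^3/(1-\tau^2)$; for \eqref{eq:diagrammaticBounds-oddLaceZero}, $\flt\Pi_p^{\mathrm{odd}}(0,m)=\bigl(\flt\pi_p^{(1)}(0,m)-\flt\pi_p^{(0)}(0,m)\bigr)+\sum_{N\ge 1}\flt\pi_p^{(2N+1)}(0,m)\le\abs{\flt\pi_p^{(0)}(0,m)-\flt\pi_p^{(1)}(0,m)}+C_1\sum_{N\ge 1}\tau^{2N}$ and $\sum_{N\ge 1}\tau^{2N}=\tau^2/(1-\tau^2)$. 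For the two weighted estimates I use $\Pi_p^{\mathrm{even}}+\Pi_p^{\mathrm{odd}}=(\pi_p^{(1)}-\pi_p^{(0)})+\pi_p^{(2)}+\sum_{N\ge 3}\pi_p^{(N)}$, pull out the $N=0,1,2$ block (with $\pi_p^{(1)}-\pi_p^{(0)}\le\abs{\pi_p^{(0)}-\pi_p^{(1)}}$ pointwise), insert \eqref{eq:sum-diagrammaticBounds-mt-N} for \eqref{eq:diagrammaticBounds-laceEmbedded} and \eqref{eq:sum-diagrammaticBounds-mcos-N} for \eqref{eq:diagrammaticBounds-LaceDiff} for $N\ge 3$, and use the reflection symmetry $\pi_p^{(N)}(x,t)=\pi_p^{(N)}(-x,t)$ to identify $\flt\pi_p^{(N)}(0,m)-\flt\pi_p^{(N)}(k,m)$ with $\sum_{(x,t)}\pi_p^{(N)}(x,t)m^t(1-\cos k\cdot x)$. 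The tails that arise are the geometric series $\sum_{N\ge 3}\tau^{N-1}=\tau^2/(1-\tau)$ and $\sum_{N\ge 3}\tau^{N-2}=\tau/(1-\tau)$, the arithmetico-geometric series $\sum_{N\ge 3}(N-2)\tau^{N-2}=\tau/(1-\tau)^2$, and---because of the extra factor $N+1$ in \eqref{eq:sum-diagrammaticBounds-mcos-N}---also $\sum_{N\ge 3}(N+1)\tau^{N-1}=\tau^2(4-3\tau)/(1-\tau)^2$, $\sum_{N\ge 3}(N+1)\tau^{N-2}=\tau(4-3\tau)/(1-\tau)^2$, and $\sum_{N\ge 3}(N+1)(N-2)\tau^{N-2}=2\tau(2-\tau)/(1-\tau)^3$; substituting $\tau=2T_{p,m}^{(1,1)}$ turns these into the coefficients written in \eqref{eq:diagrammaticBounds-laceEmbedded}--\eqref{eq:diagrammaticBounds-LaceDiff} (for instance $\tau^2(4-3\tau)/(1-\tau)^2=8(T_{p,m}^{(1,1)})^2(2-3T_{p,m}^{(1,1)})/(1-2T_{p,m}^{(1,1)})^2$ and $2\tau(2-\tau)/(1-\tau)^3=8T_{p,m}^{(1,1)}(1-T_{p,m}^{(1,1)})/(1-2T_{p,m}^{(1,1)})^3$).

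The substance of the argument is not any individual estimate but the bookkeeping: matching the many lines of Lemma~\ref{lem:ksp-bound} to the correct piece of the even/odd decomposition while respecting that they hold only for $N\ge 3$, and evaluating the three distinct families of tails (plain geometric, $(N-2)$-weighted, and the $(N+1)$- and $(N+1)(N-2)$-weighted ones feeding \eqref{eq:diagrammaticBounds-LaceDiff}) in closed form without arithmetic slips. I expect the $(N+1)$-weighted sums, and the coefficient matching they require, to be where most care is needed; the convergence, the Tonelli interchanges, the pointwise triangle inequality for the $N=0,1$ block, and the reflection symmetry used for the $1-\cos k\cdot x$ weighted bounds are all routine.
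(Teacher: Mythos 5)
Your proposal is correct and follows the same route as the paper, which derives Lemma~\ref{lem:precise-diagrammaticBounds} precisely by summing the per-$N$ bounds of Lemma~\ref{lem:ksp-bound} over $N\ge 3$ after isolating the $N=0,1,2$ contributions; your closed-form evaluations of the geometric, $(N-2)$-weighted, $(N+1)$-weighted and $(N+1)(N-2)$-weighted tails all check out and reproduce the stated coefficients after substituting $\tau=2T_{p,m}^{(1,1)}$. The convergence, Tonelli, pointwise triangle-inequality and reflection-symmetry steps you flag as routine are indeed left implicit in the paper.
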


The proof of Lemma~\ref{lem:ksp-bound} is based on the diagrammatic representations of $\{\pi_p^{(N)}\}_{N=0}^{\infty}$ and the methods in \cite{Handa-Kamijima-Sakai,hs05,s07op}.
The farmer is to bound $\{\pi_p^{(N)}\}_{N=0}^{\infty}$ above in terms of sums of products of two-point functions.
The latter is to split events of percolation connections into some events depending on whether or not each line collapses.
Lemma~\ref{lem:ksp-bound} is a well-known result in this sense, hence we explain its proof in Appendix~\ref{sec:proof-diagrammaticBounds}.
However, these upper bounds are a little sharper than previous research in a certain sense because we take care of the explicit form of not only the leading terms but also the remainder terms.
They are often expressed as the order $\order{d^{-1}}$ relying on $d\gg d_{\mathrm{c}}$ (or the order $\order{L^{-d}}$ relying on $L\gg 1$ in the case of the spread-out model).
We believe that the upper bounds are worth showing in this paper.

In particular, each of the first term in \eqref{eq:sum-diagrammaticBounds-m-01}, \eqref{eq:sum-diagrammaticBounds-mt-01} and \eqref{eq:sum-diagrammaticBounds-mcos-01} is a leading term.
The coefficient $1/2$ results from \cite[Equation (3.11)]{hs05}.
The sum $\lambda + \rho$ of the superscript of $B_{p, m}^{(\lambda, \rho)}$ (similarly, $T_{p, m}^{(\lambda, \rho)}$ and $\flt V_{p, m}^{(\lambda, \rho)}(k)$) means how many $q_p$'s the diagram contains at least.
For example, the leading term $B_{p, m}^{(2, 2)} / 2$ in \eqref{eq:sum-diagrammaticBounds-m-01} contains at least four $q_p$'s.
If one isolates the diagrams containing six $q_p$'s from error terms and specifies their coefficients, then one can obtain more precise upper bounds than Lemma~\ref{lem:ksp-bound}.
However, too many $q_p$'s aggravate upper bounds on the basic diagrams due to many multiplicative constants $K_1$'s (See Lemma~\ref{lem:basic-diagrams}).
That is a reason why we isolate the diagrams containing only four $q_p$'s in this paper.

\section{Diagrammatic bounds on the bootstrap functions}\label{sec:improvedBounds}

In this section, using the lace expansion, we show upper bounds on $\{g_i\}_{i=1}^{3}$
in terms of the lace expansion coefficients.
As a consequence, we obtain the next lemma.
A sufficient condition \eqref{eq:smallCondition-laceCoefficients} below is to be verified after we complete the bootstrap argument.
Recall $\Pi_p(\vb*{x}) = \Pi_p^\mathrm{even}(\vb*{x}) - \Pi_p^\mathrm{odd}(\vb*{x})$.

\begin{lemma}\label{lem:improvedBounds}
  Suppose that $2 T_{p, m}^{(1, 1)} < 1$
  and that $B_{p, m}^{(\lambda, \rho)}$, $T_{p, m}^{(\lambda, \rho)}$, $\flt V_{p, m}^{(\lambda, \rho)}(k)$ for any $\rho, \lambda\in\mathbb{N}$ are so small that the inequality
  \begin{equation}
    \label{eq:smallCondition-laceCoefficients}
    \sum_{n=0}^{\infty} \flt\pi_p(0, m) < 1
  \end{equation}
  holds.
  Then, we have
  \begin{gather}
    g_1(p, m) \leq \frac{1}{1 - \flt\Pi_p^\mathrm{odd}(0, m)},
    \label{eq:improvedBound-g1}\\
    \begin{aligned}[b]
      \MoveEqLeft g_2(p, m)\\
      \leq {} & \frac{1 + \flt\Pi_p^\mathrm{even}(0, m) + \flt\Pi_p^\mathrm{odd}(0, m)}{1 - \flt\Pi_p^\mathrm{odd}(0, m)}
        + \frac{2 K_2 \flt\Pi_p^\mathrm{even}(0, m)}{1 - \flt\Pi_p^\mathrm{odd}(0, m)}\\
        & + \frac{2 K_1 K_2}{1 - \flt\Pi_p^\mathrm{odd}(0, m)}\\
        & \times \qty\bigg(
          \pi \sum_{(x, t)} \qty\Big(
            \Pi_p^\mathrm{even}(x, t) + \Pi_p^\mathrm{odd}(x, t)
          )
          m^t t
        )\\
        & \quad \vee \qty\bigg(
          \sum_{(x, t)} \qty\Big(
            \Pi_p^\mathrm{even}(x, t) + \Pi_p^\mathrm{odd}(x, t)
          )
          m^t \frac{1 - \cos k\cdot x}{1 - \ft D(k)}
        ),
    \end{aligned}
    \label{eq:improvedBound-g2}\\
    \begin{aligned}[b]
      \MoveEqLeft g_3(p, m)\\
      \leq {} & \Biggl( 1 \vee \frac{g_2(p, m)}{1 - \flt\Pi_p^\mathrm{even}(0, m) - \flt\Pi_p^\mathrm{odd}(0, m)}\Biggr)^3 K_1^2\\
      &\quad \times \Biggl(
        1 + 2 \qty(\flt\Pi_p^\mathrm{even}(0, m) + \flt\Pi_p^\mathrm{odd}(0, m))\\
        & + 2 \norm{
          \frac{
            \qty\big(\flt\Pi_p^\mathrm{even}(0, m) + \flt\Pi_p^\mathrm{odd}(0, m))
            - \qty\big(\flt\Pi_p^\mathrm{even}(\bullet, m) + \flt\Pi_p^\mathrm{odd}(\bullet, m))
          }{
            1 - \ft D(\bullet)
          }
        }_\infty
      \Biggr)^2.
    \end{aligned}
    \label{eq:improvedBound-g3}
  \end{gather}
\end{lemma}

We show the proofs of \eqref{eq:improvedBound-g1}--\eqref{eq:improvedBound-g3} in Section~\ref{sec:improvedBound-g1}--\ref{sec:improvedBound-g3}, respectively, with required lemmas.
Their lemmas are proved in Section~\ref{sec:proof-lemmas-improvedBounds} in a lump sum.

Assume the absolute convergence of \eqref{eq:def-sumOfLaceExpansionCoefficients} (in other words, \eqref{eq:smallCondition-laceCoefficients}).
Then, taking the Fourier-Laplace transform on both sides in \eqref{eq:op-laceExpansion} implies that
\begin{equation}
  \label{eq:FourierLaplace-recursionEquation}
  \flt\varphi_p(k, z) = \frac{1 + \flt\Pi_p(k, z)}{1 - \flt q_p(k, z) \qty\big(1 + \flt\Pi_p(k, z))}.
\end{equation}
We use this expression many times.

\subsection{An upper bound for $g_1(p, m)$}\label{sec:improvedBound-g1}

\begin{proof}[Proof of \eqref{eq:improvedBound-g1} in Lemma~\ref{lem:improvedBounds}]
  Note that
  $\flt\varphi_p(0, m) \geq 1$
  since $m\geq 1$.
  This inequality and \eqref{eq:FourierLaplace-recursionEquation} yield
  \begin{equation*}
    \flt\varphi_p(0, m) = \frac{1 + \flt\Pi_p(0, m)}{1 - p m \qty\big(1 + \flt\Pi_p(0, m))}
    \iff pm = \frac{1}{1 + \flt\Pi_p(0, m)} - \flt\varphi_p(0, m)^{-1}.
  \end{equation*}
  Thus, we obtain
  \begin{equation}
    g_1(p, m) \leq \frac{1}{1 + \flt\Pi_p^\mathrm{even}(0, m) - \flt\Pi_p^\mathrm{odd}(0, m)} \leq \frac{1}{1 - \flt\Pi_p^\mathrm{odd}(0, m)}.
  \end{equation}
\end{proof}

\subsection{An upper bound for $g_2(p, m)$}\label{sec:improvedBound-g2}

As preliminaries, we show the statements of one proposition and two lemmas in the following.
The next proposition plays the most important role.
Notice that $\set{k\in\mathbb{T}^d | \ft D(k) \geq 0} = \interval{-\pi / 2}{\pi / 2}^d$ on $\opSpace$.

\begin{proposition}\label{prp:restricting-g2}
  On the BCC lattice $\opSpace$, it suffices to attend to only the range $\interval{-\pi / 2}{\pi / 2}^d$ in order to calculate the supremum in $g_2(p, m)$ with respect to $k$:
  \[
    g_2(p, m)
    = \sup_{\substack{k\in\mathbb{T}^d,\\ z\in\mathbb{C}\colon\abs{z}\in\set{1, m}}}\frac{\abs{\flt\varphi_p(k, z)}}{\abs*{\ft S_{\mu_p(z)}(k)}}
    = \sup_{\substack{k\in\interval{-\pi / 2}{\pi / 2}^d,\\ z\in\mathbb{C}\colon\abs{z}\in\set{1, m}}}\frac{\abs{\flt\varphi_p(k, z)}}{\abs*{\ft S_{\mu_p(z)}(k)}}.
  \]
\end{proposition}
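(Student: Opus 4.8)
The plan is to exploit the product structure $\ft D(k)=\prod_j\cos k_j$ of the BCC one-step distribution together with a parity constraint on the support of $\varphi_p$. The key point is that the map that sends a Fourier argument outside $\interval{-\pi/2}{\pi/2}^d$ back into it by subtracting $\pi$ from the offending coordinates acts on the $g_2$-ratio only through a bookkeeping sign, which can be absorbed by replacing $z$ with $-z$.

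\textbf{Step 1 (parity of the support).} Every time-oriented path contributing to $\varphi_p(x,t)$ consists of $t$ bonds, each with spatial displacement in $\mathscr{N}^d$, so each spatial coordinate changes by $\pm1$ at each step; hence $\varphi_p(x,t)\neq 0$ forces $x_j\equiv t\pmod 2$ for all $j=1,\dots,d$, i.e.\ $\Napier^{\imag\pi x_j}=(-1)^t$. (Equivalently, this follows from $\varphi_p(x,t)\le Q_p(x,t)=p^tD^{\conv t}(x)$ and the support of $D^{\conv t}$.) Consequently, for any $J\subseteq\{1,\dots,d\}$, writing $e_J\in\{0,1\}^d$ for the indicator vector of $J$, one has termwise in the series $\flt\varphi_p(k,z)=\sum_{(x,t)}\varphi_p(x,t)\Napier^{\imag k\cdot x}z^t$ that $\Napier^{\imag\pi e_J\cdot x}=\Napier^{\imag\pi\sum_{j\in J}x_j}=\bigl((-1)^{\card J}\bigr)^t$, so
\[
  \flt\varphi_p(k+\pi e_J,z)=\flt\varphi_p\bigl(k,(-1)^{\card J}z\bigr).
\]

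\textbf{Step 2 (the denominator transforms the same way).} From $\ft D(k)=\prod_j\cos k_j$ we get $\ft D(k+\pi e_J)=(-1)^{\card J}\ft D(k)$. Combining this with $\ft S_\mu(k)=(1-\mu\ft D(k))^{-1}$ and with the elementary identity $\mu_p(-z)=-\mu_p(z)$ (immediate from \eqref{eq:scaledParameter}, since $\abs{-z}=\abs z$ and $\Napier^{\imag\arg(-z)}=-\Napier^{\imag\arg z}$, so that $\mu_p((-1)^{\card J}z)=(-1)^{\card J}\mu_p(z)$), we obtain
\[
  \ft S_{\mu_p(z)}(k+\pi e_J)=\bigl(1-\mu_p(z)(-1)^{\card J}\ft D(k)\bigr)^{-1}=\ft S_{\mu_p((-1)^{\card J}z)}(k).
\]
Together with Step 1 this shows the $g_2$-integrand is invariant under the simultaneous substitution $k\mapsto k+\pi e_J$, $z\mapsto(-1)^{\card J}z$:
\[
  \frac{\abs{\flt\varphi_p(k+\pi e_J,z)}}{\abs{\ft S_{\mu_p(z)}(k+\pi e_J)}}
  =\frac{\abs{\flt\varphi_p\bigl(k,(-1)^{\card J}z\bigr)}}{\abs{\ft S_{\mu_p((-1)^{\card J}z)}(k)}}.
\]

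\textbf{Step 3 (reduction).} Given $k\in\mathbb{T}^d$ and $z$ with $\abs z\in\{1,m\}$, set $J=\{j:\abs{k_j}>\pi/2\}$, let $k'\in\interval{-\pi/2}{\pi/2}^d$ be obtained from $k$ by replacing $k_j$ with $k_j-\pi\sgn(k_j)$ for each $j\in J$ (so that $k\equiv k'+\pi e_J$ modulo $2\pi\mathbb{Z}^d$, and everything is $2\pi$-periodic in each coordinate), and put $z'=(-1)^{\card J}z$, which still satisfies $\abs{z'}\in\{1,m\}$. By Step 2 the value of the $g_2$-ratio at $(k,z)$ equals its value at $(k',z')$, a point of the restricted domain. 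Hence the supremum over $k\in\mathbb{T}^d$ is $\le$ the supremum over $k\in\interval{-\pi/2}{\pi/2}^d$; the reverse inequality is trivial, giving the claimed equality.

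There is no serious obstacle here: the only care required is to transfer the sign $(-1)^{\card J}$ produced by the coordinate shift simultaneously onto $z$ (inside $\flt\varphi_p$, via the parity of Step 1) and onto the spectral parameter (inside $\ft S$, via $\mu_p(-z)=-\mu_p(z)$) so that the two cancellations occur in the same place. Boundary configurations such as $k=0$ or $\abs z=m_p$ need no separate treatment, since we are only asserting equality of two suprema.
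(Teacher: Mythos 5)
Your proof is correct and rests on exactly the same idea as the paper's: the BCC parity constraint (all coordinates of $x$ congruent to $t$ mod $2$) lets you move $k$ into $\interval{-\pi/2}{\pi/2}^d$ by a half-period shift while transferring the resulting sign onto the temporal variable, and the Green function $\ft S_{\mu_p(z)}$ transforms identically. Your packaging via $k\mapsto k+\pi e_J$, $z\mapsto(-1)^{\card J}z$ together with $\mu_p(-z)=-\mu_p(z)$ is a slightly cleaner, unified version of the paper's argument, which uses reflections $\tilde k_j=\pi-k_j$ and splits into cases according to the parity of the number of reflected coordinates (reflecting $\theta$ in the odd case).
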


\begin{lemma}  \label{lem:upperBound-Green}
  For every $k\in\mathbb{T}^d$ satisfying $\ft D(k) \geq 0$, $r\in\interval{0}{1}$ and $\theta\in\mathbb{T}$,
  \begin{equation}
    \label{eq:upperBound-Green}
    \abs{1 - r \Napier^{\imag\theta} \ft D(k)} \geq \frac{1}{2} \qty(\frac{\abs{\theta}}{\pi} + 1 - \ft D(k)).
  \end{equation}
\end{lemma}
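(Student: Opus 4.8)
The plan is to remove the lattice from the problem and reduce \eqref{eq:upperBound-Green} to a one‑variable inequality, then dispatch that by two elementary estimates. First, since $r$ and $\ft D(k)$ are real, complex conjugation gives $\abs{1 - r\Napier^{\imag\theta}\ft D(k)} = \abs{1 - r\Napier^{-\imag\theta}\ft D(k)}$, while the right‑hand side of \eqref{eq:upperBound-Green} depends only on $\abs\theta$; so we may assume $\theta\in\interval{0}{\pi}$. Set $s\defeq r\ft D(k)$, which is $\geq 0$ by the hypothesis $\ft D(k)\geq 0$; since $r\leq 1$ we have $s\leq\ft D(k)$, hence $1 - s\geq 1 - \ft D(k)$. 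Therefore it suffices to prove the lattice‑free bound
\[
  \abs{1 - s\Napier^{\imag\theta}} \geq \frac12\qty(\frac{\theta}{\pi} + 1 - s)
  \qquad\text{for } s\geq 0,\ \theta\in\interval{0}{\pi}.
\]

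To establish this I would square both sides — legitimate since the right‑hand side is nonnegative — and use $\abs{1 - s\Napier^{\imag\theta}}^2 = (1-s)^2 + 2s(1-\cos\theta)$, turning the claim into $(1-s)^2 + 2s(1-\cos\theta)\geq\frac14\qty(\frac\theta\pi + 1-s)^2$. Applying $(a+b)^2\leq 2(a^2+b^2)$ to the right‑hand side reduces this to $\frac12(1-s)^2 + 2s(1-\cos\theta)\geq\frac{\theta^2}{2\pi^2}$. Now invoke Jordan's inequality in the form $1-\cos\theta = 2\sin^2(\theta/2)\geq 2\theta^2/\pi^2$ (valid for $\abs\theta\leq\pi$) and set $t\defeq\theta^2/\pi^2\in\interval{0}{1}$; the remaining inequality is $\frac12(1-s)^2\geq t\qty(\frac12 - 4s)$. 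If $s\geq\frac18$ the right‑hand side is $\leq 0$ and there is nothing to prove; if $s<\frac18$, bound $t\leq 1$ and use $(1-s)^2 = 1 - 2s + s^2\geq 1 - 8s$ (equivalently $s(6+s)\geq 0$) to get $\frac12(1-s)^2\geq\frac12 - 4s\geq t\qty(\frac12 - 4s)$. Combining with the reductions above and with $1 - s\geq 1 - \ft D(k)$ yields \eqref{eq:upperBound-Green}.

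Since the argument is this short there is no genuinely hard step; the point that needs care is the bookkeeping of constants, because the route is lossy. Both lossy estimates — replacing $(\frac\theta\pi + 1-s)^2$ by $2(\frac{\theta^2}{\pi^2} + (1-s)^2)$ and replacing $1-\cos\theta$ by $2\theta^2/\pi^2$ — happen to be simultaneously tight at $(\theta,s) = (\pi,0)$, which is exactly the equality case $\abs{1 - s\Napier^{\imag\theta}} = 1 = \frac12(\frac\theta\pi + 1-s)$ of \eqref{eq:upperBound-Green}; so the constant $\frac12$ is the best this approach can give, and one must be sure not to weaken it at any intermediate stage. The other equality locus $\theta = 0,\ s = 1$ is clearly respected by both reductions as well. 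A less streamlined alternative would be to keep $(1-s)^2 + 2s(1-\cos\theta)\geq\frac14(\frac\theta\pi + 1-s)^2$ after applying only Jordan's inequality, treat it as a quadratic in $\theta/\pi$, and analyse its discriminant; this also works but forces a case split near $s = 1$, which the two‑line reduction above avoids.
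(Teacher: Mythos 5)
Your proof is correct, and it takes a genuinely different route from the paper's. The paper sets $\xi=\ft D(k)$, forms $f_r(\xi,\theta)=4\abs{1-r\Napier^{\imag\theta}\xi}^2-(\abs{\theta}/\pi+1-\xi)^2$ explicitly as a quadratic in $\xi$, differentiates in $r$ to find the critical point $r=\cos\theta/\xi$, substitutes it back, and pushes the remaining minimization to the boundary $\xi=1$. You instead strip out the lattice first (setting $s=r\ft D(k)\in\interval{0}{1}$ and using $1-s\geq 1-\ft D(k)$, which moves the right-hand side in the correct direction), and then settle the resulting one-variable inequality with $(a+b)^2\leq 2(a^2+b^2)$, Jordan's bound $1-\cos\theta\geq 2\theta^2/\pi^2$, and the trivial split at $s=1/8$; each step checks out (the identity $\abs{1-s\Napier^{\imag\theta}}^2=(1-s)^2+2s(1-\cos\theta)$, the legitimacy of squaring since $\theta/\pi+1-s\geq 0$ for $s\leq 1$, and the final $(1-s)^2\geq 1-8s\iff s(6+s)\geq 0$), and your tightness remark at $(\theta,s)=(\pi,0)$ correctly explains why the constant $1/2$ cannot be improved along this route. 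What your argument buys is robustness: every estimate is a named, monotone inequality, and you never leave the admissible parameter region. By contrast, the paper's substitution of the unconstrained critical point $r=\cos\theta/\xi$ is delicate, since that point exits $\interval{0}{1}$ whenever $\cos\theta<0$ (and indeed $4\sin^2\theta-(\theta/\pi+1-\xi)^2$ can be negative there), so the paper's computation has to be read as a boundary analysis over the feasible $r$ rather than a literal substitution; your proof avoids this entirely. The only cosmetic caveat is that your ``lattice-free bound'' should be asserted for $s\in\interval{0}{1}$ rather than all $s\geq 0$, which is exactly the range the application requires and the range in which the squaring step is justified.
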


\begin{remark}
  The proof of Proposition~\ref{prp:restricting-g2} depends on the structure of the BCC lattice $\opSpace$.
  It suffices for $\opSpace$ to reduce the range of $k$ to $\interval{-\pi / 2}{\pi / 2}^d$
  due to reflections in perpendicular hyperplanes to the axes,
  whereas we need extra efforts to prove the same claim for $\mathbb{Z}^d$ due to lack of symmetry.
  The shape of the domain of $k$ such that $\ft D(k)\geq 0$ for $\mathbb{Z}^d$, that is
  $\set{k \in \mathbb{T}^d | d^{-1}\sum_{j=1}^{d} \cos k_j \geq 0}$
  is not as simple as $\set{k \in \mathbb{T}^d | \prod_{j=1}^{d} \cos k_j \geq 0}$ for $\opSpace$.
  However, focusing on $\theta=\arg z$ let us reduce the range over which the supremum is taken on the simple cubic lattice $\mathbb{Z}^d$.
  Specifically, it suffices to attend to only the range $\interval{-\pi/2}{\pi/2}$ in order to calculate the supremum in $g_2(p, m)$ with respect to $\theta$ on $\mathbb{Z}^d$, and a similar inequality to \eqref{eq:upperBound-Green} holds.
  Based on this idea, one can prove the infrared bound \eqref{eq:infraredBound} on $\mathbb{Z}^{d\geq 183}\times\opTime$.
  See \cite{k22} for details.
\end{remark}

The next lemma is almost identical to the above lemma, but note that the ranges of $k$ are different from each other.

\begin{lemma}\label{lem:mu-bound}
  For every $k\in\mathbb{T}^d$ and $\mu\in\mathbb{C}$ satisfying $\abs{\mu}\leq 1$,
  \begin{equation}
    \abs{1 - \Napier^{\imag\arg \mu} \ft D(k)} \leq 2 \abs{1 - \mu \ft D(k)}.
  \end{equation}
\end{lemma}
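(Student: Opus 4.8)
The plan is to reduce the claim to an elementary inequality for real numbers via a single application of the triangle inequality. Write $\mu = r\Napier^{\imag\theta}$ with $r = \abs{\mu}\in\interval{0}{1}$ and $\theta = \arg\mu$, and recall from \eqref{eq:Fourier-rwTransition} that $\ft D(k) = \prod_{j=1}^{d}\cos k_j$ is real with $\abs{\ft D(k)}\leq 1$; abbreviate $a\defeq\ft D(k)\in\interval{-1}{1}$.

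First I would split $1 - \Napier^{\imag\theta}a = (1 - r\Napier^{\imag\theta}a) + (r-1)\Napier^{\imag\theta}a$ and apply the triangle inequality, obtaining
\[
  \abs{1 - \Napier^{\imag\theta}a} \leq \abs{1 - r\Napier^{\imag\theta}a} + (1-r)\abs{a}.
\]
Hence it suffices to prove the auxiliary estimate $(1-r)\abs{a} \leq \abs{1 - r\Napier^{\imag\theta}a}$, since adding it to the above display yields $\abs{1 - \Napier^{\imag\theta}\ft D(k)} \leq 2\abs{1 - \mu\ft D(k)}$, which is the assertion of the lemma.

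For the auxiliary estimate both sides are nonnegative, so squaring makes it equivalent to
\[
  0 \leq \abs{1 - r\Napier^{\imag\theta}a}^2 - (1-r)^2 a^2 = 1 - 2ra\cos\theta - (1-2r)a^2 .
\]
I would then note that the right-hand side is an affine function of $c\defeq\cos\theta\in\interval{-1}{1}$, so its minimum over $\theta$ is attained at $c = 1$ or $c = -1$. At $c = 1$ it factors as $(1-a)\bigl(1 + (1-2r)a\bigr)$ and at $c = -1$ as $(1+a)\bigl(1 - (1-2r)a\bigr)$; each factor is nonnegative because $\abs{a}\leq 1$ and $\abs{1-2r}\leq 1$ for $r\in\interval{0}{1}$. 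Thus the right-hand side is nonnegative for every $\theta$, completing the argument. The degenerate case $\mu = 0$, where $\arg\mu$ is only a matter of convention, is precisely $r = 0$ and is covered by the same computation.

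There is no genuine obstacle here: the computation is elementary, and — in contrast to the lower bound of Lemma~\ref{lem:upperBound-Green} — no sign restriction such as $\ft D(k)\geq 0$ is needed. The only points that need a little care are the reduction to an affine function of $\cos\theta$ (which legitimizes checking only the two endpoints $\cos\theta = \pm 1$) and getting the two factorizations right.
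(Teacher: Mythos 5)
Your proof is correct, and it takes a genuinely different route from the paper's. The paper proves the inequality in one shot by computing $f(\theta)=4\abs{1-r\Napier^{\imag\theta}\ft D(k)}^2-\abs{1-\Napier^{\imag\theta}\ft D(k)}^2 = 3+(4r^2-1)\ft D(k)^2-2(4r-1)\ft D(k)\cos\theta$ and minimizing over $\theta$ via a sign analysis of $f'(\theta)$, which leaves a couple of two-variable polynomial inequalities at the endpoints that the paper then declines to verify in print. You instead split off the difference $1-\Napier^{\imag\theta}a=(1-r\Napier^{\imag\theta}a)+(r-1)\Napier^{\imag\theta}a$, so that the lemma reduces to the auxiliary bound $(1-r)\abs{a}\leq\abs{1-r\Napier^{\imag\theta}a}$; your factorizations $(1-a)\bigl(1+(1-2r)a\bigr)$ and $(1+a)\bigl(1-(1-2r)a\bigr)$ at $\cos\theta=\pm1$ are correct, and the affine-in-$\cos\theta$ reduction is legitimate. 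What your decomposition buys is transparency about where the factor $2$ comes from, and in fact your auxiliary estimate admits an even shorter proof than the one you give: by the reverse triangle inequality $\abs{1-r\Napier^{\imag\theta}a}\geq 1-r\abs{a}\geq(1-r)\abs{a}$, the last step using only $\abs{a}\leq 1$, so the endpoint checking can be dispensed with entirely. Both arguments correctly use no sign restriction on $\ft D(k)$, consistent with the lemma's statement for all $k\in\mathbb{T}^d$.
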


\begin{proof}[Proof of \eqref{eq:improvedBound-g2} in Lemma~\ref{lem:improvedBounds}]
  Applying the same method as \cite[Lemma~8.11]{hh17}, we rewrite $\flt\varphi_p$ as
  \begin{multline*}
    \frac{\flt\varphi_p(k, z)}{\ft S_{\mu_p(z)}(k)}
    = \frac{1 + \flt\Pi_p(k, z)}{1 + \flt\Pi_p(0, \abs{z})}
    + \frac{\flt\varphi_p(k, z)}{1 + \flt\Pi_p(0, \abs{z})}\\
    \times \qty(
      \flt\Pi_p(0, \abs{z}) \qty\big(1 - \Napier^{\imag\theta} \ft D(k))
      + \flt q_p(k, z) \qty\big(\flt\Pi_p(k, z) - \flt\Pi_p(0, \abs{z}))
    ),
  \end{multline*}
  where $\theta = \arg z$.
  By Proposition~\ref{prp:restricting-g2} and the triangle inequality,
  \begin{align}
    \MoveEqLeft[1] g_2(p, m) \notag\\
    \leq {} & \sup_{\substack{k\in\interval{-\frac{\pi}{2}}{\frac{\pi}{2}}^d,\\ z\in\mathbb{C}\colon\abs{z}\in\set{1, m}}}
      \Biggl(
        \frac{1 + \abs*{\flt\Pi_p(k, z)}}{1 + \flt\Pi_p(0, \abs{z})}
        + \frac{\abs*{\varphi_p(k, z)}}{1 + \flt\Pi_p(0, \abs{z})}
        \biggl(
          \flt\Pi_p(0, \abs{z}) \abs{1 - \Napier^{\imag\theta} \ft D(k)} \notag\\
          & + p \abs{z} \underbrace{\abs\big{\ft D(k)}}_{\leq 1}
          \qty\Big(
            \abs\big{\flt\Pi_p(0, \abs{z}) - \flt\Pi_p(0, z)}
            + \abs\big{\flt\Pi_p(0, z) - \flt\Pi_p(k, z)}
          )
        \biggr)
      \Biggr). \label{eq:crucialStep-g2-bound}
  \end{align}
  By the bootstrap hypotheses $g_i(p, m) \leq K_i$ for $i=1, 2$,
  \begin{align}
    \MoveEqLeft[1] g_2(p, m) \notag\\
    \leq {} & \sup_{\substack{k\in\interval{-\frac{\pi}{2}}{\frac{\pi}{2}}^d,\\ z\in\mathbb{C}\colon\abs{z}\in\set{1, m}}}
      \Biggl(
        \frac{1 + \abs*{\flt\Pi_p(k, z)}}{1 + \flt\Pi_p(0, \abs{z})}
        + \frac{K_2 \ft S_{\mu_p(z)}(k)}{1 + \flt\Pi_p(0, \abs{z})}
        \biggl(
          \flt\Pi_p(0, \abs{z}) \abs{1 - \Napier^{\imag\theta} \ft D(k)} \notag\\
          & + K_1
          \qty\Big(
            \abs\big{\flt\Pi_p(0, \abs{z}) - \flt\Pi_p(0, z)}
            + \abs\big{\flt\Pi_p(0, z) - \flt\Pi_p(k, z)}
          )
        \biggr)
      \Biggr). \label{eq:bootstrapStep-g2-bound}
  \end{align}
  Recall $\Pi_p^\mathrm{even}(\vb*{x}) = \sum_{N=1}^{\infty}\pi_p^{(2N)}(\vb*{x})$
  and $\Pi_p^\mathrm{odd}(\vb*{x}) = \pi_p^{(1)}(\vb*{x}) - \pi_p^{(0)}(\vb*{x}) + \sum_{N=1}^{\infty}\pi_p^{(2N+1)}(\vb*{x})$.
  Applying the triangle inequality to $\abs*{\flt\Pi_p(k, z)}$, $\abs*\big{\flt\Pi_p(0, \abs{z}) - \flt\Pi_p(0, z)}$ and $\abs*\big{\flt\Pi_p(0, z) - \flt\Pi_p(k, z)}$,
  we obtain, respectively,
  \begin{align*}
    \MoveEqLeft[1] \abs{\flt\Pi_p(k, z)}
    \leq \abs{\flt\Pi_p^\mathrm{even}(k, z)} + \abs{\flt\Pi_p^\mathrm{odd}(k, z)}\\
    = {} & \abs{\sum_{(x, t)}\sum_{N=1} \pi_p^{(2N)}(\vb*{x}) \Napier^{\imag k\cdot x} z^t}\\
      &+ \abs{\sum_{(x, t)}\sum_{N=1}
        \Bigl(
          \pi_p^{(1)}(\vb*{x}) - \pi_p^{(0)}(\vb*{x})
          + \pi_p^{(2N+1)}(\vb*{x})
        \Bigr)
        \Napier^{\imag k\cdot x} z^t}\\
    \leq {} & \sum_{(x, t)}\sum_{N=1} \pi_p^{(2N)}(\vb*{x}) \abs{z}^t\\
      &+ \sum_{(x, t)}\sum_{N=1}
        \Bigl(
          \pi_p^{(1)}(\vb*{x}) - \pi_p^{(0)}(\vb*{x})
          + \pi_p^{(2N+1)}(\vb*{x})
        \Bigr)
        \abs{z}^t\\
    = {} & \flt\Pi_p^\mathrm{even}(0, \abs{z}) + \flt\Pi_p^\mathrm{odd}(0, \abs{z}),
  \end{align*}
  \begin{align*}
    &\abs\big{\flt\Pi_p(0, \abs{z}) - \flt\Pi_p(0, z)}\\
    &\leq \sum_{(x, t)}
      \Bigl(
        \pi_p^{(1)}(x, t) - \pi_p^{(0)}(x, t)
        + \sum_{N=2}^{\infty}\pi_p^{(N)}(x, t)
      \Bigr)
      \abs{z}^t \abs*{1 - \Napier^{\imag\theta t}}\\
    &= \sum_{(x, t)} \Bigl(\Pi_p^\mathrm{even}(x, t) + \Pi_p^\mathrm{odd}(x, t)\Bigr) \abs{z}^t \abs*{1 - \Napier^{\imag\theta t}}
  \end{align*}
  and (note that each $\pi_p^{(N)}(x, t)$ is an even function with respect to $x$)
  \begin{align*}
    &\abs\big{\flt\Pi_p(0, z) - \flt\Pi_p(k, z)}\\
    &\leq \sum_{(x, t)}
    \Bigl(
      \pi_p^{(1)}(x, t) - \pi_p^{(0)}(x, t)
      + \sum_{N=2}^{\infty}\pi_p^{(N)}(x, t)
    \Bigr)
    \abs{z}^t (1 - \cos k\cdot x)\\
    &= \sum_{(x, t)} \Bigl(\Pi_p^\mathrm{even}(x, t) + \Pi_p^\mathrm{odd}(x, t)\Bigr) \abs{z}^t (1 - \cos k\cdot x).
  \end{align*}
  Substituting these bounds and \eqref{eq:improvedBound-g1} into \eqref{eq:bootstrapStep-g2-bound} leads to
  \begin{align*}
    \MoveEqLeft[1] g_2(p, m)\\
    \leq {} & \frac{1 + \flt\Pi_p^\mathrm{even}(0, m) + \flt\Pi_p^\mathrm{odd}(0, m)}{1 - \flt\Pi_p^\mathrm{odd}(0, m)}\\
      & + \frac{K_2}{1 - \flt\Pi_p^\mathrm{odd}(0, m)}
        \sup_{\substack{k\in\interval{-\frac{\pi}{2}}{\frac{\pi}{2}}^d,\\ z\in\mathbb{C}\colon\abs{z}\in\set{1, m}}}
        \biggl(
          \flt\Pi_p(0, \abs{z}) \frac{\abs*{1 - \Napier^{\imag\theta} \ft D(k)}}{\abs*{1 - \mu_p(z) \ft D(k)}}\\
          & + K_1 \sum_{(x, t)} \Bigl(\Pi_p^\mathrm{even}(x, t) + \Pi_p^\mathrm{odd}(x, t)\Bigr)
          \abs{z}^t \cdot \frac{\abs*{1 - \Napier^{\imag\theta t}} + 1 - \cos k\cdot x}{\abs*{1 - \mu_p(z) \ft D(k)}}
        \biggr).
  \end{align*}
  By Lemma~\ref{lem:upperBound-Green}--\ref{lem:mu-bound} and the inequality $\abs{1 - \Napier^{\imag\theta t}} = 2 \abs{\sin(\theta t / 2)} \leq t \abs{\theta}$,
  \begin{align*}
    \MoveEqLeft[1] g_2(p, m)\\
    \leq {} & \frac{1 + \flt\Pi_p^\mathrm{even}(0, m) + \flt\Pi_p^\mathrm{odd}(0, m)}{1 - \flt\Pi_p^\mathrm{odd}(0, m)}\\
      & + \frac{K_2}{1 - \flt\Pi_p^\mathrm{odd}(0, m)}
        \sup_{\substack{k\in\interval{-\frac{\pi}{2}}{\frac{\pi}{2}}^d,\\ z\in\mathbb{C}\colon\abs{z}\in\set{1, m}}}
        \biggl(
          2 \underbrace{\flt\Pi_p(0, \abs{z})}_{\mathrlap{\leq \flt\Pi_p^\mathrm{even}(0, \abs{z})}}\\
          & + 2 K_1 \sum_{(x, t)} \Bigl(\Pi_p^\mathrm{even}(x, t) + \Pi_p^\mathrm{odd}(x, t)\Bigr)
          \abs{z}^t\cdot
          \frac{t \abs{\theta} + 1 - \cos k\cdot x}{\pi^{-1} \abs{\theta} + 1 - \ft D(k)}
        \biggr)\\
    \leq {} & \frac{1 + \flt\Pi_p^\mathrm{even}(0, m) + \flt\Pi_p^\mathrm{odd}(0, m)}{1 - \flt\Pi_p^\mathrm{odd}(0, m)}
      + \frac{2 K_2 \flt\Pi_p^\mathrm{even}(0, m)}{1 - \flt\Pi_p^\mathrm{odd}(0, m)}\\
      & + \frac{2 K_1 K_2}{1 - \flt\Pi_p^\mathrm{odd}(0, m)}\\
      & \times
        \sup_{\substack{k\in\interval{-\frac{\pi}{2}}{\frac{\pi}{2}}^d,\\ z\in\mathbb{C}\colon\abs{z}\in\set{1, m}}}
        \sum_{(x, t)} \Bigl(\Pi_p^\mathrm{even}(x, t) + \Pi_p^\mathrm{odd}(x, t)\Bigr)
        \abs{z}^t\cdot
        \frac{t \abs{\theta} + 1 - \cos k\cdot x}{\pi^{-1} \abs{\theta} + 1 - \ft D(k)}\\
    = {} & \frac{1 + \flt\Pi_p^\mathrm{even}(0, m) + \flt\Pi_p^\mathrm{odd}(0, m)}{1 - \flt\Pi_p^\mathrm{odd}(0, m)}
      + \frac{2 K_2 \flt\Pi_p^\mathrm{even}(0, m)}{1 - \flt\Pi_p^\mathrm{odd}(0, m)}\\
      & + \frac{2 K_1 K_2}{1 - \flt\Pi_p^\mathrm{odd}(0, m)} \cdot \frac{1}{\pi^{-1} \abs{\theta} + 1 - \ft D(k)}\\
      & \times
        \sup_{\substack{k\in\interval{-\frac{\pi}{2}}{\frac{\pi}{2}}^d,\\ z\in\mathbb{C}\colon\abs{z}\in\set{1, m}}}
        \biggl(
          \underbrace{\pi \sum_{(x, t)} \Bigl(\Pi_p^\mathrm{even}(x, t) + \Pi_p^\mathrm{odd}(x, t)\Bigr) \abs{z}^t t}_{\mathrm{(a)}}
            \cdot \underbrace{\pi^{-1} \abs{\theta}}_{\mathrm{(b)}}\\
          & + \underbrace{\sum_{(x, t)} \Bigl(\Pi_p^\mathrm{even}(x, t) + \Pi_p^\mathrm{odd}(x, t)\Bigr) \abs{z}^t \frac{1 - \cos k\cdot x}{1 - \ft D(k)}}_{\mathrm{(c)}}
            \cdot \underbrace{\bigl(1 - \ft D(k)\bigr)}_{\mathrm{(d)}}
        \biggr).
  \end{align*}
  Note that the inequality $(a_1 \phi + b_1 \psi) / (a_2 \phi + b_2 \psi) \leq (a_1 \vee b_1) / (a_2 \wedge b_2)$ holds for any $a_1,\allowbreak b_1, a_2, b_2, \phi, \psi \in \mathbb{R}$.
  Setting $a_1 = \mathrm{(a)}$, $\phi = \mathrm{(b)}$, $b_1 = \mathrm{(c)}$, $\psi = \mathrm{(d)}$ and $a_2 = b_2 = 1$,
  we arrive at
  \begin{align}
    \MoveEqLeft[1] g_2(p, m) \notag\\
    \leq {} & \frac{1 + \flt\Pi_p^\mathrm{even}(0, m) + \flt\Pi_p^\mathrm{odd}(0, m)}{1 - \flt\Pi_p^\mathrm{odd}(0, m)}
      + \frac{2 K_2 \flt\Pi_p^\mathrm{even}(0, m)}{1 - \flt\Pi_p^\mathrm{odd}(0, m)} \notag\\
      & + \frac{2 K_1 K_2}{1 - \flt\Pi_p^\mathrm{odd}(0, m)} \notag\\
      & \times \qty\bigg(
        \pi \sum_{(x, t)} \qty\Big(
          \Pi_p^\mathrm{even}(x, t) + \Pi_p^\mathrm{odd}(x, t)
        )
        m^t t
      ) \notag\\
      & \quad \vee \qty\bigg(
        \sum_{(x, t)} \qty\Big(
          \Pi_p^\mathrm{even}(x, t) + \Pi_p^\mathrm{odd}(x, t)
        )
        m^t \frac{1 - \cos k\cdot x}{1 - \ft D(k)}
      ).
  \end{align}
\end{proof}

\begin{remark}\label{rem:role-restricting-g2}
  Notice that, if we do not use Proposition~\ref{prp:restricting-g2}
  and naively apply the triangle inequality as in \eqref{eq:crucialStep-g2-bound},
  \begin{equation}
    \label{eq:op-naiveBound}
    \frac{\abs\big{\flt\Pi_p(0, \abs{z}) - \flt\Pi_p(k, z)}}{\abs\big{1 - \Napier^{\imag\theta} \ft D(k)}}
    \leq \frac{\abs\big{\flt\Pi_p(0, \abs{z}) - \flt\Pi_p(0, z)} + \abs\big{\flt\Pi_p(0, z) - \flt\Pi_p(k, z)}}{\abs\big{1 - \Napier^{\imag\theta} \ft D(k)}}
  \end{equation}
  appears in an upper bound on $g_2(p, m)$, but this bound becomes infinite for some situations.
  When $\ft D(k) = -1$ and $\theta=\pm\pi$, its denominator equals $0$ even though its numerator is non-zero.
  Thanks to Proposition~\ref{prp:restricting-g2}, we can avoid this problem.
  Although one may consider keeping the left hand side in \eqref{eq:op-naiveBound} without using the triangle inequality,
  the way does not work currently
  because we do not have any nice methods to decompose diagrams with complex numbers.
\end{remark}

\begin{remark}
  Nguyen and Yang wrote ``\ldots $1 - \Napier^{\imag t}\ft D(k)$ is bounded below by a positive constant
  for $\set{(k, t) \in \interval{-\pi}{\pi}^d\times\interval{-\pi}{\pi} : \abs{(k, t)} > \varepsilon}$ \ldots''
  below \cite[Equation (13)]{ny93}.
  However, it is impossible for the nearest-neighbor model on the simple cubic lattice $\mathbb{Z}^d$ because it can occur that $\ft D(k)=-1$ when $k_i=\pm\pi$ for some $i=1, \dots, d$.
  That is a reason why we need Proposition~\ref{prp:restricting-g2}.
  The spread-out model does not cause such problem due to $1 + \ft D_L(k) > 0$ with $L\gg 1$ in \cite[Proof of Lemma~10]{ny93},
  where $D_L$ is the random-walk transition probability for the spread-out model.
  This inequality implies that $\abs*{1 - \Napier^{\imag\theta}\ft D(k)} > \mathrm{const.} (\norm{k}_2 L^2 + \theta^2)$,
  hence the naive bound \eqref{eq:op-naiveBound} works for sufficiently large $L$.
  If we consider oriented percolation whose staying probability is positive (considered in, e.g., \cite{gh02}),
  then the naive bound also works for such ``nearest-neighbor'' model
  because, in this case, $D$ turns to the lazy random-walk transition probability $D_\mathrm{lazy}$.
  This transition probability is defined as, for $\lambda\in\interval{0}{1}$ and $x\in\opSpace$,
  \[
    D_\mathrm{lazy}(x) = \lambda \delta_{x, o} + \left(1 - \lambda\right) \frac{\ind{x\in\mathscr{N}^d}}{\card{\mathscr{N}^d}} = \lambda \delta_{x, o} + \left(1 - \lambda\right) D(x),
  \]
  of which the Fourier transform is given by $\ft D_\mathrm{lazy}(k) = \lambda + (1 - \lambda) \ft D(k)$.
  When we choose an arbitrary $\lambda < 1/2$, the inequality $1 + \ft D_\mathrm{lazy}(k) > 0$ holds for every $k\in\mathbb{T}^d$ even at $k$ such that $\ft D(k) = -1$.
  Thus, the lazy model justifies the bound \eqref{eq:crucialStep-g2-bound}.
  However, it is different from so-called nearest-neighbor oriented percolation, so that we did not deal with the lazy model in this paper.
\end{remark}

\subsection{An upper bound for $g_3(p, m)$}\label{sec:improvedBound-g3}

An upper bound on $g_3(p, m)$ require us to modify \cite[Lemma~5.7]{s06}.

\begin{lemma}[A modified version of {\cite[(5.29)--(5.33)]{bchss05}} or {\cite[Lemma~5.7]{s06}}]\label{lem:trigonometricInequality}
  Suppose that $a(-x, t) = a(x, t)$ for all $(x, t)\in\opSpaceTime$, and let
  \[
    \flt A(k, z) = \frac{1}{1 - \flt a(k, z)},
  \]
  Then, for all $k, l\in\mathbb{T}^d$,
  \begin{multline}
    \abs{\frac{1}{2}\DLaplacian{k}\flt A(l, z)}\\
    \leq \Bigl(\FLT{\abs{a}}(0, \abs{z}) - \FLT{\abs{a}}(k, \abs{z})\Bigr)
      \Biggl(
        \frac{\abs*{\flt A(l + k, z)} + \abs*{\flt A(l - k, z)}}{2} \abs{\flt A(l, z)}\\
        \qquad + \abs{\flt A(l, z)} \abs{\flt A(l + k, z)} \abs{\flt A(l - k, z)} \Bigl(\FLT{\abs{a}}(0, \abs{z}) - \FLT{\abs{a}}(2l, \abs{z})\Bigr)
      \Biggr),
    \label{eq:modified-trigonometricInequality}
  \end{multline}
  where $\FLT{\abs*{a}}(k, z) = \sum_{(x, t)} \abs*{a(x, t)} \Napier^{\imag k\cdot x} z^t$.
\end{lemma}

\begin{proof}[Proof of \eqref{eq:improvedBound-g3} in Lemma~\ref{lem:improvedBounds}]
  Recall \eqref{eq:op-bootstrapFunction3} and \eqref{eq:FourierLaplace-recursionEquation}.
  Setting $a(x, t) = q_p(x, t) + (q_p\star\Pi_p)(x, t)$ in Lemma~\ref{lem:trigonometricInequality}, we have
  \begin{equation}
    \abs{\frac{1}{2}\DLaplacian{k}\bigl(\flt q_p(l, z) \flt\varphi_p(l, z)\bigr)}
    = \abs{\frac{1}{2}\DLaplacian{k}\bigg(\frac{1}{1 - \flt a(l, z)} - 1\bigg)}
    =\abs{\frac{1}{2}\DLaplacian{k}\flt A(l, z)}.
    \label{eq:2nd-derivative-line}
  \end{equation}
  By Lemma~\ref{lem:split-of-cosine} and the bootstrap hypothesis $g_1(p, m) \leq K_1$,
  \begin{align}
    &\FLT{\abs{a}}(0, \abs{z}) - \FLT{\abs{a}}(k, \abs{z})
    = \sum_{(x, t)} \abs{q_p(x, t) + (q_p\star\Pi_p)(x, t)} \abs{z}^t (1 - \cos k\cdot x) \notag\\
    &\leq \sum_{(x, t)} q_p(x, t) \abs{z}^t (1 - \cos k\cdot x) \notag\\
      &\quad + \sum_{(x, t)} \sum_{(y, s)} q_p(y, s) \abs{\Pi_p(x - y, t - s)} \abs{z}^t
        \underbrace{\Bigl( 1 - \cos k\cdot \bigl(y + (x - y)\bigr)\Bigr)}_{\leq 2 (1 - \cos k\cdot y) + 2 (1 - \cos k\cdot (x - y))} \notag\\
    &\leq p\abs{z} \bigl( 1 - \ft D(k)\bigr) \left( 1 + 2 \FLT{\abs{\Pi_p}}(0, \abs{z}) + 2 \cdot \frac{\FLT{\abs{\Pi_p}}(0, \abs{z}) - \FLT{\abs{\Pi_p}}(k, \abs{z})}{1 - \ft D(k)} \right) \notag\\
    &\begin{multlined}[b]
      \leq K_1 \bigl( 1 - \ft D(k)\bigr) \Biggl(
        1 + 2 \flt\Pi_p^\mathrm{even}(0, \abs{z}) + 2 \flt\Pi_p^\mathrm{odd}(0, \abs{z})\\
        + 2 \cdot \frac{\flt\Pi_p^\mathrm{even}(0, \abs{z}) - \flt\Pi_p^\mathrm{even}(k, \abs{z})}{1 - \ft D(k)}
        + 2 \cdot \frac{\flt\Pi_p^\mathrm{odd}(0, \abs{z}) - \flt\Pi_p^\mathrm{odd}(k, \abs{z})}{1 - \ft D(k)}
      \Biggr).
    \end{multlined}
    \label{eq:upperBound-diff-perturbation}
  \end{align}
  By the triangle inequality and the definition of $g_2(p, m)$,
  \begin{align}
    \abs{\flt A(l, z)}
    = \frac{\abs*{\flt\varphi_p(k, z)}}{\abs*{1 + \flt\Pi_p(l, z)}}
    \leq \frac{g_2(p, m)}{1 - \flt\Pi_p^\mathrm{even}(0, m) - \flt\Pi_p^\mathrm{odd}(0, m)} \abs{\ft S_{\mu_p(z)}(l)}.
    \label{eq:upperBound-like-g2}
  \end{align}
  Recall the definition \eqref{eq:2nd-derivative-RW} of $\flt U_{\mu_p(z)}(k, l)$.
  Applying Lemma~\ref{lem:trigonometricInequality} to \eqref{eq:2nd-derivative-line}, and combining the bounds \eqref{eq:upperBound-diff-perturbation} and \eqref{eq:upperBound-like-g2}, we obtain
  \begin{multline}
    g_3(p, m)
    \leq \Biggl( 1 \vee \frac{g_2(p, m)}{1 - \flt\Pi_p^\mathrm{even}(0, m) - \flt\Pi_p^\mathrm{odd}(0, m)}\Biggr)^3 K_1^2\\
    \times \Biggl(
      1 + 2 \qty(\flt\Pi_p^\mathrm{even}(0, m) + \flt\Pi_p^\mathrm{odd}(0, m))\\
      + 2 \norm{
        \frac{
          \qty\big(\flt\Pi_p^\mathrm{even}(0, m) + \flt\Pi_p^\mathrm{odd}(0, m))
          - \qty\big(\flt\Pi_p^\mathrm{even}(\bullet, m) + \flt\Pi_p^\mathrm{odd}(\bullet, m))
        }{
          1 - \ft D(\bullet)
        }
      }_\infty
    \Biggr)^2.
  \end{multline}
\end{proof}

\subsection{Proof of lemmas}\label{sec:proof-lemmas-improvedBounds}

\begin{proof}[Proof of Proposition~\ref{prp:restricting-g2}]
  We notice that
  \begin{align}
    &g_2(p, m) \notag\\
    &= \sup_{\substack{k\in\mathbb{T}^d,\\ r\in\set{1, m},\\ \theta\in\mathbb{T}}}
      \frac{\abs{\flt\varphi_p(k, r\Napier^{\imag\theta})}}{\abs*{\ft S_{\mu_p(r\Napier^{\imag\theta})}(k)}} \notag\\
    &= \qty\Bigg(
      \sup_{\substack{k \in \interval{-\pi/2}{\pi/2}^d,\\ r\in\set{1, m},\\ \theta\in\mathbb{T}}}
      \frac{\abs{\flt\varphi_p(k, r\Napier^{\imag\theta})}}{\abs*{\ft S_{\mu_p(r\Napier^{\imag\theta})}(k)}}
    )
    \vee \qty\Bigg(
      \sup_{\substack{k \in \mathbb{T}^d \setminus \interval{-\pi/2}{\pi/2}^d,\\ r\in\set{1, m},\\ \theta\in\mathbb{T}}}
      \frac{\abs{\flt\varphi_p(k, r\Napier^{\imag\theta})}}{\abs*{\ft S_{\mu_p(r\Napier^{\imag\theta})}(k)}}
    ).
    \label{eq:splitting-g2}
  \end{align}
  Given some $k = (k_1, \dots, k_d) \in \mathbb{T}^d \setminus \interval{-\pi/2}{\pi/2}^d$,
  there exists $\tilde k = (\tilde k_1, \dots, \tilde k_d) \in \interval{-\pi/2}{\pi/2}^d$ such that, for $j = 1, \dots, d$,
  \begin{equation*}
    \tilde k_j = \begin{cases}
      k_j & [-\pi/2 \leq k_j \leq \pi/2],\\
      \pi - k_j & [\pi/2 < k_j \leq \pi],\\
      -\pi - k_j & [-\pi \leq k_j < -\pi/2].
    \end{cases}
  \end{equation*}
  The second and third lines mean the reflection across points $\pi/2$ and $-\pi/2$ in one dimension, respectively.
  We show to be able to replace $k$ in the second factor in \eqref{eq:splitting-g2} by $\tilde k$.

  Fix $k \in \mathbb{T}^d \setminus \interval{-\pi/2}{\pi/2}^d$, $r\in\set{1, m}$ and $\theta \in \mathbb{T}$.
  Note that $\tilde k_j = k_j$ for $j=1, \dots, d$ when $-\pi/2 \leq k_j \leq \pi/2$.
  Since $\varphi_p(x, t)$ is an even function with respect to $x$,
  \begin{align*}
    \flt\varphi_p(k, r\Napier^{\imag\theta})
    &= \sum_{(x, t)} \varphi_p(x, t) r^t \Napier^{\imag \sum_{j=1}^{d} k_j x_j} \Napier^{\imag\theta t}\\
    &= \sum_{(x, t)} \varphi_p(x, t) r^t \Napier^{-\imag \tilde k \cdot x} \Napier^{\pi\imag \sum_{j=1}^{d} \ind{\tilde k_j\neq k_j} \sgn(k_j) x_j} \Napier^{\imag\theta t}\\
    &= \sum_{(x, t)} \varphi_p(-x, t) r^t \Napier^{\imag \tilde k \cdot x} (-1)^{-\sum_{j=1}^{d} \ind{\tilde k_j\neq k_j} \sgn(k_j) x_j} \Napier^{\imag\theta t}\\
    &= \sum_{(x, t)} \varphi_p(x, t) r^t \Napier^{\imag \tilde k \cdot x} (-1)^{\sum_{j=1}^{d} \ind{\tilde k_j\neq k_j} \sgn(k_j) x_j} \Napier^{\imag\theta t},
  \end{align*}
  where $\sgn\colon \mathbb{R} \to \set{-1, 0, +1}$ denotes the sign function.
  We divide into the two cases where $\sum_{j=1}^{d} \allowbreak \ind{\tilde k_j\neq k_j}$ is even or odd.

  On the one hand, suppose that $\sum_{j=1}^{d} \ind{\tilde k_j\neq k_j}$ is even.
  Since all coordinates of some $x\in\opSpace$ are either even or odd, $\sum_{j=1}^{d} \ind{\tilde k_j\neq k_j} \sgn(k_j) x_j$ is even.
  It is easy to see that
  \begin{equation*}
    \flt\varphi_p(k, r\Napier^{\imag\theta})
    = \sum_{(x, t)} \varphi_p(x, t) r^t \Napier^{\imag \tilde k \cdot x} \Napier^{\imag\theta t}
    = \flt\varphi_p(\tilde k, r\Napier^{\imag\theta}).
  \end{equation*}

  On the other hand, suppose that $\sum_{j=1}^{d} \ind{\tilde k_j\neq k_j}$ is odd.
  Let $\opSpace_\mathrm{e} = \{x\in\opSpace \mid \forall j,\ x_j\text{ is even}\}$,
  $\opSpace_\mathrm{o} = \set{x\in\opSpace | \forall j,\ x_j\text{ is odd}}$,
  $\mathbb{Z}_{\mathrm{e}+}$ be the set of non-negative even numbers
  and $\mathbb{Z}_{\mathrm{o}+}$ be the set of positive odd numbers.
  Note that $\varphi_p(x, t) = \mathbb{P}_p((o, 0) \rightarrow (x, t)) = 0$ when $(x, t) \neq (\opSpace_\mathrm{e}\times\mathbb{Z}_{\mathrm{e}+}) \cup (\opSpace_\mathrm{o}\times\mathbb{Z}_{\mathrm{o}+})$
  due to the definition of nearest-neighbor oriented percolation.
  We rewrite $\flt\varphi_p(k, r\Napier^{\imag\theta})$ as
  \begin{equation}
    \flt\varphi_p(k, r\Napier^{\imag\theta})
    = \sum_{\substack{x\in\opSpace_\mathrm{e},\\ t\in\mathbb{Z}_{\mathrm{e}+}}} \varphi_p(x, t) r^t \Napier^{\imag \tilde k \cdot x} \Napier^{\imag\theta t}
      - \sum_{\substack{x\in\opSpace_\mathrm{o},\\ t\in\mathbb{Z}_{\mathrm{o}+}}} \varphi_p(x, t) r^t \Napier^{\imag \tilde k \cdot x} \Napier^{\imag\theta t}.
    \label{eq:division-even-odd}
  \end{equation}
  Let $\tilde\theta = -\theta + \pi \sgn(\theta)$, which means the reflection of across points $\pi/2$ and $-\pi/2$ in one dimension.
  This definition is different from that of $\tilde k$ because even the point $\theta \in \interval{-\pi/2}{\pi/2}$ is reflected.
  By using $\tilde \theta$,
  \begin{equation}
    \Napier^{\imag\theta t}
    = \Napier^{-\imag\tilde\theta t} \Napier^{\pi\imag \sgn(\theta) t}
    = (-1)^t \Napier^{-\imag\tilde \theta t}.
    \label{eq:reflected-theta}
  \end{equation}
  Substituting \eqref{eq:reflected-theta} into \eqref{eq:division-even-odd}, we obtain
  \begin{align*}
    \flt\varphi_p(k, r\Napier^{\imag\theta})
    &= \sum_{\substack{x\in\opSpace_\mathrm{e},\\ t\in\mathbb{Z}_{\mathrm{e}+}}} \varphi_p(x, t) r^t \Napier^{\imag \tilde k \cdot x} \Napier^{-\imag\tilde\theta t}
    + \sum_{\substack{x\in\opSpace_\mathrm{o},\\ t\in\mathbb{Z}_{\mathrm{o}+}}} \varphi_p(x, t) r^t \Napier^{\imag \tilde k \cdot x} \Napier^{-\imag\tilde\theta t}\\
    &= \sum_{(x, t)} \varphi_p(x, t) r^t \Napier^{\imag \tilde k \cdot x} \Napier^{-\imag\tilde\theta t}
    = \flt\varphi_p(\tilde k, r\Napier^{-\imag\tilde\theta}).
  \end{align*}
  As for $\theta$, $-\tilde\theta$ takes a value in $\mathbb{T}$.
  It is possible to replace $-\tilde\theta$ in $\flt\varphi_p(\tilde k, r\Napier^{-\imag\tilde\theta})$ by $\theta$
  since it is the dummy variable of the supremum in \eqref{eq:splitting-g2}.

  The above discussions also hold for $\flt S_{\mu_p(r\Napier^{\imag\theta})}(k)$
  because $Q_1(x, t)$ satisfies the same properties ``$Q_1(x, t)$ is an even function with respect to $x$''
  and ``$Q_1(x, t) = D^{\conv t}(x) \ind{t\in\opTime} = 0$ when $(x, t) \neq (\opSpace_\mathrm{e}\times\mathbb{Z}_{\mathrm{e}+}) \cup (\opSpace_\mathrm{o}\times\mathbb{Z}_{\mathrm{o}+})$''
  as $\flt\varphi_p(k, r\Napier^{\imag\theta})$.
  Therefore,
  \begin{equation*}
    g_2(p, m)
    = \qty\Bigg(
      \sup_{\substack{k \in \interval{-\pi/2}{\pi/2}^d,\\ r\in\set{1, m},\\ \theta\in\mathbb{T}}}
      \frac{\abs{\flt\varphi_p(k, r\Napier^{\imag\theta})}}{\abs*{\ft S_{\mu_p(r\Napier^{\imag\theta})}(k)}}
    )
    \vee \qty\Bigg(
      \sup_{\substack{\tilde k \in \interval{-\pi/2}{\pi/2}^d,\\ r\in\set{1, m},\\ \theta \in \mathbb{T}}}
      \frac{\abs{\flt\varphi_p(\tilde k, r\Napier^{\imag\theta})}}{\abs*{\ft S_{\mu_p(r\Napier^{\imag\theta})}(\tilde k)}}
    ).
  \end{equation*}
  This completes the proof of Proposition~\ref{prp:restricting-g2}.
\end{proof}

\begin{proof}[Proof of Lemma~\ref{lem:upperBound-Green}]
  Let $\xi \defeq \ft D(k)$ and
  \begin{align*}
    f_r(\xi, \theta) &\defeq 4 \abs{1 - r \Napier^{\imag\theta} \xi}^2 - \left(\frac{\abs{\theta}}{\pi} + 1 - \xi\right)^2\\
    &= \left(4 r^2 - 1\right) \xi^2
      + 2 \left(\frac{\abs{\theta}}{\pi} + 1 - 4r \cos\theta\right) \xi
      + \underbrace{3 - \frac{\theta^2}{\pi^2} - \frac{2 \abs{\theta}}{\pi}}_{(\dagger)}.
  \end{align*}
  $f_r(\xi, \theta)$ is an even function with respect to $\theta$.
  It suffices to show that $f_r(\xi, \theta)\geq 0$ for $\xi\in\interval{0}{1}$ and $\theta\in\interval{0}{\pi}$.
  Clearly, the part ($\dagger$) is monotonically decreasing with respect to $\theta$ and non-negative, hence $f_r(0, \theta) \geq 0$.
  Solving the equation $\pdv*{f_r(\xi, \theta)}{r}=0$, we obtain $r = \cos\theta / \xi$.
  By substituting this into $f_r(\xi, \theta)$,
  \[
    f_{\cos\theta / \xi}(\xi, \theta) = -\xi^2 + 2 \left(\frac{\theta}{\pi} + 1\right) \xi - 4 \cos^2\theta + 3 - \frac{\theta^2}{\pi^2} - \frac{2\theta}{\pi}.
  \]
  Since the coefficient of $\xi^2$ is negative, $f_{\cos\theta / \xi}(\xi, \theta)$ takes the minimum value at the boundary points $\xi=0, 1$.
  Since $\theta$ takes a value on $\interval{0}{\pi}$, we arrive at $f_{\cos\theta}(1, \theta) = 4 \sin^2\theta + 3 - \theta^2 / \pi^2 \geq 0$.
  This completes the proof of Lemma~\ref{lem:upperBound-Green}.
\end{proof}

\begin{proof}[Proof of Lemma~\ref{lem:mu-bound}]
  Let $r = \abs{\mu} \in \interval{0}{1}$ and $\theta = \arg \mu \in \mathbb{T}$.  We show that
  \begin{align*}
    f(\theta) &\defeq 4 \abs{1 - r\Napier^{\imag\theta} \ft D(k)}^2 - \abs{1 - \Napier^{\imag\theta} \ft D(k)}^2\\
    &= 3 + \qty(4r^2 - 1) \ft D(k)^2 - 2 \qty(4r - 1) \ft D(k) \cos\theta\\
    &\geq 0.
  \end{align*}
  By the symmetry of the cosine function, we restrict the range of $\theta$ to $\interval{0}{\pi}$.  Since
  \[
    f'(\theta) = 2 \qty(4r - 1) \ft D(k) \sin\theta \begin{cases}
      \geq 0 & [r \geq 1 / 4 \qand \ft D(k) \geq 0],\\
      \leq 0 & [r < 1 / 4 \qand \ft D(k) \geq 0],\\
      \leq 0 & [r \geq 1 / 4 \qand \ft D(k) < 0],\\
      \geq 0 & [r < 1 / 4 \qand \ft D(k) < 0],
    \end{cases}
  \]
  $f(\theta)$ is monotonic with respect to $\theta$ whenever $r$ and $k$ are fixed.
  Thus, $f(\theta)$ takes the minimum value at $\theta = 0$, $\pi / 2$ or $\pi$.
  When $\theta = 0$ or $\pi$, we need to show $3 + \qty(4x^2 - 1) y^2 \pm 2 \qty(4x - 1) y \geq 0$
  for all $(x, y)\in\interval{0}{1}\times\interval{-1}{1}$.  When $\theta = \pi / 2$, we need to show
  $3 + \qty(4x^2 - 1) y^2 \geq 0$ for all $(x, y)\in\interval{0}{1}\times\interval{-1}{1}$.
  We omit proofs of these inequalities because it is not hard to see them.
  Therefore, $f(\theta) \geq 0$.
\end{proof}

\begin{proof}[Proof of Lemma~\ref{lem:trigonometricInequality}]
  Let
  \begin{gather*}
    \flt a^\mathrm{cos}(l, k; z) = \sum_{(x, t)} a(x, t) \cos(l\cdot x) \cos(k\cdot x) z^t,\\
    \flt a^\mathrm{sin}(l, k; z) = \sum_{(x, t)} a(x, t) \sin(l\cdot x) \sin(k\cdot x) z^t.
  \end{gather*}
  As in \cite[Equation~(5.21)]{s06},
  \begin{multline}
    -\frac{1}{2}\DLaplacian{k}\flt A(l, z)\\
    = \frac{\flt A(l + k, z) + \flt A(l - k, z)}{2} \flt A(l, z) \bigl(\flt a(l, z) - \flt a^{\cos}(l, k; z)\bigr)\\
      - \flt A(l, z) \flt A(l + k, z) \flt A(l - k, z) \flt a^{\sin}(l, k; z)^2.
  \end{multline}
  By the triangle inequality and arithmetic of the trigonometric functions, combining
  \begin{equation*}
    \abs{\flt a(l, z) - \flt a^{\cos}(l, k; z)} \leq \sum_{(x, t)}\abs{a(x, t)} \abs{z}^t (1 - \cos k\cdot x)
  \end{equation*}
  and
  \begin{align*}
    \abs{\flt a^{\sin}(l, k; z)}^2
    &\leq \qty(\sum_{(x, t)}\abs{a(x, t)} \abs{z}^t \sin^2 l\cdot x) \qty(\sum_{(x, t)}\abs{a(x, t)} \abs{z}^t \sin^2 k\cdot x)\\
    &= \qty(\sum_{(x, t)}\abs{a(x, t)} \abs{z}^t \frac{1 - \cos(2l\cdot x)}{2}) \qty(\sum_{(x, t)}\abs{a(x, t)} \abs{z}^t (1 - \cos^2 k\cdot x))\\
    &\leq \qty(\sum_{(x, t)}\abs{a(x, t)} \abs{z}^t \qty\big(1 - \cos(2l\cdot x))) \qty(\sum_{(x, t)}\abs{a(x, t)} \abs{z}^t (1 - \cos k\cdot x))
  \end{align*}
  leads to \eqref{eq:modified-trigonometricInequality}.
\end{proof}

From \eqref{eq:improvedBound-g1}, \eqref{eq:improvedBound-g2} and \eqref{eq:improvedBound-g3}, upper bounds for $\{g_i(p, m)\}_{i=1}^{3}$ are written in terms of the lace expansion coefficients in the end.
Each term in the upper bounds is also bounded above by Lemma~\ref{lem:precise-diagrammaticBounds}, which is written in terms of the basic diagrams $B_{p, m}^{(\lambda, \rho)}$, $T_{p, m}^{(\lambda, \rho)}$
and $V_{p, m}^{(\lambda, \rho)}(k)$.
The rest of our work is to bound the basic diagrams.

\section{Diagrammatic bounds in terms of random-walk quantities}\label{sec:bounds-basicDiagrams}

In this section, we bound the basic diagrams \eqref{eq:def-bubble}--\eqref{eq:def-weighted-bubble} above.
Recall the random-walk quantities \eqref{eq:def-rwQuantities}.
We first show the proof of Lemma~\ref{lem:basic-diagrams} below in Section~\ref{sec:proof-basicDiagrams} with additional lemmas,
and then we prove their lemmas in Section~\ref{sec:proof-lemmas-basicDiagrams}.

\begin{lemma} \label{lem:basic-diagrams}
  Assume $g_i(p, m)\leq K_i$, $i=1, 2, 3$.
  For $\mu, \rho\in\mathbb{N}$, $p\in\interval[open right]{0}{p_\mathrm{c}}$ and $m\in\interval[open right]{1}{m_p}$,
  \begin{gather}
    \label{eq:bubble-diagram}
    B_{p, m}^{(\lambda, \rho)} \leq K_1^{\lambda + \rho} K_2^2 \varepsilon_1^{(\floor{(\lambda + \rho) / 2})},\\
    \label{eq:triangle-diagram}
    T_{p, m}^{(\lambda, \rho)} \leq \sqrt{2} K_1^{\lambda + \rho} K_2^3 \varepsilon_2^{(\floor{(\lambda + \rho) / 2})},\\
    \label{eq:weighted-bubble}
    \norm{\frac{\flt V_{p, m}^{(\lambda, \rho)}}{1 - \ft D}}_\infty \leq \begin{dcases}
      K_1^2 \norm{D}_\infty + K_1^3 K_2 \sqrt{\varepsilon_1^{(3)}} + \norm{\frac{\flt V_{p, m}^{(2, 1)}}{1 - \ft D}}_\infty& [\lambda = \rho = 1],\\
      \begin{multlined}
        \lambda \left(\lambda - 1\right) K_1^{\lambda + \rho} K_2^2 \varepsilon_1^{(\floor{(\lambda + \rho - 1) / 2})}\\
        + \lambda K_1^{\lambda + \rho - 1} K_2 K_3 \bigl(\sqrt{2} + 4\bigr) \varepsilon_2^{(\floor{(\lambda + \rho - 1) / 2})}
      \end{multlined} & [\lambda \geq 2\text{ or }\rho \geq 2].
    \end{dcases}
  \end{gather}
\end{lemma}

\subsection{Upper bounds on the basic diagrams}\label{sec:proof-basicDiagrams}

The residue theorem in complex analysis yields the next lemma, which is used throughout this section.

\begin{lemma}\label{lem:applicationOfResidue}
  For every $r \in \interval[open right]{0}{1}$ and $k \in \mathbb{T}$,
  \begin{gather}
    \int_{\mathbb{T}} \abs{\ft S_{r \Napier^{\imag\theta}}(k)}^2 \frac{\dd{\theta}}{2\pi}
      = \frac{1}{1 - r^2 \ft D(k)^2},
      \label{eq:integral-ComplexRWGreen-2ndPower}\\
    \int_{\mathbb{T}} \abs{\ft S_{r \Napier^{\imag\theta}}(k)}^4 \frac{\dd{\theta}}{2\pi}
      = \frac{1 + r^2 \ft D(k)^2}{\bigl(1 - r^2 \ft D(k)^2\bigr)^3}.
      \label{eq:integral-ComplexRWGreen-4thPower}
  \end{gather}
\end{lemma}

\begin{proof}[Proof of \eqref{eq:bubble-diagram} and \eqref{eq:triangle-diagram} in Lemma~\ref{lem:basic-diagrams}]
  We first show \eqref{eq:bubble-diagram}.
  By the inverse Fourier-Laplace transform and the bootstrap hypotheses,
  \begin{align*}
    B_{p, m}^{(\lambda, \rho)}
    &= \sup_{(x, t)} \abs{
      \int_{\mathbb{T}^d}\frac{\dd[d]{k}}{(2\pi)^d}\int_{\mathbb{T}}\frac{\dd{\theta}}{2\pi}
      \flt q_p(k, \Napier^{\imag\theta})^\lambda \flt\varphi_p(k, \Napier^{\imag\theta})
      m^\rho \flt q_p(k, \Napier^{-\imag\theta})^\rho \flt\varphi_p(k, m\Napier^{-\imag\theta})
      \Napier^{-\imag k\cdot x} \Napier^{-\imag\theta t}
    }\\
    &\leq \int_{\mathbb{T}^d}\frac{\dd[d]{k}}{(2\pi)^d}\int_{\mathbb{T}}\frac{\dd{\theta}}{2\pi} \abs{\flt q_p(k, \Napier^{\imag\theta})}^\lambda \abs{\flt\varphi_p(k, \Napier^{\imag\theta})} m^\rho \abs{\flt q_p(k, \Napier^{-\imag\theta})}^\rho \abs{\flt\varphi_p(k, m\Napier^{-\imag\theta})}\\
    &= p^{\lambda + \rho} m^\rho \int_{\mathbb{T}^d}\frac{\dd[d]{k}}{(2\pi)^d}\int_{\mathbb{T}}\frac{\dd{\theta}}{2\pi} \abs{\ft D(k)}^{\lambda + \rho} \abs{\flt\varphi_p(k, \Napier^{\imag\theta})} \abs{\flt\varphi_p(k, m\Napier^{-\imag\theta})}\\
    &\leq K_1^{\lambda + \rho} K_2^2 \int_{\mathbb{T}^d}\frac{\dd[d]{k}}{(2\pi)^d}\int_{\mathbb{T}}\frac{\dd{\theta}}{2\pi} \abs{\ft D(k)}^{\lambda + \rho} \abs{\ft S_{\mu_p(\Napier^{\imag\theta})}(k)} \abs{\ft S_{\mu_p(m\Napier^{-\imag\theta})}(k)}.
  \end{align*}
  By the Cauchy-Schwarz inequality,
  \begin{equation*}
    B_{p, m}^{(\lambda, \rho)}
    \leq K_1^{\lambda + \rho} K_2^2 \int_{\mathbb{T}^d}\frac{\dd[d]{k}}{(2\pi)^d} \abs{\ft D(k)}^{\lambda + \rho} \qty(\int_{\mathbb{T}}\frac{\dd{\theta}}{2\pi} \abs{\ft S_{\mu_p(\Napier^{\imag\theta})}(k)}^2)^{1 / 2} \qty(\int_{\mathbb{T}}\frac{\dd{\theta}}{2\pi} \abs{\ft S_{\mu_p(m\Napier^{-\imag\theta})}(k)}^2)^{1 / 2}.
  \end{equation*}
  Recall \eqref{eq:scaledParameter}.
  We apply \eqref{eq:integral-ComplexRWGreen-2ndPower} in Lemma~\ref{lem:applicationOfResidue} to the above to obtain
  \begin{align*}
    B_{p, m}^{(\lambda, \rho)}
    &\leq  K_1^{\lambda + \rho} K_2^2 \int_{\mathbb{T}^d}\frac{\dd[d]{k}}{(2\pi)^d} \abs{\ft D(k)}^{\lambda + \rho} \qty(\frac{1}{1 - \mu_p(1)^2 \ft D(k)^2})^{1 / 2} \qty(\frac{1}{1 - \mu_p(m)^2 \ft D(k)^2})^{1 / 2}\\
    &\leq K_1^{\lambda + \rho} K_2^2 \int_{\mathbb{T}^d}\frac{\dd[d]{k}}{(2\pi)^d} \frac{\ft D(k)^{2\floor{\frac{\lambda + \rho}{2}}}}{1 - \ft D(k)^2}\\
    &= K_1^{\lambda + \rho} K_2^2 \varepsilon_1^{(\floor{(\lambda + \rho) / 2})}.
  \end{align*}
  In the last line, we have used $\mu_p(m) < 1$ for every $m\in\interval[open right]{0}{m_p}$,
  and the trivial inequality $\abs*{\ft D(k)}^{\lambda + \rho} \leq \ft D(k)^{2\floor{\frac{\lambda + \rho}{2}}}$
  since $\abs*{\ft D(k)}\leq 1$ for all $k\in\mathbb{T}^d$.

  Similarly,
  \begin{align*}
    T_{p, m}^{(\lambda, \rho)}
    &\mathmakebox[\widthof{CS ineq.}]{\overset{\text{inv. FL}}{=}}
      \sup_{(x, t)} \abs{
        \int_{\mathbb{T}^d}\frac{\dd[d]{k}}{(2\pi)^d}\int_{\mathbb{T}}\frac{\dd{\theta}}{2\pi}
        \flt q_p(k, \Napier^{\imag\theta})^\lambda \flt\varphi_p(k, \Napier^{\imag\theta})^2
        m^\rho \flt q_p(k, \Napier^{-\imag\theta})^\rho \flt\varphi_p(k, m\Napier^{-\imag\theta})
        \Napier^{-\imag k\cdot x} \Napier^{-\imag\theta t}
      }\\
    &\mathmakebox[\widthof{CS ineq.}]{\leq}
      \int_{\mathbb{T}^d}\frac{\dd[d]{k}}{(2\pi)^d}\int_{\mathbb{T}}\frac{\dd{\theta}}{2\pi}
      \abs{\flt q_p(k, \Napier^{\imag\theta})}^\lambda \abs{\flt\varphi_p(k, \Napier^{\imag\theta})}^2
      m^\rho \abs{\flt q_p(k, \Napier^{-\imag\theta})}^\rho \abs{\flt\varphi_p(k, m\Napier^{-\imag\theta})}\\
    &\mathmakebox[\widthof{CS ineq.}]{=}
      p^{\lambda + \rho} m^\rho
      \int_{\mathbb{T}^d}\frac{\dd[d]{k}}{(2\pi)^d}\int_{\mathbb{T}}\frac{\dd{\theta}}{2\pi}
      \abs{\ft D(k)}^{\lambda + \rho} \abs{\flt\varphi_p(k, \Napier^{\imag\theta})}^2 \abs{\flt\varphi_p(k, m\Napier^{-\imag\theta})}\\
    &\mathmakebox[\widthof{CS ineq.}]{\overset{\text{hypoth.}}{\leq}}
      K_1^{\lambda + \rho} K_2^3
      \int_{\mathbb{T}^d}\frac{\dd[d]{k}}{(2\pi)^d}\int_{\mathbb{T}}\frac{\dd{\theta}}{2\pi}
      \abs{\ft D(k)}^{\lambda + \rho} \abs{\ft S_{\mu_p(\Napier^{\imag\theta})}(k)}^2 \abs{\ft S_{\mu_p(m\Napier^{-\imag\theta})}(k)}\\
    &\mathmakebox[\widthof{CS ineq.}]{\overset{\text{CS ineq.}}{\leq}}
      K_1^{\lambda + \rho} K_2^3
      \int_{\mathbb{T}^d}\frac{\dd[d]{k}}{(2\pi)^d} \abs{\ft D(k)}^{\lambda + \rho}
      \qty(\int_{\mathbb{T}}\frac{\dd{\theta}}{2\pi} \abs{\ft S_{\mu_p(\Napier^{\imag\theta})}(k)}^4)^{1 / 2}
      \qty(\int_{\mathbb{T}}\frac{\dd{\theta}}{2\pi} \abs{\ft S_{\mu_p(m\Napier^{-\imag\theta})}(k)}^2)^{1 / 2}\\
    &\mathmakebox[\widthof{CS ineq.}]{\overset{\text{Lem.~\ref{lem:applicationOfResidue}}}{=}}
      K_1^{\lambda + \rho} K_2^3
      \int_{\mathbb{T}^d}\frac{\dd[d]{k}}{(2\pi)^d} \abs{\ft D(k)}^{\lambda + \rho}
      \qty(\frac{1 + \mu_p(1)^2 \ft D(k)^2}{\qty\big(1 - \mu_p(1)^2 \ft D(k)^2)^3})^{1 / 2}
      \qty(\frac{1}{1 - \mu_p(m)^2 \ft D(k)^2})^{1 / 2}\\
    &\mathmakebox[\widthof{CS ineq.}]{\leq}
      \sqrt{2} K_1^{\lambda + \rho} K_2^3
      \int_{\mathbb{T}^d}\frac{\dd[d]{k}}{(2\pi)^d}
      \frac{\ft D(k)^{2\floor{\frac{\lambda + \rho}{2}}}}{\qty\big(1 - \ft D(k)^2)^2}\\
    &\mathmakebox[\widthof{CS ineq.}]{=}
      \sqrt{2} K_1^{\lambda + \rho} K_2^3
      \sum_{n=0}^{\infty} \qty(n + 1)
      \int_{\mathbb{T}^d}\frac{\dd[d]{k}}{(2\pi)^d} \ft D(k)^{2 \qty(n + \floor{\frac{\lambda + \rho}{2}})}\\
    &\mathmakebox[\widthof{CS ineq.}]{\overset{\text{inv. FL}}{=}}
      \sqrt{2} K_1^{\lambda + \rho} K_2^3 \varepsilon_2^{(\floor{(\lambda + \rho) / 2})}.
  \end{align*}
  This completes the proof of \eqref{eq:triangle-diagram}.
\end{proof}

To prove \eqref{eq:weighted-bubble} in Lemma~\ref{lem:basic-diagrams}, integrating the product of the absolute value of $\ft D(l)$'s and $\ft S_{\mu}(l)$'s is required.
These integrands come from $\hat U_{\mu}(k, l)$ in \eqref{eq:2nd-derivative-RW}.

\begin{lemma}\label{lem:integrals-upperBoundOnGreen}
  For any $\lambda, \rho\in\mathbb{N}$,
  \begin{equation}
    \begin{aligned}[b]
      \MoveEqLeft[1] \int_{\mathbb{T}^d}\frac{\dd[d]{l}}{(2\pi)^d}\int_{\mathbb{T}}\frac{\dd{\theta}}{2\pi} \abs{\ft D(l)}^{\lambda + \rho - 1} \abs{\flt S_{\mu_p(\Napier^{\imag\theta})}(l \pm k)}\\
        &\times \abs{\flt S_{\mu_p(\Napier^{\imag\theta})}(l)} \abs{\flt S_{\mu_p(m\Napier^{-\imag\theta})}(l)}\\
      \leq {} & \sqrt{2} \varepsilon_2^{(\floor{(\lambda + \rho - 1) / 2})},
    \end{aligned}
    \label{eq:integral-type1-upperBoundOnGreen}
  \end{equation}
  and
  \begin{equation}
    \begin{aligned}[b]
      \MoveEqLeft[1] \int_{\mathbb{T}^d}\frac{\dd[d]{l}}{(2\pi)^d}\int_{\mathbb{T}}\frac{\dd{\theta}}{2\pi} \abs{\ft D(l)}^{\lambda + \rho - 1} \abs{\flt S_{\mu_p(\Napier^{\imag\theta})}(l + k)} \abs{\flt S_{\mu_p(\Napier^{\imag\theta})}(l - k)}\\
        &\times \abs{\flt S_{\mu_p(\Napier^{\imag\theta})}(l)} \abs{\flt S_{\mu_p(m\Napier^{-\imag\theta})}(l)} \qty\big(1 - \ft D(2l))\\
      \leq {} & 4 \varepsilon_2^{(\floor{(\lambda + \rho - 1) / 2})}.
    \end{aligned}
    \label{eq:integral-type2-upperBoundOnGreen}
  \end{equation}
\end{lemma}

To prove \eqref{eq:integral-type2-upperBoundOnGreen} in Lemma~\ref{lem:integrals-upperBoundOnGreen}, we note that the period of $\ft D(2k)$ is same as that of $\ft D(k)^2$.
The next lemma plays an important role to prove the finiteness of $\flt V_{p, m}^{(\lambda, \rho)}(k)$ for $d>4$.

\begin{lemma}\label{lem:bounds-wrt-D}
  For every $k\in\mathbb{T}^d$ and $\mu\in\mathbb{C}$ satisfying $\abs{\mu}\leq 1$,
  \begin{gather}
    1 - \ft D(2k) \leq 2 \qty(1 - \ft D(k)^2) \label{eq:equivalent-periodicity}.
  \end{gather}
\end{lemma}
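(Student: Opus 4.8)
The plan is to derive \eqref{eq:equivalent-periodicity} from the stronger pointwise inequality
\[
  \ft D(2k) \ge 2\,\ft D(k)^2 - 1 \qquad (k\in\mathbb{T}^d),
\]
since rearranging this gives $1 - \ft D(2k) \le 2\bigl(1 - \ft D(k)^2\bigr)$ at once. (The parameter $\mu$ does not appear in the statement and may simply be ignored.)

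First I would pass to the variables $u_j \defeq \cos^2 k_j \in \interval{0}{1}$. By the double-angle identity $\cos 2k_j = 2\cos^2 k_j - 1$ together with \eqref{eq:Fourier-rwTransition},
\[
  \ft D(2k) = \prod_{j=1}^{d}(2u_j - 1), \qquad \ft D(k)^2 = \prod_{j=1}^{d} u_j,
\]
so it suffices to prove the purely algebraic inequality $\prod_{j=1}^{d}(2u_j-1) \ge 2\prod_{j=1}^{d}u_j - 1$ for all $u_1,\dots,u_d\in\interval{0}{1}$.

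I would establish this by induction on $d$, the case $d=1$ being an equality. For the inductive step, set $A \defeq \prod_{j=1}^{d-1}(2u_j-1)$ and $B \defeq \prod_{j=1}^{d-1}u_j$; then $B\in\interval{0}{1}$, $\abs{A}\le 1$ (a product of numbers in $\interval{-1}{1}$), and $A\ge 2B-1$ by the induction hypothesis. The desired inequality $A(2u_d-1)\ge 2Bu_d-1$ is equivalent to $2u_d(A-B)\ge A-1$. If $A\ge B$, the left-hand side is non-negative while the right-hand side $A-1$ is non-positive, so we are done; if $A<B$, the left-hand side is affine and non-increasing in $u_d\in\interval{0}{1}$, hence bounded below by its value $2(A-B)$ at $u_d=1$, and $2(A-B)\ge A-1$ is precisely the induction hypothesis $A\ge 2B-1$. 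This closes the induction, and substituting back then yields \eqref{eq:equivalent-periodicity}.

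No genuine obstacle is expected here: the only points requiring a moment's care are the sign bookkeeping in the inductive step and the elementary fact that an affine function of $u_d$ is extremised on $\interval{0}{1}$ at an endpoint. As an alternative one may observe that $2\prod_{j}u_j - \prod_{j}(2u_j-1)$ is multilinear in $(u_1,\dots,u_d)$, hence attains its maximum over $\interval{0}{1}^d$ at a vertex; on a vertex with exactly $\ell$ coordinates equal to $0$ it equals $2\,\ind{\ell=0} - (-1)^\ell$, which is at most $1$ in every case, giving the same conclusion.
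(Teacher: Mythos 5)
Your proof is correct and follows essentially the same route as the paper: both reduce the claim to the algebraic inequality $\prod_{j}(2u_j-1)\ge 2\prod_j u_j-1$ on $\interval{0}{1}^d$ with $u_j=\cos^2 k_j$, and both prove it by induction on $d$ using linearity in the last coordinate (your case split on the sign of $A-B$ is just a reorganization of the paper's evaluation at the two endpoints $\xi_{d+1}\in\{0,1\}$). Your closing remark about multilinearity forcing the extremum to a vertex of the cube is a clean one-line variant of the same idea.
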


\begin{remark}\label{rem:finiteness-weightedBubble}
  The proof of Lemma~\ref{lem:bounds-wrt-D} depends on the structure of $\opSpace$ as well as Proposition~\ref{prp:restricting-g2},
  but it is not hard to prove this on $\mathbb{Z}^d$ similarly \cite{k21,k22}.
  Thanks to this lemma, we are able to show the finiteness of $\flt V_{p, m}^{(\lambda, \rho)}(k)$
  for $d>4$, cf., \cite[Remark~5]{ny93} in which they proved the finiteness of the similar quantity only for $d>8$.
\end{remark}

\begin{proof}[Proof of \eqref{eq:weighted-bubble} in Lemma~\ref{lem:basic-diagrams} when $\lambda \geq 2$ and $\rho \geq 2$]
  By Lemma~\ref{lem:split-of-cosine} in Appendix~\ref{sec:proof-diagrammaticBounds},
  \begin{align*}
    \flt V_{p, m}^{(\lambda, \rho)}(k)
    = {} \MoveEqLeft[0] \sup_{\vb*{x}}\sum_{\vb*{y}_\lambda} \qty(q_p^{\Conv \lambda} \Conv \varphi_p)(\vb*{y}_\lambda) \qty(1 - \cos k\cdot y) \qty(m^\rho q_p^{\Conv \rho} \Conv \varphi_p^{(m)})(\vb*{y}_\lambda - \vb*{x})\\
    = {} & \sup_{\vb*{x}}\sum_{\vb*{y}_1, \dots, \vb*{y}_\lambda} \qty(\prod_{i=1}^{\lambda - 1} q_p(\vb*{y}_{i} - \vb*{y}_{i - 1})) \qty(q_p \Conv \varphi_p)(\vb*{y}_\lambda - \vb*{y}_{\lambda - 1})\\
      &\times \qty(1 - \cos\qty\bigg(k\cdot \sum_{j=1}^{\lambda}\qty(y_{j} - y_{j - 1})))
      \qty(m^\rho q_p^{\Conv \rho} \Conv \varphi_p^{(m)})(\vb*{y}_\lambda - \vb*{x})\\
    \leq {} & \lambda \sup_{\vb*{x}}\sum_{j=1}^{\lambda - 1} \sum_{\vb*{y}_1, \dots, \vb*{y}_\lambda} \qty(\prod_{i\neq j} q_p(\vb*{y}_{i} - \vb*{y}_{i - 1})) \qty(q_p \Conv \varphi_p)(\vb*{y}_\lambda - \vb*{y}_{\lambda - 1})\\
        &\quad \times q_p(\vb*{y}_{j} - \vb*{y}_{j - 1}) \qty\Big(1 - \cos\qty\big(k\cdot \qty(y_{j} - y_{j - 1})))\\
        &\quad \times \qty(m^\rho q_p^{\Conv \rho} \Conv \varphi_p^{(m)})(\vb*{y}_\lambda - \vb*{x})\\
      & + \lambda \sup_{\vb*{x}} \sum_{\vb*{y}_1, \dots, \vb*{y}_\lambda} \qty(q_p \Conv \varphi_p)(\vb*{y}_\lambda - \vb*{y}_{\lambda - 1})\\
        &\quad \times \qty\Big(1 - \cos\qty\big(k\cdot \qty(y_{\lambda} - y_{\lambda - 1})))\\
        &\quad \times \qty(\prod_{i=1}^{\lambda - 1} q_p(\vb*{y}_{i} - \vb*{y}_{i - 1})) \qty(m^\rho q_p^{\Conv \rho} \Conv \varphi_p^{(m)})(\vb*{y}_\lambda - \vb*{x})\\
    = {} & \sup_{\vb*{x}}\sum_{\vb*{y}_{j - 1}, \vb*{y}_{j}, \vb*{y}_\lambda}
        q_p^{\Conv (j - 1)}(\vb*{y}_{j - 1})
        q_p(\vb*{y}_{j} - \vb*{y}_{j - 1})\\
        &\quad \times \qty\Big(1 - \cos\qty\big(k\cdot \qty(y_{j} - y_{j - 1})))\\
        &\quad \times \qty(q_p^{\Conv (\lambda - j)} \Conv \varphi_p)(\vb*{y}_\lambda - \vb*{y}_{j})
        \qty(m^\rho q_p^{\Conv \rho} \Conv \varphi_p^{(m)})(\vb*{y}_\lambda - \vb*{x})\\
      & + \lambda \sup_{\vb*{x}} \sum_{\vb*{y}_{\lambda - 1}, \vb*{y}_\lambda}
        q_p^{\Conv (\lambda - 1)}(\vb*{y}_{\lambda - 1})\\
        &\quad \times
        \qty(q_p \Conv \varphi_p)(\vb*{y}_\lambda - \vb*{y}_{\lambda - 1}) \qty\Big(1 - \cos\qty\big(k\cdot \qty(y_{\lambda} - y_{\lambda - 1})))\\
        &\quad \times
        \qty(m^\rho q_p^{\Conv \rho} \Conv \varphi_p^{(m)})(\vb*{y}_\lambda - \vb*{x})
  \end{align*}
  where $\vb*{y}_0 = \vb*{o}$.
  By the inverse Fourier-Laplace transform,
  \begin{align}
    \MoveEqLeft \flt V_{p, m}^{(\lambda, \rho)}(k) \notag\\
    \leq {} & \lambda \qty(\lambda - 1) \int_{\mathbb{T}^d}\frac{\dd[d]{l}}{(2\pi)^d}\int_{\mathbb{T}}\frac{\dd{\theta}}{2\pi} \abs{\flt q_p(l, \Napier^{\imag\theta})}^{\lambda - 1} \notag\\
      &\quad \times \abs{\frac{1}{2}\DLaplacian{k} \flt q_p(l, \Napier^{\imag\theta})} \abs{\flt\varphi_p(l, \Napier^{\imag\theta})} m^\rho \abs{\flt q_p(l, \Napier^{-\imag\theta})}^\rho \abs{\flt\varphi_p(l, m\Napier^{-\imag\theta})} \notag\\
      &+ \lambda \int_{\mathbb{T}^d}\frac{\dd[d]{l}}{(2\pi)^d}\int_{\mathbb{T}}\frac{\dd{\theta}}{2\pi} \abs{\flt q_p(l, \Napier^{\imag\theta})}^{\lambda - 1} \notag\\
      &\quad \times \abs{\frac{1}{2}\DLaplacian{k} \qty(\flt q_p(l, \Napier^{\imag\theta}) \flt\varphi_p(l, \Napier^{\imag\theta}))} m^\rho \abs{\flt q_p(l, \Napier^{-\imag\theta})}^\rho \abs{\flt\varphi_p(l, m\Napier^{-\imag\theta})} \notag\\
    \leq {} & \lambda \qty(\lambda - 1) \qty\big(1 - \ft D(k)) K_1^{\lambda + \rho} K_2^2 \notag\\
      &\quad \times \underbrace{\int_{\mathbb{T}^d}\frac{\dd[d]{l}}{(2\pi)^d}\int_{\mathbb{T}}\frac{\dd{\theta}}{2\pi} \abs{\ft D(l)}^{\lambda + \rho - 1} \abs{\flt S_{\mu_p(\Napier^{\imag\theta})}(l)} \abs{\flt S_{\mu_p(m\Napier^{-\imag\theta})}(l)}}_{\leq \varepsilon_1^{(\floor{(\lambda + \rho - 1) / 2})}} \notag\\
      &+ \lambda K_1^{\lambda + \rho - 1} K_2 K_3 \int_{\mathbb{T}^d}\frac{\dd[d]{l}}{(2\pi)^d}\int_{\mathbb{T}}\frac{\dd{\theta}}{2\pi} \abs{\ft D(l)}^{\lambda + \rho - 1} \notag\\
      &\quad \times \abs{\flt U_{\mu_p(\Napier^{\imag\theta})}(k, l)} \abs{\flt S_{\mu_p(m\Napier^{-\imag\theta})}(l)}.
      \label{eq:estimation-weighted-bubble}
  \end{align}
  In the second inequality, we have used
  \[
    \abs{\frac{1}{2}\DLaplacian{k} \ft D(l)}
    = \abs{\sum_{x} D(x) \qty(1 - \cos k\cdot x) \Napier^{\imag l\cdot x}}
    \leq \sum_{x} D(x) \qty(1 - \cos k\cdot x)
    = 1 - \ft D(k)
  \]
  and the definition \eqref{eq:2nd-derivative-RW} of $\flt U_{\mu_p(\Napier^{\imag\theta})}(k, l)$.
  The second term in the right most in \eqref{eq:estimation-weighted-bubble} contains two types of integrals that appear in Lemma~\ref{lem:integrals-upperBoundOnGreen}.
  Combining \eqref{eq:estimation-weighted-bubble}, \eqref{eq:integral-type1-upperBoundOnGreen}
  and \eqref{eq:integral-type2-upperBoundOnGreen} implies \eqref{eq:weighted-bubble} when $\lambda \geq 2$ and $\rho \geq 2$.
\end{proof}

\begin{proof}[Proof of \eqref{eq:weighted-bubble} in Lemma~\ref{lem:basic-diagrams} when $\lambda = 1$ and $\rho = 1$]
  By the trivial inequality \eqref{eq:more-one-step},
  note that $\varphi_p(\vb*{x}) = \KroneckerDelta{\vb*{o}}{\vb*{x}} + \varphi_p(\vb*{x}) \ind{\vb*{o}\neq\vb*{x}} \leq \KroneckerDelta{\vb*{o}}{\vb*{x}} + (q_p\Conv\varphi_p)(\vb*{x})$ for every $\vb*{x}\in\opSpaceTime$.
  Then, we obtain
  \begin{align*}
    \flt V_{p, m}^{(1, 1)}(k)
    = {} \MoveEqLeft[0] \sup_{\vb*{x}}\sum_{\vb*{y}} \qty(q_p \Conv \varphi_p)(\vb*{y}) \qty(1 - \cos k\cdot y) \qty(m q_p \Conv \varphi_p^{(m)})(\vb*{y} - \vb*{x})\\
    \leq {} & \underbrace{
          \sup_{\vb*{x}}\sum_{\vb*{y}} q_p(\vb*{y}) \qty(1 - \cos k\cdot y) m q_p(\vb*{y} - \vb*{x})
        }_{
          \leq K_1^2 \norm{D}_\infty (1 - \ft D(k))
          \quad
          (\because \eqref{eq:decomposingDiagrams})
        }\\
      &+ \sup_{\vb*{x}}\sum_{\vb*{y}} q_p(\vb*{y}) \qty(1 - \cos k\cdot y) \qty(m^2 q_p^{\Conv 2} \Conv \varphi_p^{(m)})(\vb*{y} - \vb*{x})\\
      &+ \underbrace{
          \sup_{\vb*{x}}\sum_{\vb*{y}} \qty(q_p^{\Conv 2} \Conv \varphi_p)(\vb*{y}) \qty(1 - \cos k\cdot y) \qty(m q_p \Conv \varphi_p^{(m)})(\vb*{y} - \vb*{x})
        }_{
          = \flt V_{p, m}^{(2, 1)}(k)
        }.
  \end{align*}
  It is not hard to bound the second term above by $K_1^3 K_2 \sqrt{\varepsilon_1^{(3)}} (1 - \ft D(k))$ by the same method as the proof of the bounds on $B_{p, m}^{(\lambda, \rho)}$ and $T_{p, m}^{(\lambda, \rho)}$.
  Notice that we apply the Cauchy-Schwarz inequality to it twice, which changes the $1$st power of $\abs*{\ft S_{\mu_p(m \Napier^{-\imag\theta})}}$ to the $2$nd power and the $1/2$th power of $1 - \ft D(k)^2$ to the $1$st power.
  This completes the proof of \eqref{eq:weighted-bubble} when $\lambda = 1$ and $\rho = 1$.
\end{proof}

\subsection{Proof of lemmas}\label{sec:proof-lemmas-basicDiagrams}

\begin{proof}[Proof of Lemma~\ref{lem:applicationOfResidue}]
  By the equality $\ft S_{r \Napier^{\imag\theta}}(k) = (1 - r \Napier^{\imag\theta} \ft D(k))^{-1}$ and the change of variables $z = \Napier^{\imag\theta}$,
  \begin{align*}
    \int_{\mathbb{T}} \abs{\ft S_{r \Napier^{\imag\theta}}(k)}^2 \frac{\dd{\theta}}{2\pi}
    &= \int_{\mathbb{T}} \frac{1}{\abs\big{1 - \Napier^{\imag\theta} r \ft D(k)}^2} \frac{\dd{\theta}}{2\pi}\\
    &= \int_{\mathbb{T}} \frac{1}{\bigl(1 - \Napier^{\imag\theta} r \ft D(k)\bigr) \bigl(1 - \Napier^{-\imag\theta} r \ft D(k)\bigr)} \frac{\dd{\theta}}{2\pi}\\
    &= \int_{\mathbb{T}} \frac{\Napier^{\imag\theta}}{\bigl(\Napier^{\imag\theta} - r \ft D(k)\bigr) \bigl(1 - \Napier^{\imag\theta} r \ft D(k)\bigr)} \frac{\dd{\theta}}{2\pi}\\
    &= \frac{1}{2\pi\imag} \oint_{\set{z | \abs*{z} = 1}} \frac{\dd z}{\bigl(z - r \ft D(k)\bigr) \bigl(1 - z r \ft D(k)\bigr)}.
  \end{align*}
  Note that $\abs*{r \ft D(k)} < 1$.
  Only the complex number $z = r \ft D(k)$ is a singularity in the disk $\set{z | \abs*{z} < 1}$.
  Applying the residue theorem to the above line integral leads to \eqref{eq:integral-ComplexRWGreen-2ndPower}.

  Similarly, the equality
  \[
    \int_{\mathbb{T}} \abs{\ft S_{r \Napier^{\imag\theta}}(k)}^4 \frac{\dd{\theta}}{2\pi}
    = \frac{1}{2\pi\imag} \oint_{\set{z | \abs*{z} = 1}} \frac{z \dd z}{\bigl(z - r \ft D(k)\bigr)^2 \bigl(1 - z r \ft D(k)\bigr)^2}
  \]
  and the residue theorem imply \eqref{eq:integral-ComplexRWGreen-4thPower}.
\end{proof}

\begin{proof}[Proof of \eqref{eq:integral-type1-upperBoundOnGreen} in Lemma~\ref{lem:integrals-upperBoundOnGreen}]
  By the Cauchy-Schwarz inequality and Lemma~\ref{lem:applicationOfResidue},
  \begin{align*}
    [\text{LHS of \eqref{eq:integral-type1-upperBoundOnGreen}}]
    \leq {} \MoveEqLeft[0] \int_{\mathbb{T}^d}\frac{\dd[d]{l}}{(2\pi)^d} \abs{\ft D(l)}^{\lambda + \rho - 1}
      \qty(\int_{\mathbb{T}}\frac{\dd{\theta}}{2\pi} \abs{\flt S_{\mu_p(\Napier^{\imag\theta})}(l \pm k)}^2)^{1 / 2}\\
      &\times
      \qty(\int_{\mathbb{T}}\frac{\dd{\theta}}{2\pi} \abs{\flt S_{\mu_p(\Napier^{\imag\theta})}(l)}^4)^{1 / 4}
      \qty(\int_{\mathbb{T}}\frac{\dd{\theta}}{2\pi} \abs{\flt S_{\mu_p(m\Napier^{-\imag\theta})}(l)}^4)^{1 / 4}\\
    = {} & \int_{\mathbb{T}^d}\frac{\dd[d]{l}}{(2\pi)^d} \abs{\ft D(l)}^{\lambda + \rho - 1}
      \qty(\frac{1}{1 - \mu_p(1)^2 \ft D(l \pm k)^2})^{1 / 2}\\
      &\times
      \qty(\frac{1 + \mu_p(1)^2 \ft D(l)^2}{\qty\big(1 - \mu_p(1)^2 \ft D(l)^2)^3})^{1 / 4}
      \qty(\frac{1 + \mu_p(m)^2 \ft D(l)^2}{\qty\big(1 - \mu_p(m)^2 \ft D(l)^2)^3})^{1 / 4}\\
    \leq {} & \sqrt{2} \int_{\mathbb{T}^d}\frac{\dd[d]{l}}{(2\pi)^d}
      \qty(\frac{\abs*{\ft D(l)}^{(\lambda + \rho - 1) / 2}}{1 - \ft D(l \pm k)^2})^{1 / 2}
      \qty(\frac{\abs*{\ft D(l)}^{3 (\lambda + \rho - 1) / 2}}{\qty\big(1 - \ft D(l)^2)^3})^{1 / 2}.
  \end{align*}
  By H\"{o}lder's inequality and the Cauchy-Schwarz inequality again,
  \begin{align*}
    \MoveEqLeft[1] [\text{LHS of \eqref{eq:integral-type1-upperBoundOnGreen}}]\\
    \leq {} & \sqrt{2}
      \qty(\int_{\mathbb{T}^d}\frac{\dd[d]{l}}{(2\pi)^d} \frac{\abs*{\ft D(l)}^{\lambda + \rho - 1}}{\qty\big(1 - \ft D(l \pm k)^2)^2})^{1 / 4}
      \qty(\int_{\mathbb{T}^d}\frac{\dd[d]{l}}{(2\pi)^d} \frac{\abs*{\ft D(l)}^{\lambda + \rho - 1}}{\qty\big(1 - \ft D(l)^2)^2})^{3 / 4}\\
    \leq {} & \sqrt{2}
      \qty(\sum_{n=0}^{\infty} \qty(n + 1) \int_{\mathbb{T}^d}\frac{\dd[d]{l}}{(2\pi)^d} \ft D(l)^{2 \floor{\frac{\lambda + \rho - 1}{2}}} \ft D(l \pm k)^{2n})^{1 / 4}\\
      &\times
      \qty(\sum_{n=0}^{\infty} \qty(n + 1) \int_{\mathbb{T}^d}\frac{\dd[d]{l}}{(2\pi)^d} \ft D(l)^{2 \qty(n + \floor{\frac{\lambda + \rho - 1}{2}})})^{3 / 4}\\
    = {} & \sqrt{2} \qty(
        \sum_{n=0}^{\infty} \qty(n + 1)
        \int_{\mathbb{T}^d}\frac{\dd[d]{l}}{(2\pi)^d} \ft D(l)^{2 \floor{\frac{\lambda + \rho - 1}{2}}}
        \sum_{x} D^{\conv 2n}(x) \Napier^{\imag (l \pm k)\cdot x}
      )^{1 / 4}\\
      &\times
      \qty(\varepsilon_2^{(\floor{(\lambda + \rho - 1) / 2})})^{3 / 4}\\
    = {} & \sqrt{2} \qty(\sum_{n=0}^{\infty} \qty(n + 1) \sum_{x} D^{\conv 2 \floor{\frac{\lambda + \rho - 1}{2}}}(x) D^{\conv 2n}(x) \Napier^{\pm\imag k\cdot x})^{1 / 4}\\
      &\times \qty(\varepsilon_2^{(\floor{(\lambda + \rho - 1) / 2})})^{3 / 4}.
  \end{align*}
  Since the summand with respect to $x$ is an even function, the factor $\Napier^{\imag (l \pm k)\cdot x}$ equals $\cos k\cdot x$, which is bounded above by $1$.
  Therefore, we arrive at \eqref{eq:integral-type1-upperBoundOnGreen}.
\end{proof}

\begin{proof}[Proof of \eqref{eq:integral-type2-upperBoundOnGreen} in Lemma~\ref{lem:integrals-upperBoundOnGreen}]
  By the Cauchy-Schwarz inequality and \eqref{eq:integral-ComplexRWGreen-4thPower} in Lemma~\ref{lem:applicationOfResidue},
  \begin{align*}
    \MoveEqLeft[1] [\text{LHS of \eqref{eq:integral-type2-upperBoundOnGreen}}]\\
    \leq {} & \int_{\mathbb{T}^d}\frac{\dd[d]{l}}{(2\pi)^d}
      \abs{\ft D(l)}^{\lambda + \rho - 1}\\
      &\times
      \qty(\int_{\mathbb{T}}\frac{\dd{\theta}}{2\pi} \abs{\flt S_{\mu_p(\Napier^{\imag\theta})}(l + k)}^4)^{1 / 4}
      \qty(\int_{\mathbb{T}}\frac{\dd{\theta}}{2\pi} \abs{\flt S_{\mu_p(\Napier^{\imag\theta})}(l - k)}^4)^{1 / 4}\\
      &\times
      \qty(\int_{\mathbb{T}}\frac{\dd{\theta}}{2\pi} \abs{\flt S_{\mu_p(\Napier^{\imag\theta})}(l)}^4)^{1 / 4}
      \qty(\int_{\mathbb{T}}\frac{\dd{\theta}}{2\pi} \abs{\flt S_{\mu_p(m\Napier^{-\imag\theta})}(l)}^4)^{1 / 4}
      \qty\big(1 - \ft D(2l))\\
    = {} & \int_{\mathbb{T}^d}\frac{\dd[d]{l}}{(2\pi)^d}
      \abs{\ft D(l)}^{\lambda + \rho - 1}\\
      &\times
      \qty(\frac{1 + \mu_p(1)^2 \ft D(l + k)^2}{\qty\big(1 - \mu_p(1)^2 \ft D(l + k)^2)^3})^{1 / 4}
      \qty(\frac{1 + \mu_p(1)^2 \ft D(l - k)^2}{\qty\big(1 - \mu_p(1)^2 \ft D(l - k)^2)^3})^{1 / 4}\\
      &\times
      \qty(\frac{1 + \mu_p(1)^2 \ft D(l)^2}{\qty\big(1 - \mu_p(1)^2 \ft D(l)^2)^3})^{1 / 4}
      \qty(\frac{1 + \mu_p(m)^2 \ft D(l)^2}{\qty\big(1 - \mu_p(m)^2 \ft D(l)^2)^3})^{1 / 4}
      \qty\big(1 - \ft D(2l))\\
    \leq {} & 2 \int_{\mathbb{T}^d}\frac{\dd[d]{l}}{(2\pi)^d}
      \qty(\frac{\abs*{\ft D(l)}^{3 (\lambda + \rho - 1) / 2}}{\qty\big(1 - \ft D(l + k)^2)^3})^{1 / 4}
      \qty(\frac{\abs*{\ft D(l)}^{3 (\lambda + \rho - 1) / 2}}{\qty\big(1 - \ft D(l - k)^2)^3})^{1 / 4}\\
      &\times
      \qty(\frac{\abs*{\ft D(l)}^{(\lambda + \rho - 1) / 2}}{\qty\big(1 - \ft D(l)^2)^3})^{1 / 2}
      \qty\big(1 - \ft D(2l)).
  \end{align*}
  By Lemma~\ref{lem:bounds-wrt-D},
  \begin{multline*}
    [\text{LHS of \eqref{eq:integral-type2-upperBoundOnGreen}}]
    \leq 4 \int_{\mathbb{T}^d}\frac{\dd[d]{l}}{(2\pi)^d}
      \qty(\frac{\abs*{\ft D(l)}^{3 (\lambda + \rho - 1) / 2}}{\qty\big(1 - \ft D(l + k)^2)^3})^{1 / 4}\\
      \times
      \qty(\frac{\abs*{\ft D(l)}^{3 (\lambda + \rho - 1) / 2}}{\qty\big(1 - \ft D(l - k)^2)^3})^{1 / 4}
      \qty(\frac{\abs*{\ft D(l)}^{(\lambda + \rho - 1) / 2}}{1 - \ft D(l)^2})^{1 / 2}.
  \end{multline*}
  By H\"{o}lder's inequality and the Cauchy-Schwarz inequality again,
  \begin{align*}
    \MoveEqLeft[1] [\text{LHS of \eqref{eq:integral-type2-upperBoundOnGreen}}]\\
    \leq {} & 4 \qty(\int_{\mathbb{T}^d}\frac{\dd[d]{l}}{(2\pi)^d}
      \frac{\abs*{\ft D(l)}^{(\lambda + \rho - 1) / 2} \cdot \abs*{\ft D(l)}^{(\lambda + \rho - 1) / 2}}{\qty\big(1 - \ft D(l + k)^2) \qty\big(1 - \ft D(l - k)^2)})^{3 / 4}\\
      &\times
      \qty(\int_{\mathbb{T}^d}\frac{\dd[d]{l}}{(2\pi)^d} \frac{\abs*{\ft D(l)}^{\lambda + \rho - 1}}{\qty\big(1 - \ft D(l)^2)^2})^{1 / 4}\\
    \leq {} & 4
      \qty(\int_{\mathbb{T}^d}\frac{\dd[d]{l}}{(2\pi)^d} \frac{\abs*{\ft D(l)}^{\lambda + \rho - 1}}{\qty\big(1 - \ft D(l + k)^2)^2})^{3 / 8}
      \qty(\int_{\mathbb{T}^d}\frac{\dd[d]{l}}{(2\pi)^d} \frac{\abs*{\ft D(l)}^{\lambda + \rho - 1}}{\qty\big(1 - \ft D(l - k)^2)^2})^{3 / 8}\\
      &\times
      \qty(\int_{\mathbb{T}^d}\frac{\dd[d]{l}}{(2\pi)^d} \frac{\abs*{\ft D(l)}^{\lambda + \rho - 1}}{\qty\big(1 - \ft D(l)^2)^2})^{1 / 4}\\
    \leq {} & 4 \varepsilon_2^{(\floor{(\lambda + \rho - 1) / 2})}.
  \end{align*}
\end{proof}

\begin{proof}[Proof of Lemma~\ref{lem:bounds-wrt-D}]
  Note that
  \begin{align*}
    2 \qty(1 - \ft D(k)^2) - \qty(1 - \ft D(2k))
    &= 1 - 2 \prod_{j=1}^{d} \cos^2 k_j + \prod_{j=1}^{d} \cos(2k_j)\\
    &= 1 - 2 \prod_{j=1}^{d} \cos^2 k_j + \prod_{j=1}^{d} \qty(2 \cos^2 k_j - 1)
  \end{align*}
  on $\opSpace$.  Let $h_d(\xi_1, \dots, \xi_d) \defeq 1 - 2 \prod_{j=1}^{d} \xi_j + \prod_{j=1}^{d} \qty(2 \xi_j - 1)$.
  It suffices to show that $h_d(\xi_1, \dots, \xi_d) \allowbreak \geq 0$ for every $(\xi_1, \dots, \xi_d) \in \interval{0}{1}^d$.
  To do so, we use the method of induction.

  When $d = 1$, $h_1(\xi_1) = 0$ for every $\xi_1$.  Next, we assume that $h_d(\xi_1, \dots, \xi_d) \allowbreak \geq 0$ holds in some dimension $d$.
  Fix $\xi_j$ for each $j = 1, \dots, d$.  Then, $h_{d+1}(\xi_1, \dots, \xi_{d+1})$ is regarded as a linear function with respect to
  $\xi_{d+1}$ and hence takes the minimum value at either the boundary point $0$ or $1$.  When $\xi_{d+1}=0$, clearly,
  \[
    h_{d+1}(\xi_1, \dots, \xi_d, 0) = 1 - \prod_{j=1}^{d} \qty(2 \xi_j - 1) \geq 0.
  \]
  When $\xi_{d+1}=1$, by the induction hypothesis,
  \[
    h_{d+1}(\xi_1, \dots, \xi_d, 1) = 1 - 2 \prod_{j=1}^{d} \xi_j + \prod_{j=1}^{d} \qty(2 \xi_j - 1) \geq 0.
  \]
  Therefore, $h_d(\xi_1, \dots, \xi_d) \geq 0$ holds in every dimension $d$.
\end{proof}

\section{Completing the bootstrap argument}\label{sec:proof-bootstrapArgument}

In this section, we prove Proposition~\ref{prp:op-bootstrapArgument} by a computer-assisted approach.

\begin{proof}[Proof of Proposition~\ref{prp:op-bootstrapArgument}]
  Set
  \begin{equation}
    \label{eq:numericalValuesForBootstrap}
    d = \num{9}, \quad
    K_1 = \num{1.0020}, \quad
    K_2 = \num{1.0500}, \quad
    K_3 = \num{1.2500},
  \end{equation}
  which satisfy the initial conditions of Proposition~\ref{prp:op-initialCondition}.
  From Lemma~\ref{lem:basic-diagrams} and Table~\ref{tbl:numericalValues-rw-bcc} in Appendix~\ref{sec:numericalComputations-rwQuantities}, the basic diagrams are bounded above as
  \begin{align*}
    B_{p, m}^{(0, 2)} &\leq \num{2.37279e-3},&
    B_{p, m}^{(2, 2)} &\leq \num{2.11688e-4},\\
    T_{p, m}^{(1, 1)} &\leq \num{3.96190e-3},&
    T_{p, m}^{(2, 2)} &\leq \num{4.40247e-4},\\
    \frac{\flt V_{p, m}^{(1, 1)}(k)}{1 - \ft D(k)} &\leq \num{4.81033e-2},&
    \frac{\flt V_{p, m}^{(1, 2)}(k)}{1 - \ft D(k)} &\leq \num{1.71979e-2},\\
    \frac{\flt V_{p, m}^{(2, 1)}(k)}{1 - \ft D(k)} &\leq \num{3.91493e-2},&
    \frac{\flt V_{p, m}^{(2, 2)}(k)}{1 - \ft D(k)} &\leq \num{3.92276e-2},\\
    \frac{\flt V_{p, m}^{(1, 3)}(k)}{1 - \ft D(k)} &\leq \num{1.72315e-2},&
    \frac{\flt V_{p, m}^{(3, 1)}(k)}{1 - \ft D(k)} &\leq \num{6.59882e-2}.
  \end{align*}
  The 4th inequality in the above implies the sufficient condition of Lemma~\ref{lem:precise-diagrammaticBounds}.
  In addition, from Lemma~\ref{lem:precise-diagrammaticBounds},
  \begin{gather*}
    \flt\Pi_p^\mathrm{even}(0, m) \leq \num{2.30399e-6},\\
    \flt\Pi_p^\mathrm{odd}(0, m) \leq \num{1.09838e-4},\\
    \sum_{(x, t)} \qty(\Pi_p^\mathrm{even}(x, t) + \Pi_p^\mathrm{odd}(x, t)) m^t t
      \leq \num{3.84797e-4},\\
    \sum_{(x, t)} \qty(\Pi_p^\mathrm{even}(x, t) + \Pi_p^\mathrm{odd}(x, t)) m^t \frac{1 - \cos k\cdot x}{1 - \ft D(k)}
      \leq \num{2.02991e-2},
  \end{gather*}
  which imply the absolute convergence of the alternating series \eqref{eq:def-sumOfLaceExpansionCoefficients}.
  The 1st and 2nd inequalities in the above imply the sufficient condition of Lemma~\ref{lem:improvedBounds}.
  Then, by \eqref{eq:improvedBound-g1}, \eqref{eq:improvedBound-g2} and \eqref{eq:improvedBound-g3}, we obtain
  \begin{align}
    g_1(p, m) &\leq \num{1.0002} < K_1,  \label{eq:numericalBound-g1}\\
    g_2(p, m) &\leq \num{1.0430} < K_2,\\
    g_3(p, m) &\leq \num{1.2343} < K_3.
  \end{align}

  The random-walk quantities $\varepsilon_i^{(\nu)}$ ($i=1, 2$ and $\nu\in\nnInt$) monotonically decrease with respect to $d$ due to the estimate in Appendix~\ref{sec:numericalComputations-rwQuantities}.
  Recall that the upper bounds in Lemma~\ref{lem:basic-diagrams} depend only on $\{\varepsilon_i^{(\nu)}\}_{i=1, 2; \nu\in\nnInt}$ and $\{K_i\}_{i=1}^{3}$.
  Therefore, the inequalities $g_i(p, m) < K_i$, $i=1, 2, 3$ also hold in $d>9$ with the same numerical values \eqref{eq:numericalValuesForBootstrap} of $K_i$ as in $d=9$.
  This completes the proof of Proposition~\ref{prp:op-bootstrapArgument}.
\end{proof}

\begin{remark}\label{rem:computer-assisted-proof}
  One may wonder why the lace expansion analysis in this paper does not work in $d=8$.
  That is caused by the too large upper bounds in Lemmas~\ref{lem:precise-diagrammaticBounds} and \ref{lem:basic-diagrams} as well as \eqref{eq:improvedBound-g1}, \eqref{eq:improvedBound-g2} and \eqref{eq:improvedBound-g3}.
  To find the numerical values \eqref{eq:numericalValuesForBootstrap}, we get assistance from computers. 
  Specifically, we divide the intervals $\interval[open left]{\num{1.0}}{\num{1.1}}$, $\interval[open left]{\num{1.0}}{\num{1.1}}$ and $\interval[open left]{\num{1.0}}{\num{1.3}}$ for $K_1$, $K_2$ and $K_3$, respectively, into \num{100} and check the inequalities $g_i(p, m) < K_i$, $i=1, 2, 3$ for each dividing point in $d=9$.
  We do not find proper values of $\{K_i\}_{i=1}^{3}$ in $d=8$.
  There is a possibility that our selections of the parameters (intervals and the division number) are bad.
  However, it is essentially more important to improve the upper bounds than the computer-assisted part.
\end{remark}

\section*{Acknowledgments}

YK and SH are grateful to the National Center for Theoretical Sciences (NCTS)
for providing his support and hospitality during his visit to National Chengchi University (NCCU) in the period May--June 2018.
YK is also grateful for the financial support from NCCU during the visits from November 5--17, 2018,
and from December 17--27, 2019.
The work of LCC was supported by the Grant MOST 109-2115-M-004-010-MY3.
Finally, we would like to thank Akira Sakai and the anonymous referees for their valuable comments on an earlier version of the manuscript.

\appendix

\section{Random walk quantities on the BCC lattice}\label{sec:numericalComputations-rwQuantities}

In this section, we show the expressions of the upper bounds on \eqref{eq:def-rwQuantities}
to compute their numerical values.
We use Stirling's formula due to the property \eqref{eq:rwTransitonProbability} of the BCC lattice,
which helps us to obtain highly-precise estimates.
Recall \eqref{eq:rwTransitonProbability}.
Note that
\begin{equation}
  \label{eq:StirlingFormula-rwTransitonProbability}
  D^{\conv 2n}(o) = \qty(\binom{2n}{n} \frac{1}{2^{2n}})^d \leq \qty(\frac{1}{\sqrt{\pi n}})^d,
\end{equation}
for  $n\in\mathbb{N}$.
Now, fix a sufficiently large number $N$.
For $\nu\in\mathbb{N}$, \eqref{eq:StirlingFormula-rwTransitonProbability} and the bounds given by the integral test for convergence imply that
\begin{align*}
  \varepsilon_1^{(\nu)}
  &= \sum_{n=\nu}^{\infty} D^{\conv 2n}(o)\\
  &\leq \sum_{n=\nu}^{\nu + N - 1} D^{\conv 2n}(o) + \sum_{n=\nu + N}^{\infty} \frac{1}{(\pi n)^{d/2}}\\
  &\leq \sum_{n=0}^{N - 1} D^{\conv 2 (n + \nu)}(o) + \frac{1}{\pi^{d/2} (\nu + N)^{d/2}} + \frac{1}{\pi^{d/2}} \int_{\nu + N}^{\infty} s^{-d/2} \dd{s}\\
  &\leq \sum_{n=0}^{N - 1} D^{\conv 2 (n + \nu)}(o) + \frac{1}{\pi^{d/2} (\nu + N)^{d/2}} + \frac{2}{\pi^{d/2} \left(d - 2\right)} \left(\nu + N\right)^{(2 - d)/2}.
\end{align*}
Moreover,
\begin{align*}
  \varepsilon_2^{(\nu)}
  &= \sum_{n=\nu}^{\infty} \left(n - \nu + 1\right) D^{\conv 2n}(o)\\
  &\leq \sum_{n=\nu}^{\nu + N - 1} \left(n - \nu + 1\right) D^{\conv 2n}(o) + \sum_{n=\nu + N}^{\infty} \left(n - \nu + 1\right) \frac{1}{(\pi n)^{d/2}}\\
  &\leq \sum_{n=0}^{N - 1} \left(n + 1\right) D^{\conv 2 (n + \nu)}(o) + \frac{1}{\pi^{d/2} (\nu + N)^{d/2 - 1}}\\
  &\quad + \frac{1}{\pi^{d/2}} \int_{\nu + N}^{\infty} s^{1 - d/2} \dd{s} - \frac{\nu - 1}{\pi^{d/2}} \int_{\nu + N}^{\infty} s^{-d/2} \dd{s}\\
  &\leq \sum_{n=0}^{N - 1} \left(n + 1\right) D^{\conv 2 (n + \nu)}(o) + \frac{1}{\pi^{d/2} (\nu + N)^{(d - 2)/2}}\\
  &\quad + \frac{2}{\pi^{d/2} \left(d - 4\right)} \left(\nu + N\right)^{(4 - d)/2} - \frac{2 \left(\nu - 1\right)}{\pi^{d/2} \left(d - 2\right)} \left(\nu + N\right)^{(2 - d)/2}.
\end{align*}
When $N=500$, we obtain numerical values in Table~\ref{tbl:numericalValues-rw-bcc} by directly computing these expressions.

\begin{table}[bhpt]
  \caption{Numerical values of random-walk quantities on the BCC lattice.}
  \begin{subtable}{\linewidth}
    \centering
    \caption{The random-walk loops.}
    \begin{tabular}{|c|c|c|c|}
      \hline
      $d$ & $\varepsilon_1^{(1)}$ & $\varepsilon_1^{(2)}$ & $\varepsilon_1^{(3)}$\\
      \hline
      \num{3}  & \num{3.932159419e-01} & \num{2.682159060e-01} & \num{2.154814953e-01}\\
      \num{4}  & \num{1.186366900e-01} & \num{5.613668875e-02} & \num{3.636129693e-02}\\
      \num{5}  & \num{4.682555747e-02} & \num{1.557555743e-02} & \num{8.159785906e-03}\\
      $\vdots$ & $\vdots$ & $\vdots$ & $\vdots$\\
      \num{9}  & \num{2.143603149e-03} & \num{1.904781484e-04} & \num{4.382837046e-05}\\
      \num{10} & \num{1.044316187e-03} & \num{6.775368687e-05} & \num{1.276002017e-05}\\
      \num{11} & \num{5.126699321e-04} & \num{2.438868202e-05} & \num{3.766057003e-06}\\
      \num{12} & \num{2.529972928e-04} & \num{8.856667719e-06} & \num{1.123183338e-06}\\
      \hline
    \end{tabular}
  \end{subtable}
  \par\bigskip
  \begin{subtable}{\linewidth}
    \centering
    \caption{The random-walk bubbles}
    \begin{tabular}{|c|c|c|}
      \hline
      $d$ & $\varepsilon_2^{(1)}$ & $\varepsilon_2^{(2)}$ \\
      \hline
      \num{3}  & $\infty$ & $\infty$\\
      \num{4}  & $\infty$ & $\infty$\\
      \num{5}  & \num{1.125786856e-01} & \num{6.575312431e-02}\\
      $\vdots$ & $\vdots$ & $\vdots$\\
      \num{9}  & \num{2.410376015e-03} & \num{2.667728665e-04}\\
      \num{10} & \num{1.132062934e-03} & \num{8.774674704e-05}\\
      \num{11} & \num{5.425031298e-04} & \num{2.983319769e-05}\\
      \num{12} & \num{2.633784960e-04} & \num{1.038120322e-05}\\
      \hline
    \end{tabular}
  \end{subtable}
  \label{tbl:numericalValues-rw-bcc}
\end{table}

\section{Proof of Lemma~\ref{lem:ksp-bound}}\label{sec:proof-diagrammaticBounds}

In this section, we prove Lemma~\ref{lem:ksp-bound}.
It is the result of Lemma~\ref{lem:xsp-bound} below.
In this lemma, we first divide percolation events by cases, which are based on the fact of whether or not double connections are collapsed and where a path intersects a double connection.
Next, these observations yield a lot of disjoint events such as $\set{\vb*{x} \rightarrow \vb*{y}} \circ \set{\vb*{u} \rightarrow \vb*{v}}$, where $\circ$ denotes that the left and right events must occur disjointly.
Then, the BK~\cite{bk85} inequality provides the product of the two-point functions such as
\begin{equation*}
  \mathbb{P}_p\bigl(\set{\vb*{x} \rightarrow \vb*{y}} \circ \set{\vb*{u} \rightarrow \vb*{v}}\bigr)
  \leq \mathbb{P}_p(\vb*{x} \rightarrow \vb*{y}) \mathbb{P}_p(\vb*{u} \rightarrow \vb*{v})
  = \varphi_p(\vb*{y} - \vb*{x}) \varphi_p(\vb*{v} - \vb*{u}).
\end{equation*}
Such a product is represented by a diagram like \eqref{eq:diagrammaticRepresentation}.
Also, the method to obtain Lemma~\ref{lem:ksp-bound} from Lemma~\ref{lem:xsp-bound} is based on decomposing diagrams by the inequality
\begin{equation}
  \label{eq:decomposingDiagrams}
  \norm{f g}_1 \defeq \sum_{\vb*{x}\in\opSpaceTime} f(\vb*{x}) g(\vb*{x}) \leq \norm{f}_\infty \norm{g}_1,
\end{equation}
where $f$ and $g$ are functions depending on a sum of a product of the two-point functions.

\begin{lemma}\label{lem:xsp-bound}
  For $\vb*{x}\in\opSpaceTime$ and $N\geq 2$,
  \begin{align}
    &\begin{multlined}[b]
      \abs{\pi_p^{(0)}(\vb*{x}) - \pi_p^{(1)}(\vb*{x})}\\
      \leq
      \frac{1}{2}\times
      \begin{tikzpicture}[op diagram]
        \laceDraw[out=60, in=120, relative, first=2] (0,0) (0,2);
        \laceDraw[out=-60, in=-120, relative, first=2] (0,0) (0,2);
        \lacePutLabel[anchor=north] (0,0) {$\vb*{o}$};
        \lacePutLabel[anchor=south] (0,2) {$\vb*{x}$};
      \end{tikzpicture}
      +
      \begin{tikzpicture}[op diagram]
        \laceDraw[first=2] (0,0) (0,3);
        \draw[lace connectivity={first=2}] (0,0) to[bend left=80, distance=30] (0,2) node[vertex] {};
        \laceDraw[out=-80, in=-100, relative, last=2] (0,0) (0,3);
        \lacePutLabel[anchor=north] (0,0) {$\vb*{o}$};
        \lacePutLabel[anchor=south] (0,3) {$\vb*{x}$};
      \end{tikzpicture}
      + \frac{3}{2}\times
      \begin{tikzpicture}[op diagram]
        \laceDraw[out=60, in=120, relative, first=2] (0,0) (0,2);
        \laceDraw[out=-60, in=-120, relative, first=2] (0,0) (0,2);
        \laceDraw[first=1, endpoint-shape=line] (0,2) (0,3.5);
        \draw[lace connectivity={last=3}] (0,0) to[bend right=85, distance=60] (0,3.5);
        \lacePutLabel[anchor=north] (0,0) {$\vb*{o}$};
        \lacePutLabel[anchor=south] (0,3.5) {$\vb*{x}$};
      \end{tikzpicture}
      + 3\times
      \begin{tikzpicture}[op diagram]
        \laceDraw[out=60, in=120, relative, first=2] (0,0) (0,2);
        \laceDraw[out=-60, in=-120, relative, first=2] (0,0) (0,2);
        \laceDraw[out=60, in=120, relative, first=2, endpoint-shape=circle] (0,2) (0,4);
        \laceDraw[out=-60, in=-120, relative, first=2] (0,2) (0,4);
        \lacePutLabel[anchor=north] (0,0) {$\vb*{o}$};
        \lacePutLabel[anchor=south] (0,4) {$\vb*{x}$};
      \end{tikzpicture}
      + 2\times
      \begin{tikzpicture}[op diagram]
        \laceDraw[first=2] (0,0) (-1.3,2);
        \laceDraw[first=1, endpoint-shape=line] (0,0) (1.3,1.5);
        \laceDraw[first=1, endpoint-shape=line] (1.3,1.5) (-1.3,2);
        \laceDraw[first=1, endpoint-shape=line] (-1.3,2) (0,3.5);
        \laceDraw[last=2, endpoint-shape=line] (1.3,1.5) (0,3.5);
        \lacePutLabel[anchor=north] (0,0) {$\vb*{o}$};
        \lacePutLabel[anchor=south] (0,3.5) {$\vb*{x}$};
      \end{tikzpicture}
      +
      \begin{tikzpicture}[op diagram]
        \laceDraw[first=2] (0,0) (1.3,2);
        \laceDraw[first=1] (0,0) (-1.3,1.5);
        \laceDraw[first=1, endpoint-shape=line] (-1.3,1.5) (1.3,2);
        \laceDraw[first=1, endpoint-shape=line] (1.3,2) (0,3.5);
        \laceDraw[last=2] (-1.3,1.5) (0,3.5);
        \lacePutLabel[anchor=north] (0,0) {$\vb*{o}$};
        \lacePutLabel[anchor=south] (0,3.5) {$\vb*{x}$};
      \end{tikzpicture},
    \end{multlined}
    \label{eq:xsp-bound-first-zeroth}\\
    &\begin{multlined}[b]
      \pi_p^{(2)}(\vb*{x}) \leq
      \begin{tikzpicture}[op diagram]
        \laceDraw[out=60, in=120, relative, first=2] (0,0) (0,2);
        \laceDraw[out=-60, in=-120, relative, first=2] (0,0) (0,2);
        \laceDraw[first=1, endpoint-shape=line] (0,2) (0,3.5);
        \draw[lace connectivity={last=3}] (0,0) to[bend right=85, distance=60] (0,3.5);
        \lacePutLabel[anchor=north] (0,0) {$\vb*{o}$};
        \lacePutLabel[anchor=south] (0,3.5) {$\vb*{x}$};
      \end{tikzpicture}
      + 2\times
      \begin{tikzpicture}[op diagram]
        \laceDraw[out=60, in=120, relative, first=2] (0,0) (0,2);
        \laceDraw[out=-60, in=-120, relative, first=2] (0,0) (0,2);
        \laceDraw[out=60, in=120, relative, first=2, endpoint-shape=circle] (0,2) (0,4);
        \laceDraw[out=-60, in=-120, relative, first=2] (0,2) (0,4);
        \lacePutLabel[anchor=north] (0,0) {$\vb*{o}$};
        \lacePutLabel[anchor=south] (0,4) {$\vb*{x}$};
      \end{tikzpicture}
      +
      \begin{tikzpicture}[op diagram]
        \laceDraw[first=2] (0,0) (-1.3,2);
        \laceDraw[first=1, endpoint-shape=line] (0,0) (1.3,1.5);
        \laceDraw[first=1, endpoint-shape=line] (1.3,1.5) (-1.3,2);
        \laceDraw[first=1, endpoint-shape=line] (-1.3,2) (0,3.5);
        \laceDraw[last=2, endpoint-shape=line] (1.3,1.5) (0,3.5);
        \lacePutLabel[anchor=north] (0,0) {$\vb*{o}$};
        \lacePutLabel[anchor=south] (0,3.5) {$\vb*{x}$};
      \end{tikzpicture}
      +
      \begin{tikzpicture}[op diagram]
        \laceDraw[first=2] (0,0) (1.3,2);
        \laceDraw[first=1] (0,0) (-1.3,1.5);
        \laceDraw[first=1, endpoint-shape=line] (-1.3,1.5) (1.3,2);
        \laceDraw[first=1, endpoint-shape=line] (1.3,2) (0,3.5);
        \laceDraw[last=2] (-1.3,1.5) (0,3.5);
        \lacePutLabel[anchor=north] (0,0) {$\vb*{o}$};
        \lacePutLabel[anchor=south] (0,3.5) {$\vb*{x}$};
      \end{tikzpicture}\\
      + \frac{1}{2}\times
      \begin{tikzpicture}[op diagram]
        \laceDraw[out=60, in=120, relative, first=2] (1.3,0) (-1.3,1.5);
        \laceDraw[first=2] (1.3,0) (-1.3,1.5);
        \laceDraw[first=1, endpoint-shape=line] (-1.3,1.5) (-1.3,3.5);
        \laceDraw[last=3, endpoint-shape=circle] (1.3,0) (1.3,3);
        \draw (1.3,3) -- (-1.3,3.5);
        \laceDraw[first=1, endpoint-shape=line] (-1.3,3.5) (0,5);
        \laceDraw[first=2] (1.3,3) (0,5);
        \lacePutLabel[anchor=north] (1.3,0) {$\vb*{o}$};
        \lacePutLabel[anchor=south] (0,5) {$\vb*{x}$};
      \end{tikzpicture}
      +
      \begin{tikzpicture}[op diagram]
        \laceDraw[first=2] (0,0) (-1.3,2);
        \laceDraw[first=1] (0,0) (1.3,1.5);
        \draw (1.3,1.5) -- (-1.3,2);
        \laceDraw[first=1, endpoint-shape=line] (-1.3,2) (-1.3,4);
        \laceDraw[first=1, endpoint-shape=line] (1.3,1.5) (1.3,3.5);
        \draw (1.3,3.5) -- (-1.3,4);
        \laceDraw[first=1, endpoint-shape=line] (-1.3,4) (0,5.5);
        \laceDraw[first=2, endpoint-shape=line] (1.3,3.5) (0,5.5);
        \lacePutLabel[anchor=north] (0,0) {$\vb*{o}$};
        \lacePutLabel[anchor=south] (0,5.5) {$\vb*{x}$};
      \end{tikzpicture}
      + \frac{1}{2}\times
      \begin{tikzpicture}[op diagram]
        \laceDraw[out=60, in=120, relative, first=2] (1.3,0) (-1.3,1.5);
        \laceDraw[first=2] (1.3,0) (-1.3,1.5);
        \laceDraw[first=1, endpoint-shape=line] (-1.3,1.5) (1.3,3);
        \laceDraw[last=3, endpoint-shape=circle] (1.3,0) (-1.3,3.5);
        \draw (1.3,3) -- (-1.3,3.5);
        \laceDraw[first=1] (-1.3,3.5) (0,5);
        \laceDraw[first=2, endpoint-shape=line] (1.3,3) (0,5);
        \lacePutLabel[anchor=north] (1.3,0) {$\vb*{o}$};
        \lacePutLabel[anchor=south] (0,5) {$\vb*{x}$};
      \end{tikzpicture}
      +
      \begin{tikzpicture}[op diagram]
        \laceDraw[first=2] (0,0) (-1.3,2);
        \laceDraw[first=1] (0,0) (1.3,1.5);
        \draw (1.3,1.5) -- (-1.3,2);
        \laceDraw[first=1, endpoint-shape=line] (-1.3,2) (1.3,3.5);
        \laceDraw[first=1, endpoint-shape=line] (1.3,1.5) (-1.3,4);
        \draw (1.3,3.5) -- (-1.3,4);
        \laceDraw[first=1, endpoint-shape=line] (-1.3,4) (0,5.5);
        \laceDraw[first=2, endpoint-shape=line] (1.3,3.5) (0,5.5);
        \lacePutLabel[anchor=north] (0,0) {$\vb*{o}$};
        \lacePutLabel[anchor=south] (0,5.5) {$\vb*{x}$};
      \end{tikzpicture},
    \end{multlined}
    \label{eq:xsp-bound-second}\\
    &\begin{multlined}[b]
      \pi_p^{(N)}(\vb*{x}) \leq
      \sum_{\substack{\{\vb*{y}_i\}_{i=1}^{N}, \{\vb*{u}_i\}_{i=1}^{N}\\ (\forall i\colon \TimeOf(\vb*{y}_i) > \TimeOf(\vb*{u}_{i-1}))}}
      \left(
        2 \KroneckerDelta{\vb*{u}_1}{\vb*{o}} \KroneckerDelta{\vb*{y}_1}{\vb*{o}} \times
        \begin{tikzpicture}[op diagram]
          \laceDraw[last=1, endpoint-shape=line] (0,0) (-1.3,2) node[anchor=south] {$\vb*{u}_2$};
          \laceDraw[last=1, endpoint-shape=line] (0,0) (1.3,1.5) node[anchor=south west] {$\vb*{y}_2$};
          \lacePutLabel[anchor=north] (0,0) {$\vb*{o}$};
        \end{tikzpicture}
        + \frac{1}{2} \KroneckerDelta{\vb*{y}_1}{\vb*{o}} \times
        \begin{tikzpicture}[op diagram]
          \laceDraw[out=60, in=120, relative, first=2] (1.3,0) (-1.3,1.5);
          \laceDraw[first=2] (1.3,0) (-1.3,1.5) node[vertex] {} node[anchor=east] {$\vb*{u}_1$};
          \laceDraw[last=1, endpoint-shape=line] (-1.3,1.5) (-1.3,3.5) node[anchor=south] {$\vb*{u}_2$};
          \laceDraw[last=3, endpoint-shape=line] (1.3,0) (1.3,3) node[anchor=south] {$\vb*{y}_2$};
          \lacePutLabel[anchor=north] (1.3,0) {$\vb*{o}$};
        \end{tikzpicture}
        \right.\\ \left.
        +
        \begin{tikzpicture}[op diagram, label distance=-4pt]
          \laceDraw[last=1, endpoint-shape=line] (0,0) (-1.3,2) node[anchor=east] {$\vb*{u}_1$};
          \laceDraw[last=1, endpoint-shape=line] (0,0) (1.3,1.5) node[anchor=west] {$\vb*{y}_1$};
          \draw (1.3,1.5) -- (-1.3,2);
          \laceDraw[last=1, endpoint-shape=line] (-1.3,2) (-1.3,4) node[anchor=south] {$\vb*{u}_2$};
          \laceDraw[last=1, endpoint-shape=line] (1.3,1.5) (1.3,3.5) node[anchor=south] {$\vb*{y}_2$};
          \lacePutLabel[anchor=north] (0,0) {$\vb*{o}$};
        \end{tikzpicture}
        + \frac{1}{2} \KroneckerDelta{\vb*{y}_1}{\vb*{o}} \times
        \begin{tikzpicture}[op diagram]
          \laceDraw[out=60, in=120, relative, first=2] (1.3,0) (-1.3,1.5);
          \laceDraw[first=2] (1.3,0) (-1.3,1.5) node[vertex] {} node[anchor=east] {$\vb*{u}_1$};
          \laceDraw[last=1, endpoint-shape=line] (-1.3,1.5) (1.3,3) node[anchor=south west] {$\vb*{y}_2$};
          \laceDraw[last=3, endpoint-shape=line] (1.3,0) (-1.3,3.5) node[anchor=south] {$\vb*{u}_2$};
          \lacePutLabel[anchor=north] (1.3,0) {$\vb*{o}$};
        \end{tikzpicture}
        +
        \begin{tikzpicture}[op diagram, label distance=-4pt]
          \laceDraw[last=1, endpoint-shape=line] (0,0) (-1.3,2) node[anchor=east] {$\vb*{u}_1$};
          \laceDraw[last=1, endpoint-shape=line] (0,0) (1.3,1.5) node[anchor=west] {$\vb*{y}_1$};
          \draw (1.3,1.5) -- (-1.3,2);
          \laceDraw[last=1, endpoint-shape=line] (-1.3,2) (1.3,3.5) node[anchor=south west] {$\vb*{y}_2$};
          \laceDraw[last=1, endpoint-shape=line] (1.3,1.5) (-1.3,4) node[anchor=south] {$\vb*{u}_2$};
          \lacePutLabel[anchor=north] (0,0) {$\vb*{o}$};
        \end{tikzpicture}
      \right)\\
      \times \prod_{i=2}^{N-1}
      \left(
        \begin{tikzpicture}[op diagram]
          \draw (1.3,0) -- (-1.3,0.5);
          \laceDraw[last=1, endpoint-shape=line] (-1.3,0.5) (-1.3,2.5) node[anchor=south] {$\vb*{u}_{i+1}$};
          \laceDraw[last=1, endpoint-shape=line] (1.3,0) (1.3,2) node[anchor=south] {$\vb*{y}_{i+1}$};
          \lacePutLabel[anchor=north] (1.3,0) {$\vb*{y}_i$};
          \lacePutLabel[anchor=north] (-1.3,0.5) {$\vb*{u}_i$};
        \end{tikzpicture}
        +
        \begin{tikzpicture}[op diagram]
          \draw (1.3,0) -- (-1.3,0.5);
          \laceDraw[last=1, endpoint-shape=line] (-1.3,0.5) (1.3,2) node[anchor=south west] {$\vb*{y}_{i+1}$};
          \laceDraw[last=1, endpoint-shape=line] (1.3,0) (-1.3,2.5) node[anchor=south] {$\vb*{u}_{i+1}$};
          \lacePutLabel[anchor=north] (1.3,0) {$\vb*{y}_i$};
          \lacePutLabel[anchor=north] (-1.3,0.5) {$\vb*{u}_i$};
        \end{tikzpicture}
      \right)
      \times
      \begin{tikzpicture}[op diagram]
        \draw (1.3,0) -- (-1.3,0.5);
        \laceDraw[last=1, endpoint-shape=line] (0,2) (-1.3,0.5) node[anchor=north] {$\vb*{u}_N$};
        \laceDraw[last=1, endpoint-shape=line] (0,2) (1.3,0) node[anchor=north] {$\vb*{y}_N$};
        \lacePutLabel[anchor=south] (0,2) {$\vb*{x}$};
      \end{tikzpicture}.
    \end{multlined}
    \label{eq:xsp-bound-general}
  \end{align}
\end{lemma}

\begin{proof}[Proof of \eqref{eq:xsp-bound-first-zeroth} in Lemma~\ref{lem:xsp-bound}]
  This proof is inspired by \cite[Section~3.1]{hs05}.
  First, we rewrite the event in $\pi_p^{(0)}(\vb*{x})$.
  To do so, we introduce an ordering among bonds as follows.
  Let $\mathscr{B}((x, t)) = \mathopen{\{}((x, t), (y, t+1)) \in (\opSpaceTime)^2 \mid x - y \in \mathscr{N}^d\mathclose{\}}$
  for $(x, t) \in \opSpaceTime$, which is the set of directed bonds whose bottoms are $(x, t)$.
  We can order the elements in $\mathscr{B}((x, t))$ because it is a finite set.
  For a pair of bonds $\vb*{b}_1$ and $\vb*{b}_2$, we write $\vb*{b}_1 \prec \vb*{b}_2$
  if $\vb*{b}_1$ is smaller than $\vb*{b}_2$ in that ordering.
  Then, we obtain
  \begin{align}
    \pi_p^{(0)}(\vb*{x}) = {} \MoveEqLeft[0] \mathbb{P}_p(\vb*{o} \Rightarrow \vb*{x}) - \KroneckerDelta{\vb*{o}}{\vb*{x}} \notag\\
    = {} & \mathbb{P}_p\qty(\bigsqcup_{\vb*{b} \in \mathscr{B}(\vb*{o})} \qty\Big{\qty\big{\Set{\vb*{o} \rightarrow \vb*{x}} \circ \Set{\vb*{b} \rightarrow \vb*{x}}} \cap \Set{\forall \vb*{b}'\prec \vb*{b}, \vb*{b}' \not\rightarrow \vb*{x}}}) \notag\\
    = {} & \mathbb{P}_p\qty(\bigsqcup_{\vb*{b} \in \mathscr{B}(\vb*{o})} \qty\Big{\qty\big{\Set{\vb*{o} \rightarrow \vb*{x}} \circ \Set{\vb*{b} \rightarrow \vb*{x}}} \cap \Set{\forall \vb*{b}'\succ \vb*{b}, \vb*{b}' \not\rightarrow \vb*{x}}}) \notag\\
    = {} & \frac{1}{2} \sum_{\vb*{b} \in \mathscr{B}(\vb*{o})}
      \Bigl(
        \mathbb{P}_p\qty(\qty\big{\Set{\vb*{o} \rightarrow \vb*{x}} \circ \Set{\vb*{b} \rightarrow \vb*{x}}} \cap \Set{\forall \vb*{b}'\prec \vb*{b}, \vb*{b}' \not\rightarrow \vb*{x}}) \notag\\
        & + \mathbb{P}_p\qty(\qty\big{\Set{\vb*{o} \rightarrow \vb*{x}} \circ \Set{\vb*{b} \rightarrow \vb*{x}}} \cap \Set{\forall \vb*{b}'\succ \vb*{b}, \vb*{b}' \not\rightarrow \vb*{x}})
      \Bigr).
    \label{eq:rewriting-pi0}
  \end{align}

  Next, we rewrite the event in $\pi_p^{(1)}(\vb*{x})$.  By definition, we can easily see that
  \begin{equation}
    \label{eq:superset-cuttingLastSausage}
    E\qty(\vb*{b}, \vb*{x}; \tilde{\mathcal{C}}^{\vb*{b}}(\vb*{y})) \subset \Set{\vb*{y} \rightarrow \vb*{x}} \circ \Set{\vb*{b} \rightarrow \vb*{x}}.
  \end{equation}
  By splitting the event $\set{\vb*{o} \Rightarrow \underline{\vb*{b}}}$ into two events
  based on whether $\underline{\vb*{b}}$ equals $\vb*{o}$ or not,
  \begin{align}
    \pi_p^{(1)}(\vb*{x}) = {} \MoveEqLeft[0] \sum_{\vb*{b}} \mathbb{P}_p\qty(\Set{\vb*{o} \Rightarrow \underline{\vb*{b}}} \cap E(\vb*{b}, \vb*{x}; \tilde{\mathcal{C}}^{\vb*{b}}(\vb*{o}))) \notag\\
    = {} & \sum_{\vb*{b} \in \mathscr{B}(\vb*{o})} \mathbb{P}_p\qty(E(\vb*{b}, \vb*{x}; \tilde{\mathcal{C}}^{\vb*{b}}(\vb*{o}))) + \sum_{\vb*{b}} \mathbb{P}_p\qty(\Set{\vb*{o} \Rightarrow \underline{\vb*{b}} \neq \vb*{o}} \cap E(\vb*{b}, \vb*{x}; \tilde{\mathcal{C}}^{\vb*{b}}(\vb*{o}))) \notag\\
    = {} & \sum_{\vb*{b} \in \mathscr{B}(\vb*{o})}
      \qty\bigg(
        \mathbb{P}_p\qty\big(\Set{\vb*{o} \rightarrow \vb*{x}} \circ \Set{\vb*{b} \rightarrow \vb*{x}})
        - \mathbb{P}_p\qty(
          \qty\big{\Set{\vb*{o} \rightarrow \vb*{x}} \circ \Set{\vb*{b} \rightarrow \vb*{x}}}
          \setminus E(\vb*{b}, \vb*{x}; \tilde{\mathcal{C}}^{\vb*{b}}(\vb*{o}))
        )
      ) \notag\\
      & + \sum_{\vb*{b}} \mathbb{P}_p\qty(\Set{\vb*{o} \Rightarrow \underline{\vb*{b}} \neq \vb*{o}} \cap E(\vb*{b}, \vb*{x}; \tilde{\mathcal{C}}^{\vb*{b}}(\vb*{o}))).
    \label{eq:rewriting-pi1}
  \end{align}
  Since both
  \[
    \Set{\forall \vb*{b}'\prec \vb*{b}, \vb*{b}' \not\rightarrow \vb*{x}} \sqcup \Set{\exists \vb*{b}''\prec \vb*{b}, \vb*{b}'' \rightarrow \vb*{x}}
    \qand
    \Set{\forall \vb*{b}'\succ \vb*{b}, \vb*{b}' \not\rightarrow \vb*{x}} \sqcup \Set{\exists \vb*{b}''\succ \vb*{b}, \vb*{b}'' \rightarrow \vb*{x}}
  \]
  are the whole event, respectively,
  \begin{align}
    \MoveEqLeft \mathbb{P}_p\qty\big(\Set{\vb*{o} \rightarrow \vb*{x}} \circ \Set{\vb*{b} \rightarrow \vb*{x}}) \notag\\
    = {} & \underbrace{\mathbb{P}_p
      \qty(
        \qty\big{\Set{\vb*{o} \rightarrow \vb*{x}} \circ \Set{\vb*{b} \rightarrow \vb*{x}}}
        \cap
        \qty\big{
          \Set{\forall \vb*{b}'\prec \vb*{b}, \vb*{b}' \not\rightarrow \vb*{x}}
          \cap \Set{\forall \vb*{b}''\succ \vb*{b}, \vb*{b}'' \not\rightarrow \vb*{x}}
        }
      )}_{=0} \notag\\
      & + \mathbb{P}_p
      \qty(
        \qty\big{\Set{\vb*{o} \rightarrow \vb*{x}} \circ \Set{\vb*{b} \rightarrow \vb*{x}}}
        \cap
        \qty\big{
          \Set{\forall \vb*{b}'\prec \vb*{b}, \vb*{b}' \not\rightarrow \vb*{x}}
          \cap \Set{\exists \vb*{b}''\succ \vb*{b}, \vb*{b}'' \rightarrow \vb*{x}}
        }
      ) \notag\\
      & + \mathbb{P}_p
      \qty(
        \qty\big{\Set{\vb*{o} \rightarrow \vb*{x}} \circ \Set{\vb*{b} \rightarrow \vb*{x}}}
        \cap \qty\big{
          \Set{\forall \vb*{b}'\succ \vb*{b}, \vb*{b}' \not\rightarrow \vb*{x}}
          \cap \Set{\exists \vb*{b}''\prec \vb*{b}, \vb*{b}'' \rightarrow \vb*{x}}
        }
      ) \notag\\
      & + \mathbb{P}_p
      \qty(
        \qty\big{\Set{\vb*{o} \rightarrow \vb*{x}} \circ \Set{\vb*{b} \rightarrow \vb*{x}}}
        \cap \qty\big{
          \Set{\exists \vb*{b}'\prec \vb*{b}, \vb*{b}' \rightarrow \vb*{x}}
          \cap \Set{\exists \vb*{b}''\succ \vb*{b}, \vb*{b}'' \rightarrow \vb*{x}}
        }
      ) \notag\\
    = {} & \mathbb{P}_p
      \qty(
        \qty\big{\Set{\vb*{o} \rightarrow \vb*{x}} \circ \Set{\vb*{b} \rightarrow \vb*{x}}}
        \cap
        \Set{\forall \vb*{b}'\prec \vb*{b}, \vb*{b}' \not\rightarrow \vb*{x}}
      ) \notag\\
      & + \mathbb{P}_p
      \qty(
        \qty\big{\Set{\vb*{o} \rightarrow \vb*{x}} \circ \Set{\vb*{b} \rightarrow \vb*{x}}}
        \cap
        \Set{\forall \vb*{b}'\succ \vb*{b}, \vb*{b}' \not\rightarrow \vb*{x}}
      ) \notag\\
      & + \mathbb{P}_p
      \qty(
        \qty\big{\Set{\vb*{o} \rightarrow \vb*{x}} \circ \Set{\vb*{b} \rightarrow \vb*{x}}}
        \cap \qty\big{
          \Set{\exists \vb*{b}'\prec \vb*{b}, \vb*{b}' \rightarrow \vb*{x}}
          \cap \Set{\exists \vb*{b}''\succ \vb*{b}, \vb*{b}'' \rightarrow \vb*{x}}
        }
      ).
    \label{eq:pi1-leading-term}
  \end{align}
  Substituting \eqref{eq:pi1-leading-term} into \eqref{eq:rewriting-pi1} and subtracting \eqref{eq:rewriting-pi0}, we obtain
  \begin{multline}
    \pi_p^{(1)}(\vb*{x}) - \pi_p^{(0)}(\vb*{x})
    = \underbrace{\mathbb{P}_p(\vb*{o} \Rightarrow \vb*{x} \neq \vb*{o})}_{\mathrm{(a)}}\\
      + \underbrace{\sum_{\vb*{b} \in \mathscr{B}(\vb*{o})} \mathbb{P}_p
        \qty(
          \qty\big{\Set{\vb*{o} \rightarrow \vb*{x}} \circ \Set{\vb*{b} \rightarrow \vb*{x}}}
          \cap \qty\big{
            \Set{\exists \vb*{b}'\prec \vb*{b}, \vb*{b}' \rightarrow \vb*{x}}
            \cap \Set{\exists \vb*{b}''\succ \vb*{b}, \vb*{b}'' \rightarrow \vb*{x}}
          }
        )}_{\mathrm{(b)}}\\
      - \underbrace{\sum_{\vb*{b} \in \mathscr{B}(\vb*{o})} \mathbb{P}_p
        \qty(
          \qty\big{\Set{\vb*{o} \rightarrow \vb*{x}} \circ \Set{\vb*{b} \rightarrow \vb*{x}}}
          \setminus E(\vb*{b}, \vb*{x}; \tilde{\mathcal{C}}^{\vb*{b}}(\vb*{o}))
        )}_{\mathrm{(c)}}\\
      + \underbrace{\sum_{\vb*{b}} \mathbb{P}_p
        \qty(
          \Set{\vb*{o} \Rightarrow \underline{\vb*{b}} \neq \vb*{o}}
          \cap E(\vb*{b}, \vb*{x}; \tilde{\mathcal{C}}^{\vb*{b}}(\vb*{o}))
        )}_{\mathrm{(d)}}.
    \label{eq:pi1-minus-pi0}
  \end{multline}

  Finally, we lead \eqref{eq:pi1-minus-pi0} to the upper bound \eqref{eq:xsp-bound-first-zeroth}.
  In the following, we repeatedly use the trivial inequality
  \begin{equation}
    \varphi_p(\vb*{x}) \ind{\vb*{x} \neq \vb*{o}} \leq \qty(q_p \Conv \varphi_p)(\vb*{x})
    \label{eq:more-one-step}
  \end{equation}
  and the fact that, if there are two disjoint connections, then their lengths are at least two for oriented percolation.
  By Bool's and the BK inequalities, (a) in \eqref{eq:pi1-minus-pi0} is bounded above as
  \begin{align*}
    &\mathbb{P}_p(\vb*{o} \Rightarrow \vb*{x} \neq \vb*{o})\\
    &= \mathbb{P}_p\qty\Bigg(\bigcup_{\substack{\vb*{b}_1, \vb*{b}_2\in \mathscr{B}(\vb*{o})\\ (\vb*{b}_1 \prec \vb*{b}_2)}} \bigcup_{\substack{\vb*{b}_1'\in \mathscr{B}(\overline{\vb*{b}}_1),\\ \vb*{b}_2'\in \mathscr{B}(\overline{\vb*{b}}_2)}} \qty\big{\Set{\vb*{b}_1\text{ is occupied} \mathrel{\&} \vb*{b}_1'\rightarrow\vb*{x}} \circ \Set{\vb*{b}_2\text{ is occupied} \mathrel{\&} \vb*{b}_2'\rightarrow\vb*{x}}})\\
    &\leq \sum_{\substack{\vb*{b}_1, \vb*{b}_2\in \mathscr{B}(\vb*{o})\\ (\vb*{b}_1 \prec \vb*{b}_2)}} \sum_{\substack{\vb*{b}_1'\in \mathscr{B}(\overline{\vb*{b}}_1),\\ \vb*{b}_2'\in \mathscr{B}(\overline{\vb*{b}}_2)}} q_p(\vb*{b}_1) q_p(\vb*{b}_1') \varphi_p(\vb*{x} - \overline{\vb*{b}}_1') \cdot q_p(\vb*{b}_2) q_p(\vb*{b}_2') \varphi_p(\vb*{x} - \overline{\vb*{b}}_2')\\
    &\leq \frac{1}{2} \qty(q_p^{\Conv 2} \Conv \varphi_p)(\vb*{x})^2,
  \end{align*}
  which corresponds to the 1st term in \eqref{eq:xsp-bound-first-zeroth}.  The factor $1 / 2$ in the last line is due to ignoring the ordering.
  To bound (b) in \eqref{eq:pi1-minus-pi0}, we note that, for $\vb*{b}\in \mathscr{B}(\vb*{o})$,
  \begin{multline*}
    \qty\big{\Set{\vb*{o} \rightarrow \vb*{x}} \circ \Set{\vb*{b} \rightarrow \vb*{x}}}
    \cap \qty\big{
      \Set{\exists \vb*{b}'\prec \vb*{b}, \vb*{b}' \rightarrow \vb*{x}}
      \cap \Set{\exists \vb*{b}''\succ \vb*{b}, \vb*{b}'' \rightarrow \vb*{x}}
    }\\
    \subset
    \bigcup_{\vb*{b}', \vb*{b}'' \in \mathscr{B}(\vb*{o})} \bigcup_{\vb*{y}} \qty\big{
      \Set{\vb*{b} \rightarrow \vb*{y} \rightarrow \vb*{x}}
      \circ \Set{\vb*{b}' \rightarrow \vb*{x}}
      \circ \Set{\vb*{b}'' \rightarrow \vb*{y}}
    }.
  \end{multline*}
  By Bool's and the BK inequalities, we have
  \begin{multline*}
    \sum_{\vb*{b} \in \mathscr{B}(\vb*{o})} \mathbb{P}_p
      \qty(
        \qty\big{\Set{\vb*{o} \rightarrow \vb*{x}} \circ \Set{\vb*{b} \rightarrow \vb*{x}}}
        \cap \qty\big{
          \Set{\exists \vb*{b}'\prec \vb*{b}, \vb*{b}' \rightarrow \vb*{x}}
          \cap \Set{\exists \vb*{b}''\succ \vb*{b}, \vb*{b}'' \rightarrow \vb*{x}}
        }
      )\\
    \leq \sum_{\vb*{y}}
      \qty(q_p^{\Conv 2} \Conv \varphi_p)(\vb*{y})^2
      \qty(q_p^{\Conv 2} \Conv \varphi_p)(\vb*{x})
      \varphi_p(\vb*{x} - \vb*{y}),
  \end{multline*}
  which corresponds to the 2nd term in \eqref{eq:xsp-bound-first-zeroth}.
  To bound (c) in \eqref{eq:pi1-minus-pi0}, we note that, for $\vb*{b}\in \mathscr{B}(\vb*{o})$,
  \begin{align}
    \MoveEqLeft \qty\big{\Set{\vb*{o} \rightarrow \vb*{x}} \circ \Set{\vb*{b} \rightarrow \vb*{x}}} \setminus E(\vb*{b}, \vb*{x}; \tilde{\mathcal{C}}^{\vb*{b}}(\vb*{o})) \notag\\
    \subset {} & \bigcup_{\vb*{y}} \bigcup_{\vb*{b}'}
      \Bigl\{
        \qty\big{
          \Set{\vb*{o} \rightarrow \vb*{y} \rightarrow \vb*{x}}
          \circ \Set{\vb*{b} \rightarrow \vb*{b}' \rightarrow \vb*{x}}
          \circ \Set{\vb*{y} \rightarrow \underline{\vb*{b}}'}
        } \notag\\
        & \sqcup
        \qty\big{
          \Set{\vb*{o} \rightarrow \vb*{b}' \rightarrow \vb*{x}}
          \circ \Set{\vb*{b} \rightarrow \vb*{y} \rightarrow \underline{\vb*{b}}'}
          \circ \Set{\vb*{y} \rightarrow \vb*{x}}
        }
      \Bigr\} \notag\\
    = {} & \bigcup_{\vb*{b}'}
      \qty\big{
        \Set{\vb*{o} \rightarrow \vb*{x}}
        \circ \Set{\vb*{o} \rightarrow \underline{\vb*{b}}'}
        \circ \Set{\vb*{b} \rightarrow \vb*{b}' \rightarrow \vb*{x}}
      } \notag\\
      & \sqcup \bigcup_{\vb*{b}'}
      \qty\big{
        \Set{\vb*{o} \rightarrow \underline{\vb*{b}}' \rightarrow \vb*{x}}
        \circ \Set{\vb*{b} \rightarrow \vb*{b}' \rightarrow \vb*{x}}
      } \notag\\
      & \sqcup \bigcup_{\vb*{b}'} \bigcup_{\vb*{y} \neq \vb*{o}, \underline{\vb*{b}}'}
      \qty\big{
        \Set{\vb*{o} \rightarrow \vb*{y} \rightarrow \vb*{x}}
        \circ \Set{\vb*{b} \rightarrow \vb*{b}' \rightarrow \vb*{x}}
        \circ \Set{\vb*{y} \rightarrow \underline{\vb*{b}}'}
      } \notag\\
      & \sqcup \bigcup_{\vb*{b}'}
      \qty\big{
        \Set{\vb*{o} \rightarrow \vb*{b}' \rightarrow \vb*{x}}
        \circ \Set{\vb*{b} \rightarrow \underline{\vb*{b}}'}
        \circ \Set{\underline{\vb*{b}}' \rightarrow \vb*{x}}
      } \notag\\
      & \sqcup \bigcup_{\vb*{b}'} \bigcup_{\vb*{y} \neq \underline{\vb*{b}}'}
      \qty\big{
        \Set{\vb*{o} \rightarrow \vb*{b}' \rightarrow \vb*{x}}
        \circ \Set{\vb*{b} \rightarrow \vb*{y} \rightarrow \underline{\vb*{b}}'}
        \circ \Set{\vb*{y} \rightarrow \vb*{x}}
      }.
    \label{eq:errorEstimate-pi1}
  \end{align}
  By Bool's and the BK inequalities, we have
  \begin{align*}
    \MoveEqLeft \sum_{\vb*{b} \in \mathscr{B}(\vb*{o})}\mathbb{P}_p\qty(
      \qty\big{\Set{\vb*{o} \rightarrow \vb*{x}} \circ \Set{\vb*{b} \rightarrow \vb*{x}}}
      \setminus E(\vb*{b}, \vb*{x}; \tilde{\mathcal{C}}^{\vb*{b}}(\vb*{o}))
    )\\
    \leq {} &
      \sum_{\vb*{u}} \qty(q_p^{\Conv 3} \Conv \varphi_p)(\vb*{x}) \qty(q_p^{\Conv 2} \Conv \varphi_p)(\vb*{u})^2 \qty(q_p \Conv \varphi_p)(\vb*{x} - \vb*{u})\\
      & + \sum_{\vb*{u}} \qty(q_p^{\Conv 2} \Conv \varphi_p)(\vb*{u})^2 \qty(q_p^{\Conv 2} \Conv \varphi_p)(\vb*{x} - \vb*{u})^2\\
      & + \sum_{\vb*{u}, \vb*{y}} \qty(q_p^{\Conv 2} \Conv \varphi_p)(\vb*{u}) \qty(q_p \Conv \varphi_p)(\vb*{y}) \qty(q_p \Conv \varphi_p)(\vb*{u} - \vb*{y}) \qty(q_p \Conv \varphi_p)(\vb*{x} - \vb*{u}) \qty(q_p^{\Conv 2} \Conv \varphi_p)(\vb*{x} - \vb*{y})\\
      & + \sum_{\vb*{u}} \qty(q_p^{\Conv 2} \Conv \varphi_p)(\vb*{u})^2 \qty(q_p^{\Conv 2} \Conv \varphi_p)(\vb*{x} - \vb*{u})^2\\
      & + \sum_{\vb*{u}, \vb*{y}} \qty(q_p^{\Conv 2} \Conv \varphi_p)(\vb*{u}) \qty(q_p \Conv \varphi_p)(\vb*{y}) \qty(q_p \Conv \varphi_p)(\vb*{u} - \vb*{y}) \qty(q_p^{\Conv 2} \Conv \varphi_p)(\vb*{x} - \vb*{y}) \qty(q_p \Conv \varphi_p)(\vb*{x} - \vb*{u}).
  \end{align*}
  Each term in the above upper bound corresponds to contributions of the 3rd term, the 4th term, the 5th term, the 4th term and the 6th term in \eqref{eq:xsp-bound-first-zeroth}, respectively.
  To bound (d) in \eqref{eq:pi1-minus-pi0}, we apply \eqref{eq:superset-cuttingLastSausage}
  and split the below event into two events based on where the branching point is assigned:
  \begin{multline}
    \label{eq:splitting-doubleConnection}
    \Set{\vb*{o} \Rightarrow \vb*{u} \neq \vb*{o}} \cap \Set{\vb*{o} \rightarrow \vb*{x}}
    \subset \qty\big{\Set{\vb*{o} \Rightarrow \vb*{u} \neq \vb*{o}} \circ \Set{\vb*{o} \rightarrow \vb*{x}}}\\
      \cup \bigcup_{\substack{\vb*{y} \neq \vb*{o}\\ (\TimeOf(\vb*{y}) \leq \TimeOf(\vb*{u}))}} \qty\big{\Set{\vb*{o} \rightarrow \vb*{u}} \circ \Set{\vb*{o} \rightarrow \vb*{y} \rightarrow \vb*{u}} \circ \Set{\vb*{y} \rightarrow \vb*{x}}}.
  \end{multline}
  Then, we obtain
  \begin{align}
    \MoveEqLeft \sum_{\vb*{b}} \mathbb{P}_p\qty(\Set{\vb*{o} \Rightarrow \underline{\vb*{b}} \neq \vb*{o}} \cap E(\vb*{b}, \vb*{x}; \tilde{\mathcal{C}}^{\vb*{b}}(\vb*{o}))) \notag\\
    \overset{\eqref{eq:superset-cuttingLastSausage}}{\leq} {} & \sum_{\vb*{b}} \mathbb{P}_p
      \qty(
        \Set{\vb*{o} \Rightarrow \underline{\vb*{b}} \neq \vb*{o}}
        \cap \qty\big{
          \Set{\vb*{o} \rightarrow \vb*{x}}
          \circ \Set{\vb*{b} \rightarrow \vb*{x}}
        }
      ) \notag\\
    \mathmakebox[\widthof{(00)}]{=} {} & \sum_{\vb*{b}} \mathbb{P}_p
      \qty(
        \qty\big{
          \Set{\vb*{o} \Rightarrow \underline{\vb*{b}} \neq \vb*{o}}
          \cap \Set{\vb*{o} \rightarrow \vb*{x}}
        }
        \circ \Set{\vb*{b} \rightarrow \vb*{x}}
      ) \notag\\
    \overset{\eqref{eq:splitting-doubleConnection}}{\leq} {} & \sum_{\vb*{b}}
      \biggl(
        \mathbb{P}_p
          \qty\big(
            \Set{\vb*{o} \Rightarrow \underline{\vb*{b}} \neq \vb*{o}}
            \circ \Set{\vb*{o} \rightarrow \vb*{x}}
            \circ \Set{\vb*{b} \rightarrow \vb*{x}}
          ) \notag\\
        & + \sum_{\substack{\vb*{y} \neq \vb*{o}\\ (\TimeOf(\vb*{y}) \leq \TimeOf(\underline{\vb*{b}}))}} \mathbb{P}_p
          \qty\big(
            \Set{\vb*{o} \rightarrow \underline{\vb*{b}}}
            \circ \underbrace{\Set{\vb*{o} \rightarrow \vb*{y} \rightarrow \underline{\vb*{b}}}}_{
              =\set{\vb*{o} \rightarrow \vb*{y} = \underline{\vb*{b}}}
              \sqcup \set{\vb*{o} \rightarrow \vb*{y} \rightarrow \underline{\vb*{b}} \neq \vb*{y}}
            }
            \circ \Set{\vb*{y} \rightarrow \vb*{x}}
            \circ \Set{\vb*{b} \rightarrow \vb*{x}}
          )
      \biggr).
    \label{eq:pre-xsp-bound-first}
  \end{align}
  Applying the BK~\cite{bk85} inequality and the same method for (a) to the right-most in the above,
  and paying attention to the disjointness of the connections, we arrive in
  \begin{align*}
    \MoveEqLeft \sum_{\vb*{b}} \mathbb{P}_p\qty(\Set{\vb*{o} \Rightarrow \underline{\vb*{b}} \neq \vb*{o}} \cap E(\vb*{b}, \vb*{x}; \tilde{\mathcal{C}}^{\vb*{b}}(\vb*{o})))\\
    \leq {} &
      \frac{1}{2} \sum_{\vb*{u}} \qty(q_p^{\Conv 2} \Conv \varphi_p)(\vb*{u})^2 \qty(q_p \Conv \varphi_p)(\vb*{x} - \vb*{u}) \qty(q_p^{\Conv 3} \Conv \varphi_p)(\vb*{x})\\
      & + \sum_{\vb*{u}} \qty(q_p^{\Conv 2} \Conv \varphi_p)(\vb*{u})^2 \qty(q_p^{\Conv 2} \Conv \varphi_p)(\vb*{x} - \vb*{u})^2\\
      & + \sum_{\vb*{u}, \vb*{y}} \qty(q_p^{\Conv 2} \Conv \varphi_p)(\vb*{u}) \qty(q_p \Conv \varphi_p)(\vb*{y}) \qty(q_p \Conv \varphi_p)(\vb*{u} - \vb*{y}) \qty(q_p \Conv \varphi_p)(\vb*{x} - \vb*{u}) \qty(q_p^{\Conv 2} \Conv \varphi_p)(\vb*{x} - \vb*{y}).
  \end{align*}
  Each term in the above upper bound corresponds to contributions of the 3rd term, the 4th term and the 5th term in \eqref{eq:xsp-bound-first-zeroth}, respectively.
  Combining the upper bounds on (a)--(d) completes the proof of \eqref{eq:xsp-bound-first-zeroth}.
\end{proof}

\begin{proof}[Proof of \eqref{eq:xsp-bound-second} in Lemma~\ref{lem:xsp-bound}]
  It is not hard to prove the upper bound by using \eqref{eq:superset-cuttingLastSausage}, \eqref{eq:errorEstimate-pi1}
  and \eqref{eq:superset-lastSausage-branch}, so that we omit it.
\end{proof}

\begin{proof}[Proof of \eqref{eq:xsp-bound-general} in Lemma~\ref{lem:xsp-bound}]
  By definition,
  \begin{multline}
    E(\vb*{b}, \vb*{u}; \tilde{\mathcal{C}}^{\vb*{b}}(\vb*{v})) \cap \Set{\overline{\vb*{b}} \rightarrow \vb*{x}}
    \subset \bigcup_{\substack{\vb*{y}\\ (\TimeOf(\underline{\vb*{b}}) < \TimeOf(\vb*{y}) \leq \TimeOf(\vb*{u}))}} \Bigl\{
      \qty\big{\Set{\vb*{b} \rightarrow \vb*{u}} \circ \Set{\vb*{v} \rightarrow \vb*{y} \rightarrow \vb*{u}} \circ \Set{\vb*{y} \rightarrow \vb*{x}}}\\
      \cup \qty\big{\Set{\vb*{b} \rightarrow \vb*{y} \rightarrow \vb*{u}} \circ \Set{\vb*{v} \rightarrow \vb*{u}} \circ \Set{\vb*{y} \rightarrow \vb*{x}}}
    \Bigr\}.
    \label{eq:superset-lastSausage-branch}
  \end{multline}
  Paying attention to the disjointness of connections and the magnitude relationship between times, we obtain
  \begin{align*}
    \MoveEqLeft \pi_p^{(N)}(\vb*{x})
    = {} \sum_{\va*{b}_N} \mathbb{P}_p\qty(\tilde{E}_{\va*{b}_N}^{(N)}(\vb*{x}))
    = \sum_{\va*{b}_N} \mathbb{P}_p\qty(\tilde{E}_{\va*{b}_{N-1}}^{(N-1)}(\underline{\vb*{b}}_N) \cap E(\vb*{b}_N, \vb*{x}; \tilde{\mathcal{C}}^{\vb*{b}_N}(\overline{\vb*{b}}_{N-1})))\\
    \mathmakebox[\widthof{\eqref{eq:superset-lastSausage-branch}}]{\overset{\eqref{eq:superset-cuttingLastSausage}}{\leq}} {} & \sum_{\va*{b}_N} \mathbb{P}_p\qty(\tilde{E}_{\va*{b}_{N-2}}^{(N-2)}(\underline{\vb*{b}}_{N-1}) \cap E(\vb*{b}_{N-1}, \underline{\vb*{b}}_N; \tilde{\mathcal{C}}^{\vb*{b}_{N-1}}(\overline{\vb*{b}}_{N-2})) \cap \Set{\vb*{b}_{N-1} \rightarrow \vb*{x}} \circ \Set{\vb*{b}_N \rightarrow \vb*{x}})\\
    \overset{\eqref{eq:superset-lastSausage-branch}}{\leq} {} & \sum_{\va*{b}_{N-2}}\sum_{\vb*{v}_{N-1}, \vb*{v}_N}\sum_{\substack{\vb*{y}_N, \vb*{u}_{N-1}, \vb*{u}_N\\ (\TimeOf(\vb*{u}_{N-1}) < \TimeOf(\vb*{y}_N) \leq \TimeOf(\vb*{u}_N))}} \biggl(
      \mathbb{P}_p\qty(\tilde{E}_{\va*{b}_{N-2}}^{(N-2)}(\vb*{u}_{N-1}) \cap \Set{\overline{\vb*{b}}_{N-2} \rightarrow \vb*{y}_N})\\
      & \quad \times \mathbb{P}_p\qty\big(\Set{(\vb*{u}_{N-1}, \vb*{v}_{N-1}) \rightarrow \vb*{u}_N} \circ \Set{\vb*{y}_N \rightarrow \vb*{u}_N} \circ \Set{\vb*{y}_N \rightarrow \vb*{x}} \circ \Set{(\vb*{u}_N, \vb*{v}_N) \rightarrow \vb*{x}})\\
      & + \mathbb{P}_p\qty(\tilde{E}_{\va*{b}_{N-2}}^{(N-2)}(\vb*{u}_{N-1}) \cap \Set{\overline{\vb*{b}}_{N-2} \rightarrow \vb*{u}_N})\\
      & \quad \times \mathbb{P}_p\qty\big(\Set{(\vb*{u}_{N-1}, \vb*{v}_{N-1}) \rightarrow \vb*{y}_N \rightarrow \vb*{u}_N} \circ \Set{\vb*{y}_N \rightarrow \vb*{x}} \circ \Set{(\vb*{u}_N, \vb*{v}_N) \rightarrow \vb*{x}})
    \biggr)\\
    \mathmakebox[\widthof{\eqref{eq:superset-lastSausage-branch}}]{\leq} {} & \sum_{\va*{b}_{N-2}}\sum_{\substack{\vb*{y}_N, \vb*{u}_{N-1}, \vb*{u}_N\\ (\TimeOf(\vb*{u}_{N-1}) < \TimeOf(\vb*{y}_N) \leq \TimeOf(\vb*{u}_N))}}
      \biggl(
        \mathbb{P}_p\qty(\tilde{E}_{\va*{b}_{N-2}}^{(N-2)}(\vb*{u}_{N-1}) \cap \Set{\overline{\vb*{b}}_{N-2} \rightarrow \vb*{y}_N})\\
        & \quad \times
        \qty(q_p \Conv \varphi_p)(\vb*{u}_N - \vb*{u}_{N-1})\\
        & + \mathbb{P}_p\qty(\tilde{E}_{\va*{b}_{N-2}}^{(N-2)}(\vb*{u}_{N-1}) \cap \Set{\overline{\vb*{b}}_{N-2} \rightarrow \vb*{u}_N})
        \qty(q_p \Conv \varphi_p)(\vb*{y}_N - \vb*{u}_{N-1})
      \biggr)\\
      & \times \underbrace{\tcboxmath[
        enhanced,
        frame hidden,
        colback=blue!10,
        size=minimal
      ]{
        \varphi_p(\vb*{u}_N - \vb*{y}_N) \qty(q_p \Conv \varphi_p)(\vb*{x} - \vb*{y}_N) \qty(q_p \Conv \varphi_p)(\vb*{x} - \vb*{u}_N)
      }}_{
        \eqdef \Xi(\vb*{y}_N, \vb*{u}_N; \vb*{x}, \vb*{x}) / 2
      }\\
    \overset{\eqref{eq:superset-lastSausage-branch}}{\leq} {} & \sum_{\va*{b}_{N-3}}\sum_{\substack{\vb*{y}_{N-1}, \vb*{y}_N, \vb*{u}_{N-2}, \vb*{u}_{N-1}, \vb*{u}_N\\ (\forall i\colon \TimeOf(\vb*{u}_{i-1}) < \TimeOf(\vb*{y}_i) \leq \TimeOf(\vb*{u}_i))}}\\
      & \biggl(
        \mathbb{P}_p\qty(\tilde{E}_{\va*{b}_{N-3}}^{(N-3)}(\vb*{u}_{N-2}) \cap \Set*{\overline{\vb*{b}}_{N-3} \rightarrow \vb*{y}_{N-1}})\\
        & \quad \times
        \qty(q_p \Conv \varphi_p)(\vb*{u}_{N-1} - \vb*{u}_{N-2})\\
        & + \mathbb{P}_p\qty(\tilde{E}_{\va*{b}_{N-3}}^{(N-3)}(\vb*{u}_{N-2}) \cap \Set{\overline{\vb*{b}}_{N-3} \rightarrow \vb*{u}_{N-1}})
        \qty(q_p \Conv \varphi_p)(\vb*{y}_{N-1} - \vb*{u}_{N-2})
      \biggr)\\
      & \times \underbrace{\tcboxmath[
        enhanced,
        frame hidden,
        colback=blue!10,
        size=minimal
      ]{
        \begin{aligned}[t]
          & \bigl(\varphi_p(\vb*{u}_{N-1} - \vb*{y}_{N-1}) \qty(q_p \Conv \varphi_p)(\vb*{y}_N - \vb*{y}_{N-1}) \qty(q_p \Conv \varphi_p)(\vb*{u}_N - \vb*{u}_{N-1})\\[\jot]
          & \quad + \varphi_p(\vb*{u}_{N-1} - \vb*{y}_{N-1}) \qty(q_p \Conv \varphi_p)(\vb*{u}_N - \vb*{y}_{N-1}) \qty(q_p \Conv \varphi_p)(\vb*{y}_N - \vb*{u}_{N-1})\bigr)
        \end{aligned}
      }}_{
        \scriptstyle\eqdef \Xi(\vb*{y}_{N-1}, \vb*{u}_{N-1}; \vb*{y}_N, \vb*{u}_N)
      }\\
      & \times \frac{1}{2} \Xi(\vb*{y}_N, \vb*{u}_N; \vb*{x}, \vb*{x}).
  \end{align*}
  By applying \eqref{eq:superset-lastSausage-branch} to the above repeatedly,
  \begin{multline*}
    \pi_p^{(N)}(\vb*{x})
    \leq \sum_{\vb*{v}_1}\sum_{\substack{\{\vb*{y}_i\}_{i=2}^{N}, \{\vb*{u}_i\}_{i=1}^{N}\\ (\forall i\colon \TimeOf(\vb*{u}_{i-1}) < \TimeOf(\vb*{y}_i) \leq \TimeOf(\vb*{u}_i))}} \Bigl(
      \mathbb{P}_p\qty\big(\Set{\vb*{o} \Rightarrow \vb*{u}_1} \cap \Set{(\vb*{u}_1, \vb*{v}_1) \rightarrow \vb*{u}_2} \circ \Set{\vb*{o} \rightarrow \vb*{y}_2})\\
      + \mathbb{P}_p\qty\big(\Set{\vb*{o} \Rightarrow \vb*{u}_1} \cap \Set{(\vb*{u}_1, \vb*{v}_1) \rightarrow \vb*{y}_2} \circ \Set{\vb*{o} \rightarrow \vb*{u}_2})
    \Bigr)\\
    \times \qty(\prod_{j=2}^{N-1} \Xi(\vb*{y}_j, \vb*{u}_j; \vb*{y}_{j+1}, \vb*{u}_{j+1})) \frac{1}{2} \Xi(\vb*{y}_N, \vb*{u}_N; \vb*{x}, \vb*{x}).
  \end{multline*}
  Finally, using Bool's and the BK~\cite{bk85} inequality, and applying a similar method to \eqref{eq:xsp-bound-first-zeroth}, we arrive in \eqref{eq:xsp-bound-general}.
\end{proof}

Multiplying the diagrammatic bounds in Lemma~\ref{lem:xsp-bound} by the factors $m^t$, $t$ or $1 - \cos k\cdot x$, and taking the sum of them, we obtain Lemma~\ref{lem:ksp-bound}.
The upper bounds \eqref{eq:sum-diagrammaticBounds-mcos-01}--\eqref{eq:sum-diagrammaticBounds-m-N} also require a telescopic inequality for the cosine function.
Since its proof is quite the same as the literature, we omit it.

\begin{lemma}[{\cite[Lemma~2.13]{fh17p}} or {\cite[Lemma~7.3]{hh17}}]\label{lem:split-of-cosine}
  Let $J\geq 1$ and $t_j\in\mathbb{R}$ for $j=1, \dots, J$.
  Then,
  \begin{equation}
    \label{eq:telescope}
    0 \leq 1 - \cos\sum_{j=1}^{J}t_j \leq J \sum_{j=1}^{J}\qty(1 - \cos t_j).
  \end{equation}
\end{lemma}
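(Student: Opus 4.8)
The left-hand inequality is nothing but $\cos\theta\le 1$, applied to $\theta=\sum_{j=1}^J t_j$, so all the content lies in the right-hand bound. The plan is to pass to half-angles and reduce the estimate to the subadditivity of $\abs{\sin(\cdot)}$ followed by a single application of Cauchy--Schwarz.

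First I would record the half-angle identity $1-\cos\theta=2\sin^2(\theta/2)$, valid for every $\theta\in\mathbb{R}$; it turns \eqref{eq:telescope} into the equivalent claim
\[
  \sin^2\qty(\frac12\sum_{j=1}^J t_j)\le J\sum_{j=1}^J\sin^2\frac{t_j}{2}.
\]
Next, from $\sin(x+y)=\sin x\cos y+\cos x\sin y$ together with $\abs{\cos}\le 1$ one gets $\abs{\sin(x+y)}\le\abs{\sin x}+\abs{\sin y}$, and a trivial induction on the number of summands extends this to
\[
  \abs{\sin\qty(\frac12\sum_{j=1}^J t_j)}\le\sum_{j=1}^J\abs{\sin\frac{t_j}{2}}.
\]
Squaring both sides and using $\qty(\sum_{j=1}^J a_j)^2\le J\sum_{j=1}^J a_j^2$ with $a_j=\abs{\sin(t_j/2)}$ yields the displayed inequality, and multiplying through by $2$ recovers \eqref{eq:telescope}.

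There is no genuine obstacle here: the only point requiring a little care is to state the sine subadditivity for an arbitrary number of terms (not merely two) \emph{before} invoking Cauchy--Schwarz, so that the constant comes out exactly $J$ rather than something exponential in $J$. An alternative would be a direct induction on $J$ based on $1-\cos(a+b)\le 2(1-\cos a)+2(1-\cos b)$, but that route does not immediately produce the linear constant, so the half-angle argument above is the cleaner one. Since the statement and its proof are entirely standard (they appear as \cite[Lemma~2.13]{fh17p} and \cite[Lemma~7.3]{hh17}), the write-up can be kept to a few lines.
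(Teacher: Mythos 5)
Your proof is correct and is exactly the standard argument behind the cited references (the paper itself omits the proof, deferring to \cite{fh17p,hh17}, whose proof is precisely the half-angle identity, subadditivity of $\abs{\sin}$ over the sum, and the Cauchy--Schwarz bound $(\sum_j a_j)^2 \le J\sum_j a_j^2$). Nothing is missing.
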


\begin{proof}[Sketch proof of Lemma~\ref{lem:ksp-bound}]
  We only deal with three examples of the bounds on the sum of a diagram multiplied by the factors $m^t$, $t$ or $1 - \cos k\cdot x$
  because one can easily calculate the other bounds on the analogy of such examples.
  The following proof is almost identical to the proof of \cite[Lemma~5.3]{Handa-Kamijima-Sakai}.

  First, we consider the bounds \eqref{eq:sum-diagrammaticBounds-m-01}--\eqref{eq:sum-diagrammaticBounds-m-N}.
  By the translation-invariance, for example,
  \begin{align*}
    \sum_{\vb*{x}}
    \begin{tikzpicture}[op diagram]
      \laceDraw[first=2] (0,0) (1.3,2);
      \laceDraw[first=1] (0,0) (-1.3,1.5);
      \laceDraw[first=1, endpoint-shape=line] (-1.3,1.5) (1.3,2);
      \laceDraw[first=1, endpoint-shape=line] (1.3,2) (0,3.5);
      \laceDraw[last=2] (-1.3,1.5) (0,3.5);
      \lacePutLabel[anchor=north] (0,0) {$\vb*{o}$};
      \lacePutLabel[anchor=south] (0,3.5) {$\vb*{x}$};
    \end{tikzpicture}
    m^{\TimeOf(\vb*{x})}
    &=
    \sum_{\vb*{x}, \vb*{y}, \vb*{w}}
    \begin{tikzpicture}[op diagram]
      \laceDraw[first=2] (0,0) (1.3,2);
      \laceDraw[first=1] (0,0) (-1.3,1.5);
      \laceDraw[first=1, endpoint-shape=circle] (-1.3,1.5) (1.3,2);
      \laceDraw[first=1, endpoint-shape=circle] (1.3,2) (0,3.5);
      \laceDraw[last=2] (-1.3,1.5) (0,3.5);
      \lacePutLabel[anchor=north] (0,0) {$\vb*{o}$};
      \lacePutLabel[anchor=south] (0,3.5) {$\vb*{x}$};
      \lacePutLabel[anchor=west] (1.3,2) {$\vb*{y}$};
      \lacePutLabel[anchor=east] (-1.3,1.5) {$\vb*{w}$};
    \end{tikzpicture}
    m^{\TimeOf(\vb*{y})} m^{\TimeOf(\vb*{x}) - \TimeOf(\vb*{y})}\\
    &\leq
    \sum_{\vb*{w}} \qty(
      \sup_{\vb*{y}} \sum_{\vb*{x}}
      \begin{tikzpicture}[op diagram]
        \laceDraw[first=1, endpoint-shape=circle] (1.3,2) (0,3.5);
        \laceDraw[last=2] (-1.3,1.5) (0,3.5);
        \lacePutLabel[anchor=south] (0,3.5) {$\vb*{x}$};
        \lacePutLabel[anchor=west] (1.3,2) {$\vb*{y}$};
        \lacePutLabel[anchor=east] (-1.3,1.5) {$\vb*{w}$};
      \end{tikzpicture}
      m^{\TimeOf(\vb*{x}) - \TimeOf(\vb*{y})}
    )
    \qty(
      \sum_{\vb*{y}}
      \begin{tikzpicture}[op diagram]
        \laceDraw[first=2] (0,0) (1.3,2);
        \laceDraw[first=1] (0,0) (-1.3,1.5);
        \laceDraw[first=1, endpoint-shape=circle] (-1.3,1.5) (1.3,2);
        \lacePutLabel[anchor=north] (0,0) {$\vb*{o}$};
        \lacePutLabel[anchor=west] (1.3,2) {$\vb*{y}$};
        \lacePutLabel[anchor=east] (-1.3,1.5) {$\vb*{w}$};
      \end{tikzpicture}
      m^{\TimeOf(\vb*{y})}
    )\\
    &=
    \qty(
      \sup_{\vb*{y}} \sum_{\vb*{x}}
      \begin{tikzpicture}[op diagram]
        \laceDraw[first=1, endpoint-shape=circle] (1.3,2) (0,3.5);
        \laceDraw[last=2] (-1.3,1.5) (0,3.5);
        \lacePutLabel[anchor=south] (0,3.5) {$\vb*{x}$};
        \lacePutLabel[anchor=west] (1.3,2) {$\vb*{y}$};
        \lacePutLabel[anchor=east] (-1.3,1.5) {$\vb*{o}$};
      \end{tikzpicture}
      m^{\TimeOf(\vb*{x}) - \TimeOf(\vb*{y})}
    )
    \qty(
      \sum_{\vb*{y}}
      \begin{tikzpicture}[op diagram]
        \laceDraw[first=2] (0,0) (1.3,2);
        \laceDraw[first=1] (0,0) (-1.3,1.5);
        \laceDraw[first=1, endpoint-shape=line] (-1.3,1.5) (1.3,2);
        \lacePutLabel[anchor=north] (0,0) {$\vb*{o}$};
        \lacePutLabel[anchor=west] (1.3,2) {$\vb*{y}$};
      \end{tikzpicture}
      m^{\TimeOf(\vb*{y})}
    )\\
    &= \qty(
      \sup_{\vb*{y}} \sum_{\vb*{x}}
      \qty(q_p^{\Conv 2} \Conv \varphi_p)(\vb*{x})
      \qty(q_p \Conv \varphi_p)(\vb*{x} - \vb*{y})
      m^{\TimeOf(\vb*{x}) - \TimeOf(\vb*{y})}
    )\\
    &\quad \times \qty(
      \sum_{\vb*{y}}
      \qty(q_p^{\Conv 2} \Conv \varphi_p^{\Conv 2})(\vb*{y})
      \qty(q_p^{\Conv 2} \Conv \varphi_p)(\vb*{y})
      m^{\TimeOf(\vb*{y})}
    )\\
    &\leq B_{p, m}^{(2, 1)} T_{p, m}^{(2, 2)},
  \end{align*}
  which corresponds to the last term in the right hand side in \eqref{eq:sum-diagrammaticBounds-m-01}.
  In the second equality, we have used the translation invariance.

  Next, we consider the bounds \eqref{eq:sum-diagrammaticBounds-mt-01}--\eqref{eq:sum-diagrammaticBounds-mt-N}.
  Note that
  \begin{gather*}
    \qty(q_p \Conv \varphi_p)(x, t) t \leq \qty(q_p \Conv \varphi_p^{\Conv 2})(x, t),\\
    \qty(q_p^{\Conv 2} \Conv \varphi_p)(x, t) t \leq \qty(q_p^{\Conv 2} \Conv \varphi_p^{\Conv 2})(x, t) + \qty(q_p \Conv \varphi_p^{\Conv 2})(x, t).
  \end{gather*}
  By using the above inequalities, for example,
  \begin{align*}
    &\sum_{\vb*{x}}
    \begin{tikzpicture}[op diagram]
      \laceDraw[first=2] (0,0) (-1.3,2);
      \laceDraw[first=1] (0,0) (1.3,1.5);
      \draw (1.3,1.5) -- (-1.3,2);
      \laceDraw[first=1, endpoint-shape=line] (-1.3,2) (-1.3,4);
      \laceDraw[first=1, endpoint-shape=line] (1.3,1.5) (1.3,3.5);
      \draw (1.3,3.5) -- (-1.3,4);
      \laceDraw[first=1, endpoint-shape=line] (-1.3,4) (0,5.5);
      \laceDraw[first=2, endpoint-shape=line] (1.3,3.5) (0,5.5);
      \lacePutLabel[anchor=north] (0,0) {$\vb*{o}$};
      \lacePutLabel[anchor=south] (0,5.5) {$\vb*{x}$};
    \end{tikzpicture}
    m^{\TimeOf(\vb*{x})} \TimeOf(\vb*{x})
    =
    \sum_{\vb*{x}, \vb*{y}, \vb*{w}}
    \begin{tikzpicture}[op diagram]
      \laceDraw[first=2] (0,0) (-1.3,2);
      \laceDraw[first=1] (0,0) (1.3,1.5);
      \draw (1.3,1.5) -- (-1.3,2);
      \laceDraw[first=1, endpoint-shape=line] (-1.3,2) (-1.3,4);
      \laceDraw[first=1, endpoint-shape=line] (1.3,1.5) (1.3,3.5);
      \draw (1.3,3.5) -- (-1.3,4);
      \laceDraw[first=1, endpoint-shape=line] (-1.3,4) (0,5.5);
      \laceDraw[first=2, endpoint-shape=line] (1.3,3.5) (0,5.5);
      \lacePutLabel[anchor=north] (0,0) {$\vb*{o}$};
      \lacePutLabel[anchor=south] (0,5.5) {$\vb*{x}$};
      \lacePutLabel[anchor=east] (-1.3,2) {$\vb*{y}$};
      \lacePutLabel[anchor=east] (-1.3,4) {$\vb*{w}$};
    \end{tikzpicture}
    m^{\TimeOf(\vb*{x})} \qty\Big(\qty\big(\TimeOf(\vb*{x}) - \TimeOf(\vb*{w})) + \qty\big(\TimeOf(\vb*{w}) - \TimeOf(\vb*{y})) + \vb*{y})\\
    &\leq
    \sum_{\vb*{x}, \vb*{y}, \vb*{w}} \qty(
      \begin{tikzpicture}[op diagram]
        \coordinate (O) at (0,0);
        \coordinate (X) at (1.3,6);
        \coordinate (Y1) at (-1.3,2);
        \coordinate (Y2) at (-1.3,4);
        \coordinate (Y3) at (-1.3,5.5);
        \coordinate (Y1') at (1.3,1.5);
        \coordinate (Y2') at (1.3,3.5);
        \laceDraw[first=2] (O) (Y1);
        \laceDraw[first=1] (O) (Y1');
        \draw (Y1') -- (Y1);
        \laceDraw[first=1, endpoint-shape=line] (Y1) (Y2);
        \laceDraw[first=1, endpoint-shape=line] (Y1') (Y2');
        \draw (Y2') -- (Y2);
        \laceDraw[first=1, endpoint-shape=line] (Y2) (Y3);
        \laceDraw[first=2, endpoint-shape=line] (Y2') (X);
        \draw (Y3) node[vertex] {} -- (X);
        \draw[decorate, decoration={brace, amplitude=5pt, raise=0.7ex}] (X) -- (X |- O) node [midway, sloped, above, yshift=1.5ex] {$m$};
        \lacePutLabel[anchor=north] (O) {$\vb*{o}$};
        \lacePutLabel[anchor=south] (X) {$\vb*{x}$};
        \lacePutLabel[anchor=east] (Y1) {$\vb*{y}$};
        \lacePutLabel[anchor=east] (Y2) {$\vb*{w}$};
      \end{tikzpicture}
      +
      \begin{tikzpicture}[op diagram]
        \coordinate (O) at (0,0);
        \coordinate (X) at (0,6);
        \coordinate (Y1) at (-1.3,2);
        \coordinate (Y2) at (-1.6,3.25);
        \coordinate (Y3) at (-1.3,4.5);
        \coordinate (Y1') at (1.3,1.5);
        \coordinate (Y2') at (1.3,4);
        \laceDraw[first=2] (O) (Y1);
        \laceDraw[first=1] (O) (Y1');
        \draw (Y1') -- (Y1);
        \laceDraw[first=1, endpoint-shape=line] (Y1) (Y2);
        \draw (Y2) node[vertex] {} -- (Y3);
        \laceDraw[first=1, endpoint-shape=line] (Y1') (Y2');
        \draw (Y2') -- (Y3);
        \laceDraw[first=1, endpoint-shape=line] (Y3) (X);
        \laceDraw[first=2, endpoint-shape=line] (Y2') (X);
        \draw[decorate, decoration={brace, amplitude=5pt, raise=0.7ex}] (Y2' |- X) -- (Y1' |- O) node [midway, sloped, above, yshift=1.5ex] {$m$};
        \lacePutLabel[anchor=north] (O) {$\vb*{o}$};
        \lacePutLabel[anchor=south] (X) {$\vb*{x}$};
        \lacePutLabel[anchor=east] (Y1) {$\vb*{y}$};
        \lacePutLabel[anchor=east] (Y3) {$\vb*{w}$};
      \end{tikzpicture}
      +
      \begin{tikzpicture}[op diagram]
        \coordinate (O) at (1.3,0);
        \coordinate (X) at (0,6);
        \coordinate (Y1) at (-1.3,0.5);
        \coordinate (Y2) at (-1.3,2.5);
        \coordinate (Y3) at (-1.3,4.5);
        \coordinate (Y1') at (1.3,2);
        \coordinate (Y2') at (1.3,4);
        \laceDraw[first=2] (O) (Y1);
        \draw (Y1) node[vertex] {} -- (Y2);
        \laceDraw[first=1] (O) (Y1');
        \draw (Y1') -- (Y2);
        \laceDraw[first=1, endpoint-shape=line] (Y2) (Y3);
        \laceDraw[first=1, endpoint-shape=line] (Y1') (Y2');
        \draw (Y2') -- (Y3);
        \laceDraw[first=1, endpoint-shape=line] (Y3) (X);
        \laceDraw[first=2, endpoint-shape=line] (Y2') (X);
        \draw[decorate, decoration={brace, amplitude=5pt, raise=0.7ex}] (Y2' |- X) -- (O) node [midway, sloped, above, yshift=1.5ex] {$m$};
        \lacePutLabel[anchor=north] (O) {$\vb*{o}$};
        \lacePutLabel[anchor=south] (X) {$\vb*{x}$};
        \lacePutLabel[anchor=east] (Y2) {$\vb*{y}$};
        \lacePutLabel[anchor=east] (Y3) {$\vb*{w}$};
      \end{tikzpicture}
      +
      \begin{tikzpicture}[op diagram]
        \coordinate (O) at (0,0);
        \coordinate (X) at (0,5.5);
        \coordinate (Y1) at (-1.3,2);
        \coordinate (Y2) at (-1.3,4);
        \coordinate (Y1') at (1.3,1.5);
        \coordinate (Y2') at (1.3,3.5);
        \laceDraw[first=2] (O) (Y1);
        \laceDraw[first=1] (O) (Y1');
        \draw (Y1') -- (Y1);
        \laceDraw[first=1, endpoint-shape=line] (Y1) (Y2);
        \laceDraw[first=1, endpoint-shape=line] (Y1') (Y2');
        \draw (Y2') -- (Y2);
        \laceDraw[first=1, endpoint-shape=line] (Y2) (X);
        \laceDraw[first=2, endpoint-shape=line] (Y2') (X);
        \draw[decorate, decoration={brace, amplitude=5pt, raise=0.7ex}] (Y2' |- X) -- (Y1' |- O) node [midway, sloped, above, yshift=1.5ex] {$m$};
        \lacePutLabel[anchor=north] (O) {$\vb*{o}$};
        \lacePutLabel[anchor=south] (X) {$\vb*{x}$};
        \lacePutLabel[anchor=east] (Y1) {$\vb*{y}$};
        \lacePutLabel[anchor=east] (Y2) {$\vb*{w}$};
      \end{tikzpicture}
    )\\
    &\leq 3 T_{p, m}^{(1, 2)} T_{p, m}^{(1, 1)} T_{p, m}^{(2, 1)}
      + B_{p, m}^{(1, 2)} T_{p, m}^{(1, 1)} T_{p, m}^{(2, 1)},
  \end{align*}
  which corresponds to the 5th term in the right hand side in \eqref{eq:sum-diagrammaticBounds-mt-2}.

  Finally, we consider \eqref{eq:sum-diagrammaticBounds-mcos-01}--\eqref{eq:sum-diagrammaticBounds-mcos-N}.
  By Lemma~\ref{lem:split-of-cosine}, for example,
  \begin{align*}
    &\sum_{(x, t)}
    \begin{tikzpicture}[op diagram]
      \laceDraw[first=2] (0,0) (-1.3,2);
      \laceDraw[first=1] (0,0) (1.3,1.5);
      \draw (1.3,1.5) -- (-1.3,2);
      \laceDraw[first=1, endpoint-shape=line] (-1.3,2) (1.3,3.5);
      \laceDraw[first=1, endpoint-shape=line] (1.3,1.5) (-1.3,4);
      \draw (1.3,3.5) -- (-1.3,4);
      \laceDraw[first=1, endpoint-shape=line] (-1.3,4) (0,5.5);
      \laceDraw[first=2, endpoint-shape=line] (1.3,3.5) (0,5.5);
      \lacePutLabel[anchor=north] (0,0) {$(o, 0)$};
      \lacePutLabel[anchor=south] (0,5.5) {$(x, t)$};
    \end{tikzpicture}
    m^t \qty(1 - \cos k\cdot x)\\
    &\leq
    3 \sum_{\substack{(x, t),\\ (y, s), (w, r),\\ (y', s'), (w', r')}}
    \begin{tikzpicture}[op diagram]
      \laceDraw[first=2] (0,0) (-1.3,2);
      \laceDraw[first=1] (0,0) (1.3,1.5);
      \draw (1.3,1.5) -- (-1.3,2);
      \laceDraw[first=1, endpoint-shape=line] (-1.3,2) (1.3,3.5);
      \laceDraw[first=1, endpoint-shape=line] (1.3,1.5) (-1.3,4);
      \draw (1.3,3.5) -- (-1.3,4);
      \laceDraw[first=1, endpoint-shape=line] (-1.3,4) (0,5.5);
      \laceDraw[first=2, endpoint-shape=line] (1.3,3.5) (0,5.5);
      \lacePutLabel[anchor=north] (0,0) {$(o, 0)$};
      \lacePutLabel[anchor=south] (0,5.5) {$(x, t)$};
      \lacePutLabel[anchor=west] (1.3,1.5) {$(y, s)$};
      \lacePutLabel[anchor=east] (-1.3,2) {$(y', s')$};
      \lacePutLabel[anchor=east] (-1.3,4) {$(w, r)$};
      \lacePutLabel[anchor=west] (1.3,3.5) {$(w', r')$};
    \end{tikzpicture}
    \begin{aligned}
      &\times m^{t - r'} m^{r' - s'} m^{s'}\\
      &\times \Bigl(\qty\big(1 - \cos k\cdot (x - w))\\
      &\quad + \qty\big(1 - \cos k\cdot (w - y)) + \qty\big(1 - \cos k\cdot y)\Bigr)
    \end{aligned}\\
    &\leq 3 \qty(
      \flt V_{p, m}^{(1, 2)}(k) T_{p, m}^{(1, 1)} T_{p, m}^{(1, 2)}
      + T_{p, m}^{(1, 2)}(k) \flt V_{p, m}^{(1, 1)} T_{p, m}^{(1, 2)}
      + T_{p, m}^{(1, 2)}(k) T_{p, m}^{(1, 1)} \flt V_{p, m}^{(1, 2)}
    ),
  \end{align*}
  which corresponds to the last term on the right hand side in \eqref{eq:sum-diagrammaticBounds-mcos-2}.
\end{proof}

\end{document}